\newtheorem{theorem}{Theorem}[section]
\newtheorem{lemma}[theorem]{Lemma}
\newtheorem{proposition}[theorem]{Proposition}
\newtheorem{definition}[theorem] {Definition}
\newtheorem{example}[theorem]{Example}
\newcommand{\redp}{\rightarrow_{\partial}}
\newcommand*{\N}{\mathbb N}
\newcommand*{\recdef}{\Coloneqq}
\newcommand*\sums[1]{\N[#1]}
\newcommand*\cons{\dblcolon}
\newcommand*\length[1]{\lvert #1\rvert}
\newcommand*\rappl[3][]{#1\langle #2#1\rangle #3}
\newcommand*\subst[4][]{#2 #1[#4/#3 #1]}
\newcommand*\nsubst[3]{\partial_{#2} #1\cdot #3}
\newcommand*\ndTerms{\Lambda}
\newcommand*\resTerms{\Delta}
\newcommand*\resMonomials{\resTerms^\oc}
\newcommand*\resExprs{\resTerms^{(\oc)}}
\newcommand*\rigidTerms{D}
\newcommand*\rigidMonomials{\rigidTerms^\oc}
\newcommand*\rigidExprs{\rigidTerms^{(\oc)}}
\newcommand*\pare{\Rightarrow_{\partial \epsilon}^{\neg e}}
\newcommand*{\nsubstaa}[4]{\sum\limits_{(I'_{0}, I'_{1}) \text{ partition of }  \{1,\dotsc,n\} } \rappl { \nsubst  {#1}{#3} {\bar{#4}_{I'_{0}}} } {\nsubst  {\bar{#2}} {#3} {\bar{#4}_{I'_{1}} }}}
\newcommand*{\nsubsta}[4]{\sum\limits_{(I_{0}, I_{1}) \text{ partition of }  \{1,\dotsc,n\} } \rappl { \nsubst  {#1}{#3} {\bar{#4}_{I_{0}}} } {\nsubst  {\bar{#2}} {#3} {\bar{#4}_{I_{1}} }}}
\newcommand*{\rexp}[1]{T_{r}(#1)}
\newcommand*{\NF}[1]{NF(#1)}
\newcommand*{\tm}{a}
\newcommand*{\tmtwo}{b}
\newcommand*{\tmthree}{c}
\newcommand*{\tmfour}{d}
\newcommand*{\Hr}[2]{H_r^{#2} (#1)}
\newcommand*{\rhred}{\lambda x_1 \dots \lambda x_m . \rappl{\cdots \rappl{\rappl{\lambda x. p}{\bar{q_0}}}{\bar{q_1}\cdots}} { \bar{q_n}}}
\newcommand*{\tope}{\to_{\partial \epsilon}}
\newcommand*\definitive[1]{\emph{#1}}
\title{Normalization, Taylor expansion and rigid approximation of $\lambda$-terms }
\author{Federico Olimpieri}
\date{}
\begin{document}

\maketitle

\tableofcontents

\abstract{The aim of this work is to characterize three fundamental normalization proprieties in lambda-calculus trough the Taylor expansion of $ \lambda$-terms. The general proof strategy consists in stating the dependence of ordinary reduction strategies on their resource counterparts and in finding a convenient resource term in the Taylor expansion that behaves well under the considered kind of reduction.}

\section{Introduction}

The Taylor expansion of ordinary $\lambda$-terms has been introduced in \cite{er:tay} as a syntactic counterpart of the quantitative semantics of linear logic in a $\lambda$-calculus setting. Thanks to this semantic work, 
 Ehrhard and Regnier \cite{er:diff} were able to present, in a very natural way, a differential extension of $\lambda$-calculus. 
 The derivative of a term with respect to its argument is, following the classical analytical notion, a ``linear approximation'' of it. In this case, linearity has a logical meaning: 
 variables in the derivatives are used only once during the reduction process. In \cite{er:tay} they presented a fragment of this calculus, called resource $\lambda$-calculus, where one considers only derivatives of terms applied to zero and
 a notion of Taylor expansion of ordinary $\lambda$-terms can be introduced.

The aim of this work is to characterize three fundamental normalization proprieties in $\lambda$-calculus trough the Taylor expansion. More precisely, we shall introduce a rigid version of the resource calculus, replacing multisets with lists following \cite{mazza:pol} and \cite{tao:gen}. An element of the Taylor expansion can then be seen as an equivalent class of rigid resource terms. The general proof strategy will consists in stating the dependence of ordinary reduction strategies for $\lambda$-calculus on their rigid resource counterparts and in finding a convenient rigid approximant that behaves well under the considered kind of reduction. The choice of rigid terms over standard resource terms remarkably simplifies definitions, theorems and proofs. Moreover, in section 6, we establish the relationship between the rigid expansion and the standard Taylor expansion of $ \lambda $-terms.

The ideas and methods used in this work derive mostly from intuitions and results presented firstly in \cite{carv:sem} and \cite{er:tay}. The characterization of head-normalization that we shall present has been folklore for some time. An important ispiration is \cite{guerr:lam}, where solvability \textit{via} Taylor expansion is considered from a call-by value perspective. For what concerns $\beta$-normalization, the result  derives directly from Lemma \ref{commL}, that has been proven firstly in \cite{vaux:alg}, and it is inspired also by \cite{carv:weak}. 

The result about strong normalization is new.  Our characterisation differs substantially from the one given in \cite{vaux:finit}, where the strong normalisation is achived \textit{via} a global proprerty on the Taylor expansion. Instead we focus on an existantial proprerty, namely the non-zero termination of an extended non-erasing reduction (see Section \ref{strong}). The idea of considering non-erasing reduction derives from \cite{carv:strong} and from the $ \lambda I$-calculus (see Section \ref{cons}).  

Our most important contribution is our approach: we give a general method to state these characterization \textit{via} rigid approximation and, \textit{a fortiori}, the Taylor expansion. The strength of our approach is also evident for it produces a completely straightforward proof of normalisation for for the head and left reductions (see Theorems \ref{hnorm} and \ref{bnorm}). This happens thanks to the finitary nature of resource calculus operational semantics.

  Our method can be also straightforwardly extended to prove typability results for (intersection) type systems, without passing trough Girard's candidates of reducibility. We also believe that this approach can be extended to the study of the execution time for $\lambda$-terms, in the sense of \cite{carv:sem} and to prove similar results in the context of Bang Calculus \cite{guerr:bang} \cite{gg:bang} and Multiplicative Exponential Linear Logic.

\section{Rigid resource terms}

We introduce a resource sensitive calculus following \cite{er:tay}. In this calculus the number of copies of the argument that a term uses under reduction is made explicit \textit{via} lists of terms. Following \cite{tao:gen}, we call this calculus the rigid resource calculus. Rigidity means that resources are modelled by lists instead of multisets.\footnote{As it is the case of \cite{er:tay}.} We will denote as "resource term" both rigid resource terms and standard ones. The distinction between the two will be clarified either by the context or explicitly, if needed. 

 We  define the set of \emph{rigid resource terms} $\rigidTerms$
	and the set of \emph{rigid resource monomials} $\rigidMonomials$
	by mutual induction  as follows: 
\[
	\rigidTerms \ni \tm, \tmtwo, \tmthree ::=
		\tm \mid \lambda x. a \mid \rappl{\tmthree}{\vec{\tmfour}} \mid 0
	\qquad\qquad
	\rigidMonomials \ni \vec{\tm}, \vec{\tmtwo}, \vec {\tmfour} ::=
		() \mid (a) :: \vec{\tmfour}
\]
If $A$ is a set, $A^{!}$ denotes the set of lists over $A$.
 Rigid monomials are then lists of resource terms and $ \vec{a} \cdot \vec{b}$ denotes list concatenation. We write $(\tm_1,\dotsc, \tm_n)$ for $(\tm_1)\cdot\ldots\cdot(\tm_n)\cdot()$. A term of the form $\langle \tmthree \rangle \vec{\tmfour}$ is called a linear application. The $ 0 $ term works as a zero linear combination, $ \textit{i.e.}$ $\lambda x. 0 = 0 $, $ \rappl{0}{\vec{\tm}} = 0 $, $ \rappl{\tm}{0} = 0 $ and $ (0)\cdot \vec{\tm} = 0 $.
We call \emph{rigid resource expressions} the elements of $\rigidExprs=\rigidTerms\cup\rigidMonomials$.
For any resource expression $e$, we write $n_x(e)$ for the number of occurrences of variable $x$ in $e$.

 We define the rigid substitution: 
 
 \begin{definition}
	We define $e\{\vec{b}/x\}$ for any $e\in\rigidExprs$ and $\vec b\in\rigidMonomials$ such that $\length{\vec b}=n_x(e)$
	inductively:
\begin{gather*}
	x\{(t)/x\}=b \qquad y\{()/x\} = y 
	\\
	(\lambda y. s)\{\vec{b}/x\}= \lambda y. (s \{\vec{b}/x\} )
	\qquad
	(\langle s \rangle \vec{b})\{\vec b_0\cons\vec b_1/x\}
	=\rappl{s \{\vec{b}_{0}/x\}} \vec{d}\{\vec{b}_{1}/x\}
	\\
	(a_{1},\dotsc, a_{n})\{\vec{b}_{1}\cons \cdots\cons \vec{b}_{n}/x\}
	=( a_{1}\{\vec{b}_{1}/x\},\dotsc, a_{n}\{\vec{b}_{n}/x\})
\end{gather*}
whenever $y\not=x$, $y\notin FV(\vec b)$, 
$\length{\vec b}=n_x(a)$,
$\length{\vec b_0}=n_x(c)$,
$\length{\vec b_1}=n_x(\vec d)$,
and $\length{\vec b_i}=n_x(a_i)$ for $1\le i\le n$.
\end{definition}
 
\begin{definition}
Let $e\in \rigidExprs$, $x \in \mathcal{V}$ and $\vec{b}\in\rigidMonomials$.
We define $a[\vec{b}/x]$ the \definitive{rigid substitution} of $\vec{b}$ for $x$ in $a$,
setting $a[\vec{b}/x] = a\{\vec{b}/x\}$ if $n_{x}(a)=\length{\vec b}$ and $a[\vec{b}/x]=0$ otherwise.
\end{definition}
 
The reduction of rigid resource terms has the following base cases:
\[
\rappl{ \lambda x. a}{\vec{b}} \to_{r} a[\vec{b}/x]
\]
extended contextually.

\begin{proposition}\label{snconf}
The reduction $\to_{r}$ is confluent and strongly normalizing.
\end{proposition}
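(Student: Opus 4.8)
The plan is to establish strong normalization first, by an elementary size argument, and then to derive confluence from local confluence via Newman's lemma, the point being that strong normalization reduces confluence to the analysis of overlapping redexes.

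For strong normalization I would assign to every rigid expression a size $s(e)\in\N$, setting $s(x)=s(0)=1$, $s(\lambda x.a)=1+s(a)$, $s(\rappl{c}{\vec d})=1+s(c)+s(\vec d)$, $s(())=0$ and $s((a)\cons\vec d)=s(a)+s(\vec d)$. The crucial observation is that this measure strictly decreases under the base reduction. Consider a redex $\rappl{\lambda x. a}{\vec b}$ with $\vec b=(b_1,\dotsc,b_n)$. If $n_x(a)=n$, then rigid substitution is linear: it replaces each of the $n$ occurrences of $x$ in $a$ (each of size $1$) by exactly one $b_i$, so $s(a[\vec b/x])=s(a)-n+\sum_{i=1}^n s(b_i)$, whereas $s(\rappl{\lambda x. a}{\vec b})=2+s(a)+\sum_{i=1}^n s(b_i)$, a difference of $n+2>0$. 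If instead $n_x(a)\ne n$, then $a[\vec b/x]=0$ and $s(0)=1$ is strictly below the size of any redex. Since $s$ is compositional, contracting a redex inside an arbitrary context still strictly decreases $s$, and the $0$-absorption equations ($\lambda x.0=0$, $\rappl 0{\vec a}=0$, etc.) can only decrease it further; as $s$ takes values in $\N$, no infinite $\to_r$ sequence exists, which proves strong normalization.

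Given strong normalization, confluence follows from local confluence by Newman's lemma, so it suffices to show that whenever $e\to_r e_1$ and $e\to_r e_2$ the two contracta have a common reduct. I would argue by the relative positions of the contracted redexes $R_1,R_2$. Since the left-hand side of the rule, $\rappl{\lambda x.\square}{\square}$, overlaps another instance only by nesting, there are no proper critical pairs and only three cases arise. When $R_1$ and $R_2$ are disjoint, reducing the residual of the other in each contractum yields the same term. When $R_2$ lies inside the body $a$ of $R_1=\rappl{\lambda x. a}{\vec b}$, say $a\to_r a'$, contracting $R_1$ first gives $a[\vec b/x]$ and contracting $R_2$ first gives $a'[\vec b/x]$, and these join because reduction commutes with rigid substitution, i.e. $a\to_r a'$ implies $a[\vec b/x]\to_r^* a'[\vec b/x]$. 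When $R_2$ lies inside some component of $\vec b$, say $\vec b\to_r\vec b'$, linearity guarantees that this component occurs exactly once in $a[\vec b/x]$, so reducing $R_1$ first and then that single residual reaches $a[\vec b'/x]$, which is exactly the result of reducing $R_2$ then $R_1$.

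The main obstacle is the \emph{substitution lemma} underlying the nested case: one must check that a redex $\rappl{\lambda y. p}{\vec q}\to_r p[\vec q/y]$ is preserved under an outer substitution and that $(p[\vec q/y])\{\vec b/x\}=(p\{\vec b_0/x\})[\vec q\{\vec b_1/x\}/y]$ for the splitting $\vec b=\vec b_0\cons\vec b_1$ of the argument list dictated by the occurrences of $x$ in $p$ and in $\vec q$. This is the linear analogue of the classical substitution lemma, proved by induction on $p$; the delicate points are the bookkeeping of how $\vec b$ is distributed over the occurrences of $x$ and the propagation of $0$ when occurrence counts fail to match, in which case both sides collapse to $0$ and trivially coincide. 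Once this identity is in place, all three cases of local confluence close and Newman's lemma delivers confluence.
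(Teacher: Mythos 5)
Your strong normalization argument is correct, and it is essentially the paper's own proof (the paper's justification is precisely that the size of resource expressions strictly decreases under $\to_{r}$, together with a citation of Ehrhard--Regnier); your explicit accounting of the size drop $n+2$ and of the $0$-absorbing equations just fills in that one line. The genuine gap is in the confluence half, and it sits exactly at the lemma you yourself flag as the main obstacle: the commutation $a\to_{r} a'\Rightarrow a[\vec b/x]\to_{r}^{*} a'[\vec b/x]$ is \emph{false} for the rigid, positional substitution of this paper. Rigid substitution hands the components of $\vec b$ to the free occurrences of $x$ in left-to-right order, but contracting a redex inside $a$ can permute that order, because occurrences of $x$ sitting in the argument of an inner redex are relocated to the positions of its bound variable. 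Concretely, let $a=\rappl{\lambda y. \rappl{y}{(x)}}{(x)}$ and $\vec b=(z_1,z_2)$ with $z_1\neq z_2$. Then $a\to_{r} a'=\rappl{x}{(x)}$ and $a'[\vec b/x]=\rappl{z_1}{(z_2)}$; but $a[\vec b/x]=\rappl{\lambda y. \rappl{y}{(z_1)}}{(z_2)}\to_{r}\rappl{z_2}{(z_1)}$. Both $\rappl{z_1}{(z_2)}$ and $\rappl{z_2}{(z_1)}$ are normal and distinct, so the two sides are not joinable and your nested-redex case collapses. (The same example refutes the identity $(p[\vec q/y])\{\vec b/x\}=(p\{\vec b_0/x\})[\vec q\{\vec b_1/x\}/y]$, taking $p=\rappl{y}{(x)}$ and $\vec q=(x)$.)

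This is not a repairable slip in your case analysis: applied to $t=\rappl{\lambda x. a}{(z_1,z_2)}$, the example above is a counterexample to local confluence of $\to_{r}$ as literally defined in the paper, since $t$ reduces to the two distinct normal forms $\rappl{z_2}{(z_1)}$ and $\rappl{z_1}{(z_2)}$; combined with your (correct) strong normalization proof and Newman's lemma, no proof of local confluence can exist. Confluence of resource reduction does hold in the multiset-based calculus of Ehrhard and Regnier --- the setting the paper's citation actually covers --- because there the $n$-linear substitution $\nsubst{a}{x}{\bar b}$ is the \emph{sum} over all assignments of the $b_i$ to the occurrences of $x$, so both reducts above occur as summands of a common reduct. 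In the rigid setting the proposition can only be read modulo permutation of lists, i.e.\ through that quotient; note that the paper itself blames exactly this positional phenomenon for the failure of confluence of the $\sigma$-extended rigid reduction in its Section 6.1, and the same phenomenon already bites for plain $\to_{r}$. So: your SN half stands and agrees with the paper, while your confluence half cannot be completed as stated --- any honest proof must either pass to multisets/sums or weaken the conclusion to confluence (and uniqueness of normal forms) up to list permutation.
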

\begin{proof}
The result follows from the fact that the size of resource terms is decreasing under reduction. See \cite{er:tay}. 
\end{proof}

We write $\NF{a}$ for the unique normal form of $a$ that is a rigid term or $0$.

  \begin{example}
   The rigid resource version of $\Omega$ reduces to $0$:
   \[\langle \lambda x. \langle x \rangle (x) \rangle (\lambda x. \langle x \rangle (x)) \redp 0 \]
   This happens because the number of times that $x$ is called differs from the number of arguments available.
   \end{example}

 Let $M$ be a $\lambda$-term. We inductively define $\rexp{M}\subseteq \rigidTerms,$ the \emph{rigid expansion} of $M$, as follows:
 \begin{itemize}
  \item if $ M = x$ then $ \rexp{M} = \{ x \};$
  \item if $ M = \lambda x. M'$ then $ \rexp{M} = \lambda x. T_{r}(M') = \{ \lambda x. a \ \mid a \in \rexp{M'} \} ;$
  \item if $ M = PQ$ then $ \rexp{M} = \langle \rexp{P} \rangle \rexp{Q}^{!} = \{ \langle c \rangle \vec{d} \ \mid c \in \rexp{P} \\ \text{ and } \vec{d}\in \rexp{Q}^{!} \}.$
 \end{itemize}

\section{Head normalization} \label{head}

The first characterization that we give concerns head-normalization. This result is folklore  but we give a novel presentation of it following our general approach. Firstly we give recall some basic definitions and results.

 \begin{proposition} \label{canonLa}
  Let $M \in \Lambda$. There exist $ x_{1},...,x_{m} \in \mathcal{V} $ and\\ $ M', N_{1},..., N_{n} \in \Lambda$, with $M'$ either a redex 
  or a variable, such that
  $ M = \lambda x_{1}...\lambda x_{m}. M' N_{1}...N_{n}.$
 \end{proposition}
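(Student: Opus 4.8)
The plan is to argue by structural induction on the term $M$, following the grammar $M ::= x \mid \lambda x. M \mid MN$. The statement merely asserts that every term can be put into the shape ``a block of leading abstractions, followed by an application spine whose head $M'$ is a redex or a variable.'' Each syntactic constructor either preserves this shape, extends the abstraction block, or extends the spine, so the induction should go through with only routine bookkeeping.

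First I would dispatch the two easy cases. If $M = x$ is a variable, taking $m = n = 0$ and $M' = x$ already exhibits the required form. If $M = \lambda y. P$, the induction hypothesis gives $P = \lambda x_1 \dots \lambda x_m. M' N_1 \dots N_n$ with $M'$ a redex or a variable, and prepending the new abstraction yields $M = \lambda y \lambda x_1 \dots \lambda x_m. M' N_1 \dots N_n$, i.e. the same form with one additional leading abstraction.

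The only case requiring genuine thought is the application $M = PQ$, and this is where I expect the main (though still mild) obstacle to lie. Applying the induction hypothesis to $P$ gives $P = \lambda x_1 \dots \lambda x_m. M' N_1 \dots N_n$, and I would then split on whether the abstraction block is empty. If $m = 0$, then $P = M' N_1 \dots N_n$ is itself a spine, so $M = PQ = M' N_1 \dots N_n Q$ is that same spine with $Q$ appended as a final argument, leaving the head $M'$ unchanged. If instead $m \geq 1$, then $P = \lambda x_1. (\lambda x_2 \dots \lambda x_m. M' N_1 \dots N_n)$ is an abstraction, so $M = PQ$ is itself a $\beta$-redex; here I would discard the inductive data entirely and take $m = n = 0$ with the new head $M' := M$, which is a redex by construction.

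The delicate point throughout is bookkeeping rather than mathematics: one must check that in the $m = 0$ subcase the head supplied by the hypothesis really is the head of the extended spine $M' N_1 \dots N_n Q$, and that the abstraction/spine dichotomy is exhaustive, since a term produced by the grammar that is not an abstraction must be a variable or an application. No result from earlier in the excerpt is needed, as this is a purely syntactic normal-form statement about terms, serving as a preliminary to the head-normalization analysis.
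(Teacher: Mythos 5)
Your proof is correct and follows essentially the same route as the paper, which simply states the result follows by a ``trivial induction on the size of $M$'' without giving details. Your case analysis --- in particular the split in the application case on whether the abstraction block of $P$ is empty (extending the spine) or nonempty (so that $M$ itself becomes the head redex) --- is exactly the bookkeeping that makes the paper's one-line proof go through.
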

 \begin{proof}
  Trivial induction on the size of $M$.
 \end{proof}

From now on we will use the former proposition as a characterisation of $ \lambda$-terms without explicitly referring to it.

 If $ M'= x$ with $ x\in \mathcal{V}$ we say that $M$ is a \emph{head-normal form}. If $ M'$ is a redex it is called the \emph{head-redex}  of $M$.
 We write $M \to_{h} N$ if $ M = \lambda x_{1}...\lambda x_{m}. (\lambda x. P) N N_{1}...N_{n} $ and
    $N=  M = \lambda x_{1}...\lambda x_{m}. P[N/x] N_{1}...N_{n} $. We say that $M$ is \emph{head-normalizable}  if
  there exist $M_{1},....,M_{n} \in \Lambda$
such that $ M=M_{0} \to_{\beta} M_{1} \to_{\beta} ... \to_{\beta} M_{n}$ with $M_n$ head-normal form. 

 \begin{proposition} \label{canonR}
  Let $\tm \in \rigidTerms$. There exist $ x_{1},...,x_{m} \in \mathcal{V} $ and \\ $ \tm', \vec{\tmtwo}_{1},..., \vec{\tmtwo}_{n} \in \rigidExprs$, with $\tm'$ either a redex 
  or a variable, such that
  \[\tm = \lambda x_{1}...\lambda x_{m}. \rappl{ \cdots \rappl{ \tm'} {\vec{\tmtwo}_{1}}}{ \cdots \vec{\tmtwo}_{n}}\] or $ \tm = 0.$
 \end{proposition}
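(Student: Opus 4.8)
The plan is to mirror the proof of Proposition \ref{canonLa}, proceeding by structural induction on $\tm \in \rigidTerms$ along the grammar $\tm ::= x \mid \lambda x. \tm \mid \rappl{\tmthree}{\vec{\tmfour}} \mid 0$. The two base cases are immediate: if $\tm = 0$ we are in the second alternative of the statement, and if $\tm = x$ is a variable we take $m = n = 0$ and head $\tm' = x$, which is a variable as required. For the abstraction case $\tm = \lambda y. \tm_0$, I would apply the induction hypothesis to $\tm_0$. If $\tm_0 = 0$, then the convention $\lambda y. 0 = 0$ forces $\tm = 0$; otherwise $\tm_0 = \lambda x_1 \dots \lambda x_m . \rappl{\cdots \rappl{\tm'}{\vec{\tmtwo}_1}}{\cdots \vec{\tmtwo}_n}$ with $\tm'$ a redex or a variable, and prepending $y$ to the abstraction prefix yields the desired decomposition of $\tm$ with $m+1$ leading abstractions.

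The only case that needs genuine attention is the linear application $\tm = \rappl{\tmthree}{\vec{\tmfour}}$, where I would apply the induction hypothesis to the function part $\tmthree$. If $\tmthree = 0$, the convention $\rappl{0}{\vec{\tmfour}} = 0$ gives $\tm = 0$. Otherwise $\tmthree$ decomposes as $\lambda x_1 \dots \lambda x_m . \rappl{\cdots \rappl{\tm'}{\vec{\tmtwo}_1}}{\cdots \vec{\tmtwo}_n}$, and here I would split on whether $m \geq 1$. If $m \geq 1$, then $\tmthree$ begins with an abstraction, so $\rappl{\tmthree}{\vec{\tmfour}}$ is itself of the form $\rappl{\lambda x. \tm}{\vec{\tmtwo}}$, a redex; I then take the whole term $\tm$ as its own head $\tm'$, with no leading abstractions and no further arguments ($m = n = 0$). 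If $m = 0$, then $\tmthree = \rappl{\cdots \rappl{\tm'}{\vec{\tmtwo}_1}}{\cdots \vec{\tmtwo}_n}$ is already a spine of linear applications, and appending $\vec{\tmfour}$ to the argument list produces the decomposition of $\tm$ with $n+1$ argument blocks and the same head $\tm'$.

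As with Proposition \ref{canonLa}, no part of this argument is deep: it is a routine structural induction, and the resource-specific content lies only in two pieces of bookkeeping. The main (minor) obstacle is the application case, where one must propagate the zero conventions correctly to collapse subterms to $0$, and must carefully distinguish the redex-forming subcase ($m \geq 1$, where the application binds an abstraction) from the spine-extension subcase ($m = 0$, where we merely extend the list of linear arguments). Everything else reduces to the same induction on size used for $\lambda$-terms.
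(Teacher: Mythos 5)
Your proof is correct and matches the paper's approach: the paper dispatches this with ``trivial induction on the size of $\tm$,'' and your structural induction along the grammar, with the zero-conventions and the redex/spine case split in the application case, is exactly the routine argument being alluded to. The only detail you leave implicit (the argument monomial $\vec{\tmfour}$ itself collapsing to $0$) is harmless, since in that case $\tm = 0$ and the second disjunct of the statement holds anyway.
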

 \begin{proof}
  Trivial induction on the size of $\tm$.
 \end{proof}

From now on we will use the former proposition as a characterisation of rigid resource terms without explicitly referring to it.

 If $ \tm'= x$ with $ x\in \mathcal{V}$ or $ \tm = 0 $ we say that $\tm$ is a \emph{head-normal form}. If $ \tm'$ is a redex it is called the \emph{head-redex}  of $M$.
 We write $\tm \to_{h} \tmtwo$ if $ \tm = \lambda x_{1}...\lambda x_{m}. \rappl{ \cdots \rappl{\rappl{ \lambda x. \tm'} {\vec{\tmtwo}}} {\vec{\tmtwo}_{1}}}{ \cdots \vec{\tmtwo}_{n}} $ and
    $ \lambda x_{1}...\lambda x_{m}. \rappl{ \cdots \rappl{ \nsubst {\tm'}{x}{\vec{\tmtwo}}} {\vec{\tmtwo}_{1}}}{ \cdots \vec{\tmtwo}_{n}} $. We say that $\tm$ is \emph{head-normalizable}  if
  there exist $\tm_{1},....,\tm_{n} \in \rigidTerms$
such that $ \tm=\tm_{0} \to_{r} \tm_{1} \to_{r} ... \to_{r} \tm_{n} $ with $\tm_n$ head-normal form.

\subsection{Towards head normalization}

The first step is a clear statement of what happens to the rigid expansion under substitution. We set $\rexp{M}[\rexp{N}/x] = \{ a[\vec{b}/x] \mid a \in \rexp{M} \text{ and } \vec{b} \in \rexp{N}^{!} \text{ s.t. }  a[\vec{b}/x] \neq 0 \}. $

\begin{lemma} \label{Taysub} Let $ M $ and $N$ be two $\lambda$-terms. Then \[ \rexp{M[N/x]} = \rexp{M}[\rexp{N}/x]. \]
\end{lemma}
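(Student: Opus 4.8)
The plan is to proceed by structural induction on $M$, matching the two set-builder descriptions on each side. Unfolding the right-hand side, $\rexp{M}[\rexp{N}/x]$ consists of the nonzero terms $a[\vec b/x]$ with $a\in\rexp{M}$ and $\vec b\in\rexp{N}^{!}$; by the very definition of $a[\vec b/x]$ these are exactly the $a\{\vec b/x\}$ for which $\length{\vec b}=n_x(a)$ and the result does not collapse to $0$. The goal is to show that as $a$ ranges over $\rexp{M[N/x]}$ we recover precisely this family.

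The base cases are immediate. If $M=x$ then $M[N/x]=N$ and $\rexp{x}=\{x\}$ with $n_x(x)=1$, so $\vec b$ must be a singleton $(b)$ with $b\in\rexp{N}$ and $x\{(b)/x\}=b$; hence both sides equal $\rexp{N}$. If $M=y\neq x$ then $n_x(y)=0$ forces $\vec b=()$ and $y\{()/x\}=y$, so both sides are $\{y\}=\rexp{y}=\rexp{M[N/x]}$. For $M=\lambda y.M'$, choosing $y\neq x$ and $y\notin FV(N)$ by $\alpha$-conversion, I would use that substitution commutes with abstraction, $(\lambda y.a')[\vec b/x]=\lambda y.(a'[\vec b/x])$, together with $n_x(\lambda y.a')=n_x(a')$ and the fact that the result is $0$ iff $a'[\vec b/x]=0$; the inductive hypothesis on $M'$ then transports the equality through the outer $\lambda y$.

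The crux is the application case $M=PQ$, where $M[N/x]=(P[N/x])(Q[N/x])$ and the monomial $\vec b$ must be split across the head and the argument list. Fix $c'\in\rexp{P}$, $\vec{d'}=(d'_1,\dotsc,d'_k)\in\rexp{Q}^{!}$ and $\vec b\in\rexp{N}^{!}$ with $(\langle c'\rangle\vec{d'})[\vec b/x]\neq 0$. Validity of the substitution forces $\length{\vec b}=n_x(c')+\sum_{i}n_x(d'_i)$, and the list structure then determines a \emph{unique} decomposition $\vec b=\vec b_0\cdot\vec b_{1}\cdot\dotsb\cdot\vec b_{k}$ with $\length{\vec b_0}=n_x(c')$ and $\length{\vec b_i}=n_x(d'_i)$, yielding
\[
(\langle c'\rangle\vec{d'})[\vec b/x]=\langle c'[\vec b_0/x]\rangle\,(d'_1[\vec b_{1}/x],\dotsc,d'_k[\vec b_{k}/x]),
\]
which is nonzero exactly when each component is. By the inductive hypotheses on $P$ and $Q$, $c'[\vec b_0/x]$ ranges over $\rexp{P[N/x]}$ and each $d'_i[\vec b_i/x]$ ranges over $\rexp{Q[N/x]}$. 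Conversely, given $\langle c\rangle(e_1,\dotsc,e_k)$ with $c\in\rexp{P[N/x]}$ and $e_i\in\rexp{Q[N/x]}$, the inductive hypotheses supply witnesses $c=c'[\vec b_0/x]$ and $e_i=d'_i[\vec b_i/x]$, and I recover an element of $\rexp{M}[\rexp{N}/x]$ by taking $\vec{d'}=(d'_1,\dotsc,d'_k)$ and the concatenation $\vec b=\vec b_0\cdot\vec b_1\cdot\dotsb\cdot\vec b_k$. This exhibits the desired equality, since $\rexp{M[N/x]}=\{\langle c\rangle\vec e\mid c\in\rexp{P[N/x]},\ \vec e\in\rexp{Q[N/x]}^{!}\}$.

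The only real obstacle is the bookkeeping in this last case: one must check that concatenating sublists and splitting a single list at the prescribed positions are mutually inverse operations. This is exactly where rigidity pays off — because monomials are lists rather than multisets, the split at position $n_x(c')$ (and then successively at each $n_x(d'_i)$) is completely deterministic, so no sum over partitions of the occurrences is needed and the correspondence is a genuine bijection rather than a many-to-one map.
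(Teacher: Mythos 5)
Your proof is correct and follows essentially the same route as the paper's: structural induction on $M$, with the application case handled by distributing the substitution over the head and the argument list via the (deterministic) splitting of $\vec b$ into segments matching the occurrence counts. The only difference is one of detail — the paper asserts this distribution ``by definition'' while you spell out the splitting/concatenation bijection explicitly, which is a point in your favour rather than a divergence.
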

\begin{proof}
By induction on the structure of $ M[N/x] $.

 If $ M = x $ then $ \rexp{M[N/x]} = \rexp{N} $. By definition of rigid substitution we have that $\rexp{M}[\rexp{N}/x] = \{ x[(b)/x] \mid b \in \rexp{N} \} = \rexp{N}. $ If $ M = y $ with $ y \neq x $ we have that $ \rexp{M[N/x]} = \{ y \} $. Then by definition of rigid substitution we can conclude. 
 
 If $ M = \lambda x. M'$ the result derives immediately by IH.
 
  If $ M = PQ $ then $ \rexp{M} = \langle \rexp{P} \rangle \rexp{Q}^{!}$. By definition we have that \[ \subst{\rexp{M}} {x} {\rexp{N}} = \rappl{\subst{\rexp{P}} {x} {\rexp{N}}} {\subst {\rexp{Q}^{!}} {x} {\rexp{N}}}.  \]
  
  and that 
  
  \[ \rexp{\subst{M} {x} {N}} =  \rappl{\rexp{\subst{P} {x} {N}}} {\rexp{\subst {Q} {x} {N}}^{!} } \]

 By IH we have that \[ \rexp{\subst{P} {x} {N}} = \subst{\rexp{P}} {x} {\rexp{N}}\] and that $ \rexp{\subst {Q} {x} {N}} =  \subst{\rexp{Q}} {x} {\rexp{N}}$. Hence \[\rexp{\subst {Q} {x} {N}}^{!} =  \subst{\rexp{Q}} {x} {\rexp{N}}^{!}.\] We can then apply the IH and conclude.

\end{proof}

\begin{lemma} \label{antired1}
Let $M, N$ be any two $\lambda$-terms. If $M \to_{\beta} N$ then for all $ b \in T_{r}(N) $ there exists $a \in T_{r}(M)$ such that $ a \to_{r} b. $

\end{lemma}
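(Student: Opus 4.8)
The plan is to argue by induction on the position of the contracted redex in $M$, equivalently by induction on the structure of $M$, using the compatibility of the rigid expansion with substitution from Lemma \ref{Taysub}. Writing $M \to_\beta N$ as the contraction of a redex $(\lambda x. P)Q$ inside a context, I would split into the base case, where the redex sits at the root of $M$, and the inductive cases, where $M$ is an abstraction or an application and the redex lies strictly inside a subterm. In each case the recipe is the same: decompose the given $b \in \rexp{N}$ according to the shape of $N$, pull each component back along the induction hypothesis (or along Lemma \ref{Taysub} in the base case), and reassemble a witness $a \in \rexp{M}$ that reduces to $b$ by the contextual closure of $\to_r$.

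For the base case $M = (\lambda x. P)Q \to_\beta P[Q/x] = N$, take $b \in \rexp{N} = \rexp{P[Q/x]}$. By Lemma \ref{Taysub}, $\rexp{P[Q/x]} = \rexp{P}[\rexp{Q}/x]$, so $b$ decomposes as $b = p[\vec q/x]$ with $p \in \rexp{P}$, $\vec q \in \rexp{Q}^!$ and $p[\vec q/x] \neq 0$. Since expansions contain no $0$, we have $b \neq 0$, hence the length condition $n_x(p) = \length{\vec q}$ holds and the rigid substitution is genuine. I would then set $a = \rappl{\lambda x. p}{\vec q}$: because $\lambda x. p \in \rexp{\lambda x. P}$ and $\vec q \in \rexp{Q}^!$, this $a$ lies in $\rexp{M}$, and the base reduction rule gives $a = \rappl{\lambda x. p}{\vec q} \to_r p[\vec q/x] = b$ in a single step.

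For the inductive cases: if $M = \lambda y. M_0$ then $N = \lambda y. N_0$ with $M_0 \to_\beta N_0$ and $b = \lambda y. b_0$ with $b_0 \in \rexp{N_0}$, so the induction hypothesis supplies $a_0 \in \rexp{M_0}$ with $a_0 \to_r b_0$ and $a = \lambda y. a_0$ works by reduction under the abstraction. If $M = PQ$ with the redex inside $P$, then $N = P'Q$, $P \to_\beta P'$, and $b = \rappl{c'}{\vec d}$ with $c' \in \rexp{P'}$ and $\vec d \in \rexp{Q}^!$; the induction hypothesis gives $c \in \rexp{P}$ with $c \to_r c'$, whence $a = \rappl{c}{\vec d}$ reduces to $b$ in head position. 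The remaining case, $M = PQ$ with the redex inside $Q$, is where the real bookkeeping lives: here $b = \rappl{c}{(d'_1,\dotsc,d'_k)}$ with $c \in \rexp{P}$ and each $d'_i \in \rexp{Q'}$, and applying the induction hypothesis componentwise yields $d_i \in \rexp{Q}$ with $d_i \to_r d'_i$, so that $a = \rappl{c}{(d_1,\dotsc,d_k)} \in \rexp{M}$ reaches $b$ by reducing the copies one after another. I expect this replication phenomenon to be the only delicate point: a single $\beta$-step is duplicated across the list, so the simulation generally needs several resource steps (the contextual/transitive closure of $\to_r$, one batch per copy) rather than a single one, and one must check that the chosen $a$ sits in $\rexp{M}$ with exactly the list length forced by the nonzero substitution condition. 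Everything else follows mechanically from Lemma \ref{Taysub} and the contextuality of $\to_r$.
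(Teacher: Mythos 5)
Your proof is correct and follows exactly the paper's (one-line) argument: induction on the definition of $\to_{\beta}$, with Lemma \ref{Taysub} handling the root-redex case and the contextuality of $\to_{r}$ handling the abstraction and application cases. Your closing caveat is also well taken: when the contracted redex lies inside the argument of an application, the single $\beta$-step is replicated across the $k$ copies in the list (and disappears entirely when $k=0$), so the simulation genuinely needs the reflexive--transitive closure of $\to_{r}$ rather than one step --- an imprecision in the statement that the paper's terse proof silently glosses over.
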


\begin{proof}
By induction on the definition of $\beta$-reduction and by Lemma \ref{Taysub}. 
\end{proof}

Since we have a substitution Lemma, the next natural step is asking what happens to the rigid expansion under reduction. Since we are focusing on head-normalization, we can restrict our considerations on the head-reduction.\par
First of all we shall give a functional definition of head-reduction, that will allows us to state in a compact way the connection between head-reduction and resource head-reduction. 

\begin{definition}
Let $M$ be a $\lambda$-term. We define the head-reduction $H: \Lambda \to \Lambda $ by cases as follows: \par
$H(M)=  \begin{cases} M & \text{ if }M \text{ is a head-normal form};\\ 
\lambda x_{1}...\lambda x_{m}. P[Q/x] Q_{1}...Q_{n} & \text{ otherwise. } 
\end{cases} $

\end{definition}

\begin{definition}
Let $s\in \rigidTerms$. We define the head resource reduction $H_{r}: \rigidExprs\to \rigidExprs$ by cases as follows:\par
 $\Hr{\tm}{}=  \begin{cases} \tm & \text{ if } \tm \text{ is a hnf};
 \\
\lambda x_{1}...\lambda x_{m}. \langle... \langle \subst{\tmthree}{x}{\vec{\tmfour}}\rangle \vec{\tmfour}_{1}... \rangle \vec{\tmfour}_{n} & \text{ otherwise. } 
\end{cases} $\\
 
$\Hr{\vec{\tmtwo}}{} = (\Hr{\tmtwo_{1}}{},..., \Hr{\tmtwo_{n}}{})$ with $\vec{\tmtwo}=(\tmtwo_{1},..., \tmtwo_{n})$.

 \end{definition}
 
 We say that the head-reduction of $ M $(resp. $ \tm$) \emph{ends} if there exists $ m \in \mathbb{N} $ such that $ H^{m}(M) $ (resp. $ \Hr {\tm } {m}$) is a head-normal form. In that case we call $ H^{m}(M) $ (resp. $ \Hr {\tm } {m}$) the \emph{principal normal form} of $M $(resp. $ \tm$). We denote the principal normal form of $ M$(resp. $ \tm$) as $ HNF(M)$(resp. $ HNF(\tm)$).

We Set $H_{r}(\rexp{M})= \{ H_{r}(s) \mid \tm \in \rexp{M} $ and $ H_{r}(\tm) \neq 0 \} $. Then:

\begin{lemma} \label{commH}
Let $M$ be a $\lambda$-term. Then $\Hr{\rexp{M}}{}= \rexp{H(M)}.$\end{lemma}
\begin{proof}
 We prove the result by cases and double inclusion.\par
If $M$ is a head-normal form the result is trivial. \par If $M= \lambda x_{1}....\lambda x_{m}. (\lambda x. P)QQ_{1}...Q_{n}$, we can focus on \[M'= (\lambda x. P)QQ_{1}...Q_{n}\] without any loss of generality. Then $ H(M) = P[Q/x]Q_{1}...Q_{n} .$
By Lemma \ref{Taysub}, we have that \[T(H(M')) = \langle ... \langle \subst {\rexp{P}} {x} {\rexp{Q}} \rangle T(Q_{1})^{!}....\rangle T(Q_{n})^{!}\]

Let $s'\in H_{r}(\rexp{M'})$. By definition there exists $\tm\in \rexp{M}$ such that $s'=H_{r}(\tm)$. Since $\tm$ must be of the form $ \langle ... \langle \lambda x. \tmthree \rangle \vec{\tmfour} \rangle \vec{\tmfour}_{1}...\rangle \vec{\tmfour}_{n} $, for some $ p\in T_{r}(P),\vec{\tmfour}\in T(Q)^{!}$ and $ \vec{\tmfour}_{i}\in T_{r}(Q_{i})^{!}$, for $i\in \{1,...,n\}$. By definition of $H_{r}$, $\tm'= \langle ... \langle \subst {\tmthree} {x} {\vec{\tmfour}} \rangle \vec{\tmfour}_{1}... \rangle \vec{\tmfour}_{n} )$. Then $\tm' \in T_{r}(H(M'))$. \par
Conversely, Let $\tm' \in T_{r}(H(M'))$. By definition, \[\tm'= \langle ... \langle \subst {\tmthree} {x} {\vec{\tmfour}} \rangle \vec{\tmfour}_{1}... \rangle \vec{\tmfour}_{n}) \] for some $ \tmthree\in T_{r}(P),\bar{\tmfour}\in T_{r}(Q)^{!}$ and $ \vec{\tmfour}_{i}\in T_{r}(Q_{i})^{!}$, for $i\in \{1,...,n\}$. Then there exists $\tm = \langle ... \langle \lambda x. \tmthree \rangle \vec{\tmfour} \rangle \vec{\tmfour}_{1}... \rangle \vec{\tmfour}_{n} \in T_{r}(M')$ such that $\tm'=  H_{r}(\tm).$ 
\end{proof}

The meaning of the former lemma is that performing a step of head-reduction on $M$ and then computing the rigid expansion of its retract is the same thing as first computing the rigid expansion of $M$ and then performing a step of head-reduction on it. We say then that head-reduction and rigid expansion \emph{commute}.

\begin{lemma} \label{hnf}
Let $\tm \in \rigidTerms$ be a resource head-normal form. Let $\lambda$-term $M$ such that $\tm \in \rexp{M}$, then $M$ is a head-normal form. 
\end{lemma}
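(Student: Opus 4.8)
The plan is to reduce to the canonical forms of Propositions \ref{canonLa} and \ref{canonR} and to exploit the fact that rigid expansion preserves the head structure of a term. Write $M = \lambda x_1 \dots \lambda x_m. M' N_1 \dots N_n$ by Proposition \ref{canonLa}, with $M'$ either a redex or a variable; it suffices to show that $M'$ is a variable. A preliminary observation I would record is that $0 \notin \rexp{M}$, which follows by an easy induction on $M$ through the three clauses defining $\rexp{\cdot}$ (variables, abstractions and linear applications never yield $0$ from nonzero components). Hence the given $\tm \in \rexp{M}$ is nonzero, and since it is a head-normal form it must have a variable as atomic head.

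Next I would unfold how the shape of $M$ is inherited by its expansion. Iterating the abstraction and application clauses of the definition of $\rexp{\cdot}$, every element of $\rexp{M}$ is of the form $\lambda x_1 \dots \lambda x_m. \rappl{\cdots \rappl{\tmthree}{\vec{\tmfour}_1}}{\cdots \vec{\tmfour}_n}$ with $\tmthree \in \rexp{M'}$ and $\vec{\tmfour}_i \in \rexp{N_i}^{!}$. This is the only point where the recursive definition is used; formally it is an induction on the length $n$ of the application spine.

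The key step is the case analysis on $M'$. If $M'$ were a redex $(\lambda y. P)Q$, then $\rexp{M'} = \rappl{\lambda y. \rexp{P}}{\rexp{Q}^{!}}$, so the head $\tmthree$ of our term would be $\rappl{\lambda y. p}{\vec{q}}$ for some $p \in \rexp{P}$ and $\vec{q} \in \rexp{Q}^{!}$. Then $\tm = \lambda x_1 \dots \lambda x_m. \rappl{\cdots \rappl{\rappl{\lambda y. p}{\vec{q}}}{\vec{\tmfour}_1}}{\cdots \vec{\tmfour}_n}$ has $\rappl{\lambda y. p}{\vec{q}}$ as its head-redex in the sense of Proposition \ref{canonR}, so $\tm$ is not a head-normal form since it is nonzero --- contradicting the hypothesis. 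Therefore $M'$ is a variable, and $M = \lambda x_1 \dots \lambda x_m. M' N_1 \dots N_n$ is a head-normal form.

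The argument presents no genuine difficulty: the main point is the bookkeeping that matches the canonical decomposition of $M$ with that of $\tm$ and checks that the redex/variable dichotomy at the head is transported by expansion. The only subtlety worth isolating is $0 \notin \rexp{M}$, which is exactly what rules out the degenerate possibility that a head-normal $\tm \in \rexp{M}$ is $0$ rather than a term with a variable head.
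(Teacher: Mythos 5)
Your proof is correct and takes essentially the same route as the paper's: decompose $M$ canonically, unfold the rigid expansion along the application spine so the head of $\tm$ lands in $\rexp{M'}$, and use the fact that the expansion of a redex is made of linear applications (while that of a variable is the variable itself) to force $M'$ to be a variable. The only differences are cosmetic: you phrase the last step contrapositively and explicitly record that $0 \notin \rexp{M}$, a degenerate case the paper's proof passes over silently.
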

\begin{proof}
Let $M= \lambda x_{1}...\lambda x_{m} P Q_{1}...Q_{n}$, with $P$ either a redex or a variable. We can focus on $M' = P Q_{1}...Q_{n}$ without any loss of generality. By definition, $\rexp{M'}= \langle \rexp{P} \rangle \rexp{Q_{1}}^{!} \rangle ... \rangle \rexp{Q_{n}}^{!}$. Since $\tm \in \rexp{M}$, $\tm= \langle x \rangle \bar{\tmfour}_{1} ... \rangle \bar{\tmfour}_{n}$, with $x\in \rexp{P} $ and $ \vec{\tmfour}_{i} \in \rexp{Q_{i}}^{!}$ for $i\in \{1,...,n\}$. Then, by definition of $\rexp{P}$, $P$ must be a variable and hence $M'$ is a head-normal form. 
\end{proof}

\begin{lemma}\label{nftohnf}
Let $ \tm \in \rigidTerms. $ if $ NF(\tm) \neq 0 $ then $ HNF(\tm) \neq 0. $
\end{lemma}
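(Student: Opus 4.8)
The plan is to establish the contrapositive: if $HNF(\tm) = 0$, then $\NF{\tm} = 0$. The first and crucial observation is that the head reduction $H_r$ is a restriction of $\to_{r}$. Indeed, when $\tm$ is not a head-normal form it has, by Proposition~\ref{canonR}, a head redex $\rappl{\lambda x. \tmthree}{\vec{\tmfour}}$ in head position, and $\Hr{\tm}{}$ is obtained by contracting exactly this redex via $\rappl{\lambda x. \tmthree}{\vec{\tmfour}} \to_{r} \subst{\tmthree}{x}{\vec{\tmfour}}$. Hence $\tm \to_{r} \Hr{\tm}{}$ at every non-trivial step, so the head reduction sequence $\tm, \Hr{\tm}{1}, \Hr{\tm}{2}, \dots$ is a $\to_{r}$-reduction.

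Next I would invoke Proposition~\ref{snconf}: since $\to_{r}$ is strongly normalizing and head reduction is a restriction of it, head reduction must terminate, so that $HNF(\tm)$ is always defined and equals either a non-zero head-normal form or $0$. The value $0$ arises precisely when some contracted head redex fails the length-matching condition $n_x(\tmthree) = \length{\vec{\tmfour}}$, so that $\subst{\tmthree}{x}{\vec{\tmfour}} = 0$ and this $0$ is absorbed into the whole term through the equations $\lambda x. 0 = 0$, $\rappl{0}{\vec{\tm}} = 0$ and $\rappl{\tm}{0} = 0$.

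Now assume $HNF(\tm) = 0$, i.e. $\Hr{\tm}{m} = 0$ for some $m$. By the first observation this yields a reduction $\tm \to_{r}^{*} 0$. Since $0$ is a $\to_{r}$-normal form, confluence (Proposition~\ref{snconf}) forces the unique normal form of $\tm$ to be $0$, that is $\NF{\tm} = 0$. Taking the contrapositive delivers the claim: $\NF{\tm} \neq 0$ implies $HNF(\tm) \neq 0$.

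The argument is short because the finitary, strongly normalizing nature of $\to_{r}$ does most of the work. The only point requiring care is the bookkeeping of how a failed head contraction collapses the entire term to $0$; this is immediate from the absorption equations for $0$ recalled above, so I do not expect any genuine obstacle beyond making that propagation explicit.
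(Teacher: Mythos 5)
Your proof is correct and takes essentially the same approach as the paper: the paper's entire proof is an appeal to Proposition~\ref{snconf} (confluence and strong normalization of $\to_r$), and your argument---head reduction is a restriction of $\to_r$, so strong normalization makes $HNF(\tm)$ defined, and $HNF(\tm)=0$ would give $\tm \to_r^* 0$, forcing $\NF{\tm}=0$ by confluence---is exactly the natural unfolding of that citation.
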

\begin{proof}
By Lemma \ref{snconf}.
\end{proof}

\begin{proposition}\label{imph}
Let $M\in \Lambda$. If there exist a resource term $\tm \in \rexp{M}$ and $ m \in \mathbb{N} $ such that $H_{r}^{m}(\tm) \in D_{HNF} \setminus \{ 0 \} $ then $H^{m}(M)$ is a head-normal form.
\end{proposition}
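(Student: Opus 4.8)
The statement relates a resource head-reduction sequence on some $\tm \in \rexp{M}$ to the ordinary head-reduction on $M$. The natural strategy is to exploit the commutation Lemma~\ref{commH} and iterate it. The key observation is that Lemma~\ref{commH} states $\Hr{\rexp{M}}{} = \rexp{H(M)}$ as \emph{sets}, and since the rigid expansion and head-reduction commute at each step, iterating $m$ times should give $\Hr{\rexp{M}}{m} = \rexp{H^m(M)}$. The plan is to prove this iterated commutation by a straightforward induction on $m$, using Lemma~\ref{commH} for the base step and the inductive hypothesis together with another application of Lemma~\ref{commH} for the successor step.

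Granting the iterated commutation, I would argue as follows. Suppose $\tm \in \rexp{M}$ and $\Hr{\tm}{m} \in D_{HNF} \setminus \{0\}$, i.e. $\Hr{\tm}{m}$ is a resource head-normal form distinct from $0$. By the definition of $\Hr{\rexp{M}}{m}$ (which collects the nonzero iterated head-reducts of elements of $\rexp{M}$), we have $\Hr{\tm}{m} \in \Hr{\rexp{M}}{m}$. By the iterated commutation this set equals $\rexp{H^m(M)}$, so $\Hr{\tm}{m} \in \rexp{H^m(M)}$. Now $\Hr{\tm}{m}$ is a resource head-normal form by hypothesis, so Lemma~\ref{hnf} applies directly to the $\lambda$-term $H^m(M)$ and the resource head-normal form $\Hr{\tm}{m} \in \rexp{H^m(M)}$, yielding that $H^m(M)$ is a head-normal form, which is exactly the conclusion.

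The main obstacle I expect is making the iterated commutation precise, since Lemma~\ref{commH} is stated for a single $\lambda$-term $M$ and the nonzero-restriction in the definition of $\Hr{\rexp{M}}{}$ must be threaded carefully through the induction. Concretely, in the successor case one wants $\Hr{\rexp{M}}{m+1} = \Hr{\Hr{\rexp{M}}{m}}{}$; this requires checking that applying the set-valued operator $\Hr{(\cdot)}{}$ to $\rexp{H^m(M)}$ (using the IH) and then to its image genuinely matches one application of $\Hr{(\cdot)}{}$ on $\rexp{H^{m+1}(M)}$, so that no nonzero head-reduct is lost or spuriously created along the way. Once one is confident that the set equalities compose — which follows because Lemma~\ref{commH} already accounts for the zero-term bookkeeping in each single step — the remainder is immediate. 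A minor technical point is that the statement speaks of a fixed $\tm$ reaching a head-normal form, whereas the commutation is about whole sets; but since $\Hr{\tm}{m}$ is literally a witness inside the set $\Hr{\rexp{M}}{m}$, membership is all that is needed and no stronger uniformity is required.
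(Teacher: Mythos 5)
Your proposal is correct and follows essentially the same route as the paper: iterate Lemma~\ref{commH} to track the reduct of $\tm$ inside the rigid expansions $\rexp{H^{i}(M)}$, and then apply Lemma~\ref{hnf} to the resource head-normal form $H_{r}^{m}(\tm)$. The paper phrases the iteration pointwise on the witness (namely $H_{r}^{i}(\tm) \in \rexp{H^{i}(M)}$ for each $i$) rather than as a set equality, but the content --- including the bookkeeping that nonzeroness of $H_{r}^{m}(\tm)$ forces all intermediate reducts to be nonzero --- is the same.
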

\begin{proof}
 By Lemma \ref{commH}, $H^{i}(\tm) \in \rexp{H^{i}(M)}$ for $i\in \{1,...,n\}$. Then by Lemma \ref{hnf}, $H^{n}(M)$ is a head-normal form of $M$ (precisely the principal head-normal form of $M$).

\end{proof}

\begin{proposition} \label{phnf}
Let $M \in \ndTerms $. If $ M $ is head-normalizable then there exists $\tm \in \rexp{M}$ such that $NF(\tm) \neq 0$.
\end{proposition}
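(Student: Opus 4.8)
The plan is to exploit the anti-reduction property of the rigid expansion (Lemma~\ref{antired1}) in order to \emph{lift} a carefully chosen, globally normal rigid approximant of a head-normal form backwards along a $\beta$-reduction path down to $M$. This sidesteps any appeal to the head-reduction strategy and uses only the confluence and strong normalization of $\to_{r}$.

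First I would use the hypothesis to fix a reduction sequence $M = M_0 \to_\beta M_1 \to_\beta \cdots \to_\beta M_n$ with $M_n$ a head-normal form, say $M_n = \lambda x_1 \ldots \lambda x_k.\, y\, N_1 \cdots N_p$ with $y \in \mathcal V$. The crucial observation is that $\rexp{M_n}$ contains a rigid term that is not merely head-normal but fully $\to_{r}$-normal and non-zero, namely
\[
 b_n = \lambda x_1 \ldots \lambda x_k.\, \rappl{\cdots \rappl{y}{()}}{()},
\]
i.e. the approximant obtained by choosing the empty list $()$ as the rigid monomial approximating \emph{each} argument $N_i$. Since $() \in \rexp{N_i}^{!}$ for every $i$, this $b_n$ indeed lies in $\rexp{M_n}$. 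Moreover $b_n$ contains no redex (its head is the variable $y$ and every argument monomial is empty, so no subterm of the form $\rappl{\lambda x. a}{\vec b}$ occurs) and it is distinct from $0$ (no zero collapse is triggered). Hence $\NF{b_n} = b_n \neq 0$.

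Next I would propagate $b_n$ backwards along the path. For each step $M_{i-1} \to_\beta M_i$, Lemma~\ref{antired1} provides, given $b_i \in \rexp{M_i}$, a term $b_{i-1} \in \rexp{M_{i-1}}$ with $b_{i-1} \to_{r} b_i$. Iterating from $i = n$ down to $i = 1$ produces a term $\tm := b_0 \in \rexp{M_0} = \rexp{M}$ together with a chain $\tm = b_0 \to_{r} b_1 \to_{r} \cdots \to_{r} b_n$, so that $\tm \to_{r}^{*} b_n$. Finally, since $\to_{r}$ is confluent and strongly normalizing (Proposition~\ref{snconf}), the normal form is unique and invariant along reduction; as $\tm \to_{r}^{*} b_n$ with $b_n$ already normal, we conclude $\NF{\tm} = b_n \neq 0$, which is the required witness.

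The only genuinely delicate point is the construction of $b_n$: the approximant of the head-normal form $M_n$ must be chosen \emph{globally} $\to_{r}$-normal and non-zero, not merely head-normal, and the empty-monomial choice is precisely what secures this while keeping $b_n$ a legitimate element of $\rexp{M_n}$. A tempting alternative would be to run the head-reduction strategy and invoke the commutation Lemma~\ref{commH}, but that route would additionally require that head reduction terminates on every head-normalizable term (a standardization-type fact); the anti-reduction argument avoids this entirely.
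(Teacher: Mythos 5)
Your proposal is correct and is essentially the paper's own proof: the paper likewise chooses the fully $\to_{r}$-normal, non-zero approximant $\langle \cdots \langle x \rangle () \cdots \rangle ()$ of the head-normal form (empty lists for all arguments) and pulls it back along the $\beta$-reduction chain via Lemma \ref{antired1}, concluding by Proposition \ref{snconf}. Your write-up merely makes explicit the iteration and the uniqueness-of-normal-form step that the paper leaves implicit.
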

\begin{proof}
Let $ N $ be a head-normal form of $ M$. By definition a $ \tmtwo_{0} \in T_{r}(N)$ is a resource head-normal form. Also by definition $ M = M_{0} \to_{\beta} \dots \to_{\beta} M_{n} = N. $ Then, by Lemma \ref{antired1}, we can conclude, since we take the rigid term $ \langle \cdots \langle x \rangle () \cdots \rangle () $ and follow its anti-reduction.

\end{proof}

 \begin{theorem}
 \label{hnorm}
Let $M\in \ndTerms$ . the following statements are equivalent:\\
(i) there exists $\tm \in \rexp{M} $ such that $ NF(\tm) \neq 0; $ \\
(ii) there exist a resource term $\tm\in \rexp{M}$ and $ m \in \mathbb{N} $ such that $H_{r}^{m}(\tm)\in D_{HNF} \setminus \{ 0\} ;$\\
(iii)  $ H^{m}(M) $ is a head-normal form;\\
(iv) M is head-normalizable.

\end{theorem}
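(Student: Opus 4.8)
The strategy is to close a cycle of implications $(i) \Rightarrow (ii) \Rightarrow (iii) \Rightarrow (iv) \Rightarrow (i)$, using exactly the lemmas and propositions already assembled in this subsection. Each arrow corresponds almost verbatim to one of the preceding results, so the main work is to check that they chain together coherently rather than to prove anything substantially new.

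First I would establish $(i) \Rightarrow (ii)$ by invoking Lemma \ref{nftohnf}: if $a \in \rexp{M}$ satisfies $NF(a) \neq 0$, then $HNF(a) \neq 0$, which by definition of the principal normal form means there is some $m \in \mathbb{N}$ with $\Hr{a}{m}$ a head-normal form distinct from $0$, i.e. $\Hr{a}{m} \in D_{HNF} \setminus \{0\}$. Next, $(ii) \Rightarrow (iii)$ is precisely Proposition \ref{imph}: given the iterated head-resource reduction landing in $D_{HNF} \setminus \{0\}$, the commutation Lemma \ref{commH} propagates the expansion relation through each step so that $\Hr{a}{i} \in \rexp{H^i(M)}$, and then Lemma \ref{hnf} forces $H^m(M)$ to be a head-normal form.

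For $(iii) \Rightarrow (iv)$ I would argue that since $H$ implements a sequence of head-reduction steps, if $H^m(M)$ is a head-normal form then the chain $M = H^0(M) \to_\beta \cdots \to_\beta H^m(M)$ witnesses head-normalizability directly; this is essentially definitional, as each application of $H$ on a non-head-normal term performs exactly one $\to_h$ step, which is a $\beta$-step. Finally $(iv) \Rightarrow (i)$ is Proposition \ref{phnf}: from head-normalizability one extracts, via the anti-reduction Lemma \ref{antired1} applied along the reduction sequence to the head-normal form, a resource term $a \in \rexp{M}$ whose normal form is nonzero.

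The step I expect to require the most care is $(ii) \Rightarrow (iii)$, since it is the only one that genuinely combines two earlier results and depends on the commutation of head-reduction with rigid expansion holding uniformly along all $m$ iterations; one must confirm that Lemma \ref{commH}, stated for a single step, iterates so that the membership $\Hr{a}{i} \in \rexp{H^i(M)}$ is preserved and that the nonzero hypothesis is maintained throughout. The remaining implications are either immediate consequences of confluence and strong normalization (Proposition \ref{snconf}, via Lemma \ref{nftohnf}) or unfold directly from the definitions of $H$, $H_r$, and head-normalizability.
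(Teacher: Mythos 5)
Your proposal matches the paper's proof exactly: the same cycle $(i)\Rightarrow(ii)\Rightarrow(iii)\Rightarrow(iv)\Rightarrow(i)$, with $(i)\Rightarrow(ii)$ by Lemma \ref{nftohnf}, $(ii)\Rightarrow(iii)$ by Proposition \ref{imph}, $(iii)\Rightarrow(iv)$ as a triviality, and $(iv)\Rightarrow(i)$ by Proposition \ref{phnf}. Your additional remarks on iterating Lemma \ref{commH} inside Proposition \ref{imph} and on unfolding Proposition \ref{phnf} via Lemma \ref{antired1} correctly reflect how those cited results are themselves proved in the paper.
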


\begin{proof}
$(i)\Rightarrow (ii) $ is Lemma \ref{nftohnf}. $(ii)\Rightarrow (iii) $ is Proposition $\ref{imph}. $ $(iii) \Rightarrow (iv) $ is is a triviality. Finally, $(iv) \Rightarrow (i)$ is Proposition $ \ref{phnf}$. 
\end{proof}

\subsection{Solvability}

\begin{definition}
Let $ M \in \lambda $ that is closed. Then $ M $ is solvable if there exists $ N_1, \dots, N_n \in \Lambda $ such that $ M N_1 \cdots N_n =_{\beta} \lambda x. x .$ 
\end{definition}

In particular we have that $ M N_1 \cdots N_n \to^{*}_{\beta} \lambda x. x .$ We say that a generic $ M \in \Lambda $ is solvable if there exists a closure of $ M $ that is solvable. 

\begin{theorem}
$ M $ is solvable iff $ M $ is head-normalisable. 
\end{theorem}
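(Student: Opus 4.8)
The plan is to prove the equivalence between solvability and head-normalizability. This is the classic theorem of Wadsworth; I would establish both directions directly, making use of the characterization of head-normalization already developed in Theorem~\ref{hnorm}.

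First I would treat the easy direction, that solvability implies head-normalizability. Suppose $M$ is solvable. Up to taking a closure, I may assume $M$ is closed and that there exist $N_1,\dots,N_n$ with $M N_1 \cdots N_n \to^{*}_\beta \lambda x. x$. Since the identity $\lambda x. x$ is a head-normal form, and $\beta$-reduction from $M N_1 \cdots N_n$ reaches a head-normal form, the term $M N_1 \cdots N_n$ is head-normalizable. The key step is then to observe that head-normalizability of an application propagates back to the function: if $M N_1 \cdots N_n$ has a head-normal form, then $M$ itself must have a head-normal form, since a leading $\lambda x_1 \dots \lambda x_m$ abstraction followed by a variable head can only be produced if $M$ already reduces to something of that shape (otherwise $M$ would reduce to $0$-like behaviour, i.e.\ be unsolvable). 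Concretely, I would argue that if $M$ had no head-normal form, neither would any application $M N_1 \cdots N_n$, since head-reduction of the application simply performs head-reduction on $M$ until (if ever) a head variable appears.

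For the converse, that head-normalizability implies solvability, suppose $M$ is head-normalizable. Again passing to a closure, I may assume $M$ is closed. By Proposition~\ref{canonLa} its head-normal form has the shape $\lambda x_1 \dots \lambda x_m.\, x_i\, P_1 \cdots P_k$ for some bound head variable $x_i$ (the variable must be bound because $M$ is closed). The strategy is the standard one: I would supply arguments $N_1,\dots,N_m$ to instantiate the leading abstractions so that the head variable $x_i$ gets substituted by a projection-like combinator that selects a suitable argument and erases the spurious subterms $P_1,\dots,P_k$, eventually $\beta$-reducing the whole expression to $\lambda x. x$. The technical work is in choosing each $N_j$ (using combinators of the form $\lambda z_1 \dots \lambda z_k. z_\ell$ to discard the $P$'s and keep the desired coordinate) and verifying the $\beta$-reduction collapses correctly.

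The main obstacle is this converse direction, specifically the bookkeeping in constructing the instantiating arguments and checking the resulting $\beta$-reduction. One must ensure the head variable is bound so that the projection substitution applies, handle the arities $m$ and $k$ uniformly, and confirm that after substitution all the $P_j$ are erased rather than blocking reduction. The forward direction, by contrast, reduces cleanly to the propagation of (non-)head-normalizability through application, which follows from the behaviour of the head-reduction function $H$ already fixed in the excerpt.
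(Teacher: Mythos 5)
Your second direction (head-normalizable $\Rightarrow$ solvable) poses no problem: the projection-combinator construction you sketch is exactly the proof of Theorem 8.3.14 of Barendregt, which is what the paper itself cites for that direction. The gap is in the first direction (solvable $\Rightarrow$ head-normalizable). The propagation statement you rely on --- if $M$ has no head-normal form then neither does $M N_1 \cdots N_n$ --- is true, but your justification (``head-reduction of the application simply performs head-reduction on $M$ until a head variable appears'') fails. As soon as $M$ head-reduces to an abstraction $\lambda x. M'$, the head redex of the application is $(\lambda x. M') N_1$, and contracting it yields $M'[N_1/x]\, N_2 \cdots N_n$: from that point on, head reduction of the application proceeds inside the \emph{substituted} term $M'[N_1/x]$, not inside $M'$, so it is not a replay of the head reduction of $M$. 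Relating the two is exactly the hard part: one needs the lemma that head-normalizability of $M'[N_1/x]$ (applied to the remaining arguments) implies head-normalizability of $M'$, which in the classical development (Barendregt, Lemma 8.3.13) rests on standardization, i.e.\ Wadsworth's theorem that a term has a head-normal form iff its head reduction terminates. Nothing in the definition of $H$ gives you this, and your parenthetical ``otherwise $M$ would reduce to $0$-like behaviour, i.e.\ be unsolvable'' is circular, since solvability is the very notion being characterized.

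This propagation step is precisely where the paper uses the machinery it has just built, and it is the reason the theorem is placed after Theorem \ref{hnorm}. From solvability of the closure $M'$, Theorem \ref{hnorm} ((iv) $\Rightarrow$ (i)) yields a rigid term $a \in T_{r}(M' N_1 \cdots N_n)$ with $NF(a) \neq 0$; by the definition of the rigid expansion of an application, $a = \langle \cdots \langle c \rangle \vec{d}_1 \cdots \rangle \vec{d}_n$ with $c \in T_{r}(M')$; strong normalization and confluence of the rigid reduction (Lemma \ref{snconf}) force $NF(c) \neq 0$, since $NF(c) = 0$ would annihilate the normal form of the whole application; Theorem \ref{hnorm} ((i) $\Rightarrow$ (iv)) then gives head-normalizability of $M'$, hence of $M$. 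No standardization and no analysis of interleaved head reductions is needed --- the resource calculus absorbs that difficulty. To repair your proof, either invoke Barendregt's Lemma 8.3.13 explicitly (thereby importing standardization), or replace your propagation argument by this rigid-expansion argument.
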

\begin{proof}
$ (\Leftarrow)$ If $ M $ is solvable then there exists $ N_1, \dots, N_n \in \Lambda $ such that $ M' N_1 \cdots N_n $ is head-normalisable, with $ M' $ being a closure of $M $. By Theorem \ref{hnorm} we have that there exists $\tm \in \rexp{ M N_1 \cdots N_n} $ such that $\NF{s} \neq 0 .$ We have that $ \tm = \langle \cdots \langle \tmthree \rangle \vec{\tmfour}_1 \cdots \rangle \vec{\tmfour}_n  $ for some $ \tmthree \in \rexp{M} $ and $\vec{\tmfour}_{i} \in \rexp{N_i}^{!} $. By Lemma \ref{snconf} and Theorem \ref{hnorm} we can conclude, since $ \tmthree \in \rexp{M} $ and $ \NF{\tmthree} \neq 0 $. \\
$ (\Rightarrow) $ The same proof of Theorem 8.3.14 of \cite{bar:lam}.
\end{proof}

\section{$\beta$-normalization} \label{secbeta}

In this section we shall present a characterization of $\beta$-normalization  \textit{via} rigid expansion. Firstly we give recall some basic definitions.

 We say that $M\in \Lambda$ is in \emph{$\beta$-normal form}  if 
$M$ does not contain redexes as sub-terms. We say that a term $ M$ is \emph{$\beta$-normalizable} if is there exist $M_{1},....,M_{n}, N\in \Lambda$
such that $ M = M_{0} \to_{\beta} M_{1} \to_{\beta} ... \to_{\beta} M_{n} \to_{\beta} M_{n} = N$ with $N$ $\beta$-normal form. We call $ M= M=M_{0} \to_{\beta} M_{1} \to_{\beta} ... \to_{\beta} M_{n} \to_{\beta} M_{n} = N$ a $\beta$-reduction chain starting from $M$. We say that $M$ is strongly normalizable if there is no infinite $\beta$-reduction chain starting from $M$. 
We extend the former definitions to resource terms in the natural way.

\subsection{$\beta$-normalization \textit{via} rigid expansion}

In order to achieve $ \beta$-normalisation, we will introduce  a parallel version of the left reduction:

 \begin{definition}
Let $M\in \ndTerms$. We define the left-parallel reduction  $L : \Lambda \to \Lambda$ by cases as follows: \par
$L(M)=  \begin{cases} 
M  &  \text{ if M is a normal form} \\
\lambda x_{1}...\lambda x_{m}. x L(Q_{1})...L(Q_{n}) & \text{ if M is a head-normal form} ;\\
\lambda x_{1}...\lambda x_{m}. P[Q/x] Q_{1}...Q_{n} & \text{ otherwise. } 
\end{cases} $
 \end{definition}
 
and the resource version of left-parallel reduction: 
 
 \begin{definition}
Let $\tm\in \rigidExprs$. We define the left-parallel resource reduction $L_{\partial}: \rigidExprs \to \rigidExprs$ by cases as follows:\par
 $L(\tm)_{r}=  \begin{cases}
\tm &  \text{ if \tm is a normal form} \\
   \lambda x_{1}...\lambda x_{m}. x L(\vec{\tmfour}_{1})...L(\vec{\tmfour}_{n}) & \text{ if \tm is a head-normal form} ;\\
\lambda x_{1}...\lambda x_{m}. \langle... \langle \subst {\tmthree} {x} {\vec{\tmfour}} \rangle \vec{\tmfour}_{1}... \rangle \vec{\tmfour}_{n} & \text{ otherwise. } 
\end{cases} $\\
 
$L_{r}(\vec{\tmfour}) = (L_{r}(\tmfour_{1}),..., L_{r}(\tmfour_{n}))$ with $\vec{\tmfour}=(\tmfour_{1},..., \tmfour_{n})$.

 \end{definition}
 
  When $M$ is a head-normal form, then the reduction is propagated to the arguments $Q_{1},..., Q_{n}$ and it is possible to reduces more then one redex for step. If the left-parallel reduction ends then $M$ is $\beta$-normalizable:   
 
 \begin{proposition}
 Let $M$ be a $\lambda$-term. If there exists $n\in \mathbb{N}$ such that $L(M)^{n} = N$ with $N $ $beta$-normal form, $M$ is $\beta$-normalizable. 
 \end{proposition}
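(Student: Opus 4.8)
The plan is to show that a single application of the left-parallel reduction $L$ never does anything that ordinary $\beta$-reduction cannot do: concretely, I would first establish the auxiliary fact that $M \to_{\beta}^{*} L(M)$ for every $M \in \ndTerms$, where $\to_{\beta}^{*}$ denotes the reflexive--transitive closure of $\to_{\beta}$. Once this is available, the proposition follows by a trivial composition, so the whole weight of the argument rests on this lemma.

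I would prove $M \to_{\beta}^{*} L(M)$ by induction on the structure of $M$, following the three cases of the definition of $L$. If $M$ is a normal form then $L(M) = M$ and there is nothing to reduce. If $M$ is a head-normal form but not a normal form, then $M = \lambda x_{1}\dots\lambda x_{m}. x\,Q_{1}\dots Q_{n}$ and $L(M) = \lambda x_{1}\dots\lambda x_{m}. x\,L(Q_{1})\dots L(Q_{n})$; here I would invoke the induction hypothesis to get $Q_{i} \to_{\beta}^{*} L(Q_{i})$ for each $i$, and then use the fact that $\to_{\beta}$ is a congruence (closed under the application and abstraction contexts) to reduce the arguments one after another, obtaining $M \to_{\beta}^{*} L(M)$. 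In the remaining case $M$ has a head redex, $M = \lambda x_{1}\dots\lambda x_{m}. (\lambda x. P)\,Q\,Q_{1}\dots Q_{n}$, and $L(M)$ is exactly the contractum of the head redex, so $M \to_{\beta} L(M)$ in a single step.

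With the lemma in hand, the proof of the proposition is immediate: assuming $L^{n}(M) = N$ with $N$ in $\beta$-normal form, I would iterate the lemma to obtain the chain $M \to_{\beta}^{*} L(M) \to_{\beta}^{*} L^{2}(M) \to_{\beta}^{*} \dots \to_{\beta}^{*} L^{n}(M) = N$, and concatenate these finite $\beta$-reduction sequences into one finite $\beta$-reduction from $M$ to the normal form $N$. This exhibits $M$ as $\beta$-normalizable.

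I do not expect any serious obstacle. The only point demanding care is the head-normal-form case of the lemma, where the parallel reduction of the arguments must be serialized into a genuine sequence of single-context $\beta$-steps; this is exactly where congruence of $\to_{\beta}$ and the induction hypothesis are used, and it is what forces the argument to proceed by induction on $M$ rather than by a one-line case analysis.
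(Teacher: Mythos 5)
Your proof is correct and takes essentially the same approach as the paper: the paper's proof simply asserts that $M \twoheadrightarrow_{\beta}^{*} L(M)$ holds ``by definition'' and then composes these simulations across the $n$ iterations, exactly as you do. The structural induction you spell out (in particular, serializing the parallel reduction of the arguments $Q_{1},\dots,Q_{n}$ in the head-normal-form case via congruence of $\to_{\beta}$) is precisely the detail the paper leaves implicit as trivial.
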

\begin{proof}
The proof is trivial. Since, by definition, $ M \twoheadrightarrow^{+}_{\beta} L(M)$,  if there exists such  $n \in \mathbb{N}$ we have a $\beta$-reduction chain starting from $M$ that ends with a $\beta$-normal form, i.e. $M$ is $\beta$-normalizable.
\end{proof}

We set  $L_{r} (T(M))= \{ L_{r}(\tm) \mid \tm \in \rexp{M} $ and $ L_{r}(\tm) \neq 0 \}$. 
Then We state a result that extends Lemma \ref{commH}:

 \begin{lemma}\label{commL} Let $M$ be a $\lambda$-term. Then $ L_{r}(T(M)) = T(L(M))$.

\end{lemma}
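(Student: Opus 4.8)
The plan is to establish the set equality $L_{r}(\rexp{M}) = \rexp{L(M)}$ by induction on the size of $M$, treating the three cases of the definition of $L$ separately and proving a double inclusion in each. The guiding observation is that $L$ coincides with $H$ whenever $M$ is not a head-normal form: in that case $L$ contracts only the head redex and does not recurse into the arguments. Consequently the redex case requires no induction hypothesis and can be carried out exactly like Lemma \ref{commH}. The genuinely new ingredient of $L$ — and the only place where the induction hypothesis is invoked — is the head-normal-form case, in which $L$ fires reductions in parallel inside all the arguments at once.

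I would dispose of the two easy cases first. When $M$ is a $\beta$-normal form we have $L(M)=M$, so it suffices to note (by a routine auxiliary induction: $M=\lambda x_1\dots\lambda x_m.\,x N_1\dots N_n$ with each $N_i$ normal forces every element of $\rexp{M}$ to have variable head and normal arguments) that each $\tm\in\rexp{M}$ is already $\to_r$-normal; hence $L_{r}(\tm)=\tm\neq 0$ and both sides equal $\rexp{M}$. When $M=\lambda x_1\dots\lambda x_m.\,(\lambda x.P)Q Q_1\dots Q_n$ is not head-normal, I would reproduce verbatim the argument of Lemma \ref{commH}: apply Lemma \ref{Taysub} to compute $\rexp{\subst{P}{x}{Q}\,Q_1\dots Q_n}$, and then match term by term the nonzero images $L_{r}(\tm)$ against the elements of $\rexp{L(M)}$, using that $L$ acts here precisely as the head-reduction $H$.

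The real work, and the step I expect to be the main obstacle, is the head-normal-form case $M=\lambda x_1\dots\lambda x_m.\,x Q_1\dots Q_n$, where $L(M)=\lambda x_1\dots\lambda x_m.\,x\,L(Q_1)\dots L(Q_n)$. Every $\tm\in\rexp{M}$ has the form $\lambda x_1\dots\lambda x_m.\,\langle\cdots\langle x\rangle\vec{e}_1\cdots\rangle\vec{e}_n$ with $\vec{e}_i\in\rexp{Q_i}^{!}$, and since its head is the variable $x$ it is a resource head-normal form, so $L_{r}(\tm)=\lambda x_1\dots\lambda x_m.\,\langle\cdots\langle x\rangle L_{r}(\vec{e}_1)\cdots\rangle L_{r}(\vec{e}_n)$, where $L_{r}$ acts componentwise on each monomial. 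For $\subseteq$ I would argue that $L_{r}(\tm)\neq 0$ forces every component $L_{r}(d)$ of each $L_{r}(\vec{e}_i)$ to be nonzero, and that by the induction hypothesis on $Q_i$ these components range exactly over $\rexp{L(Q_i)}=L_{r}(\rexp{Q_i})$; hence $L_{r}(\vec{e}_i)\in\rexp{L(Q_i)}^{!}$ and $L_{r}(\tm)\in\rexp{L(M)}$. The converse $\supseteq$ reverses this: any element of $\rexp{L(M)}$ is built from monomials over $\rexp{L(Q_i)}$, each component of which the induction hypothesis realizes as a nonzero $L_{r}(d)$ with $d\in\rexp{Q_i}$; reassembling these $d$'s into monomials $\vec{e}_i$ yields a $\tm\in\rexp{M}$ with $L_{r}(\tm)$ equal to the given element. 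The delicate point throughout is the bookkeeping of the zero-absorption rules $(0)\cdot\vec{\tm}=0$ and $\rappl{\tmthree}{0}=0$: one must check that discarding the zero images on the left corresponds exactly to the discarding of zeros already built into $L_{r}(\rexp{Q_i})$ on the right, so that the two collections of monomials match with no spurious gain or loss.
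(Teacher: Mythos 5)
Your proposal is correct and follows essentially the same route as the paper's proof: the head-redex case is discharged by Lemma \ref{commH}, and the head-normal-form case $M=\lambda x_1\dots\lambda x_m.\,x\,Q_1\dots Q_n$ is handled by applying the induction hypothesis to the arguments $Q_i$ and reassembling monomials componentwise in a double inclusion. You are in fact slightly more thorough than the paper, which folds the $\beta$-normal-form case into the head-normal-form case and dismisses one inclusion as trivial, whereas you treat both explicitly along with the zero-absorption bookkeeping.
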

\begin{proof}

We prove the lemma by induction on the definition of $L$.

 If $M$ has a head redex then the result follows immediately from Lemma \ref{commH}. 
 
 If $M= \lambda x_{1}... \lambda x_{m}. x  Q_{1}... Q_{n}$ we can focus on $M'=x  Q_{1}... Q_{n}$ without any loss of generality. 
 
  We have that \[T_{r}(L(M')) = \langle ... \langle \{x\} \rangle T_{r}(L(Q_{1})^{!} ... \rangle T_{r}(L(Q_{n}))^{!}.\] By IH we have that
  for all $ i\in \{1,...,n \}$, $ L_{r} (T_{r}(Q_{i})) = T_{r}(L(Q_{i}))$. 
  It remains to prove that, if for all $ i\in \{1,...,n \}$, $ L_{r} (T_{r}(Q_{i})) = T_{r}(L(Q_{i}))$, then $ (T_{r}(M')) = T_{r}(L(M'))$.  Let $\tm\in T_{r}(L(M'))$, then $\tm = \langle ... \langle x \rangle \vec{\tmfour}_{1} \rangle...\rangle \vec{\tmfour}_{n}$ with $\vec{\tmfour}_{i}\in T_{r}(L(Q_{i})) = L_{r}(T_{r}(Q_{i}))$ by hypothesis. Hence, by definition, there exists $ \vec{\tmfour}'_{i}$ for all $i \in \{1,...,n\}$ such that $ \vec{\tmfour}_{i} = L_{r}(\vec{\tmfour}'_{i})$. Let $s' = \langle ... \langle x \rangle \vec{\tmfour}'_{1}\rangle ... \rangle \vec{\tmfour}'_{n}$. Then $ \tm = L_{r} (\tm') \in L_{r} (T(M')).$ The other inclusion is trivial by definition. 
\end{proof}

We observe that a non $\beta $-normalisable term can have a non zero rigid expansion: take $M = x \Omega $ and $ \rappl{x}{()} \in \rexp{M}$. $  \rappl{x}{()} $ is in normal form, but $ M $ is only head-normalisable. The we need to strengthen our hypothesis on the rigid expansion.

\begin{definition}
We inductively define the set of positive resource terms  $ D^{+}\subset D$  as follows: 

\[ s\in \rigidTerms^{+} ::=  x \mid  \lambda x. \tm \mid \langle \tmthree \rangle \vec{\tmfour} \]
\par
with $\vec{q}\in \rigidMonomials$ such that $\vec{q}\neq ()$. We denote $\rigidTerms_{NF^{+}}$ the set of resource normal forms that are also positive resource terms.
\end{definition}

A positive resource term is a resource term where all the arguments are defined: there are no empty lists appearing as arguments in a linear application. Then an approximant of this kind presents all the information about its corresponding $\lambda$-term.

\begin{lemma} \label{forcingL}
Let $M \in \ndTerms$. If there exists $s\in \rexp{M}$ such that $s\in \rigidTerms^{+}$ and $s$ is a resource normal form then $M$ is a $\beta$-normal form.
\end{lemma}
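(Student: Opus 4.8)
The plan is to proceed by induction on the size of $M$, using the canonical decomposition of Proposition \ref{canonLa}. Write $M = \lambda x_1 \dots \lambda x_m. M' N_1 \dots N_n$ with $M'$ either a variable or a redex. Since $s\in\rexp{M}$, the definition of the rigid expansion forces $s$ to have the shape $\lambda x_1 \dots \lambda x_m. \rappl{ \cdots \rappl{s'}{\vec{\tmfour}_1}}{ \cdots \vec{\tmfour}_n}$ with $s'\in\rexp{M'}$ and $\vec{\tmfour}_i\in\rexp{N_i}^{!}$ for each $i$. The argument then splits according to whether $M'$ is a redex or a variable, and the two hypotheses on $s$ (normality and positivity) are each used in a single case.

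First I would dispatch the case where $M'$ is a redex, say $M'=(\lambda x. P)Q$. Then $s'\in\rexp{M'}$ forces $s'=\rappl{\lambda x. p}{\vec{q}}$ for some $p\in\rexp{P}$ and $\vec{q}\in\rexp{Q}^{!}$, and $s'$ is a resource redex occurring as the head subterm of $s$. This contradicts the assumption that $s$ is a resource normal form, so $M'$ cannot be a redex. Observe that this case only uses normality, not positivity, and it goes through even when $n=0$.

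It remains to treat $M'=y$ for a variable $y$, where I must show that each $N_i$ is a $\beta$-normal form; this is where positivity enters. Since $s\in\rigidTerms^{+}$, every linear application occurring in $s$ carries a non-empty argument list, so in particular each $\vec{\tmfour}_i\neq()$ and therefore contains at least one resource term $s_i\in\rexp{N_i}$. Being a subterm of the positive resource normal form $s$, each such $s_i$ is again positive and in resource normal form, so the induction hypothesis applies to $N_i$ and $s_i$ and yields that $N_i$ is a $\beta$-normal form. As $M'=y$ is a variable, $M$ carries no head redex, and since all the $N_i$ are $\beta$-normal, $M$ contains no redex as a subterm; hence $M$ is a $\beta$-normal form.

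The only delicate point is the role of positivity: without it the lists $\vec{\tmfour}_i$ could be empty, in which case $s$ records no information whatsoever about $N_i$ and the induction collapses — precisely the phenomenon witnessed by $\rappl{x}{()}\in\rexp{x\Omega}$ discussed just before the statement. Beyond isolating this, I expect no real obstacle: the two structural facts I rely on, namely that subterms of positive terms are positive and subterms of normal forms are normal, are immediate from the respective inductive definitions, and the rest is bookkeeping of the canonical form.
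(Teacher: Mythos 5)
Your proof is correct, and its substance coincides with the paper's: positivity forces the argument lists to be nonempty, hence to contain witnesses in the expansions of the arguments; positivity and normality pass to subterms; the induction hypothesis finishes. The only genuine difference is the decomposition. The paper does plain structural induction on $M$, with $M=PQ$ as the sole non-trivial case: there $s=\rappl{\tmthree}{\vec{\tmfour}}$ with $\tmthree\in\rexp{P}$, $\vec{\tmfour}\in\rexp{Q}^{!}$, and the IH is applied to both $P$ and $Q$. You instead unfold $M$ all at once into its canonical head form via Proposition \ref{canonLa} and case-split on the head $M'$ (redex versus variable), applying the IH only to the arguments $N_i$. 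What your route buys is a point the paper leaves implicit: in the paper's application case, concluding that $PQ$ is a $\beta$-normal form needs not only that $P$ and $Q$ are normal but also that $P$ is not an abstraction; the paper never says this, though it follows since an abstraction in $\rexp{P}$ would make $s$ itself a redex. Your redex case states exactly this contradiction explicitly, which is the cleaner bookkeeping. One small quibble with your commentary (not the proof): the claim that normality and positivity are ``each used in a single case'' is inaccurate, since normality is also needed in the variable case to ensure each $s_i$ is a resource normal form before invoking the IH --- your argument does in fact use it there, so only the framing sentence is off.
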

\begin{proof}
By induction on $M$. The non-trivial case is the application. If $M = PQ$ we have that $s= \langle \tmthree \rangle \vec{\tmfour}$ for some $\tmthree\in \rexp{P}$ and $\vec{q}\in \rexp{Q}^{!}$. Since $\tm$  is a positive term, $\vec{\tmfour} = (\tmfour_{1},\dots, \tmfour_{n}) $ for some $ \tmfour_{1},\dots, \tmfour_{n}\in \rexp{Q}$. Since $\tm$ is a resource normal form, we have that $\tmthree$ is a positive resource normal form and, for $i\in \{1,...,n\}$ $\tmfour_{i}$ is a positive resource normal form. Hence we can apply the IH and conclude.
\end{proof}

\begin{lemma} \label{standL}
Let $\tm\in \rigidTerms$. there exists $n\in \mathbb{N}$ such that $ L_{r}^{n}(\tm) = NF(\tm)$.
\end{lemma}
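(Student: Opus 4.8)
The plan is to reduce everything to the strong normalization of $\to_r$ (Proposition \ref{snconf}). The crucial observation is that $L_r$ never stalls on a non-normal term: I will show that
\[ \tm \to_r^* L_r(\tm) \text{ for every } \tm\in\rigidExprs, \quad\text{with } \tm \to_r^+ L_r(\tm) \text{ whenever } \tm \text{ is not a normal form.} \]
Granting this \emph{progress claim}, termination of the iteration $L_r^n$ is forced by Proposition \ref{snconf}, and confluence identifies the limit with $\NF{\tm}$.

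First I would establish the progress claim by induction on the size of $\tm$, following the three clauses defining $L_r$. If $\tm$ is a normal form the claim is trivial (zero steps). If $\tm$ has a head redex, then by Proposition \ref{canonR} we may write $\tm = \lambda x_1\dots\lambda x_m. \langle\dots\langle\langle\lambda x.\tmthree\rangle\vec{\tmfour}\rangle\vec{\tmfour}_1\dots\rangle\vec{\tmfour}_n$, and $L_r$ contracts exactly this head redex to $\subst{\tmthree}{x}{\vec{\tmfour}}$, so $\tm \to_r L_r(\tm)$ in one step. If $\tm$ is a head-normal form that is \emph{not} a normal form, then $\tm = \lambda x_1\dots\lambda x_m. \langle\dots\langle x\rangle\vec{\tmfour}_1\dots\rangle\vec{\tmfour}_n$ and $L_r$ pushes the reduction in parallel into the argument lists via $L_r(\vec{\tmfour})=(L_r(\tmfour_1),\dots,L_r(\tmfour_n))$; the components $\tmfour_{ij}$ are strictly smaller than $\tm$, so the induction hypothesis gives $\tmfour_{ij}\to_r^* L_r(\tmfour_{ij})$, and contextual closure of $\to_r$ yields $\tm\to_r^* L_r(\tm)$. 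Since $\tm$ is not normal while its head is a variable, at least one argument contains a redex, so at least one contraction occurs and $\tm \to_r^+ L_r(\tm)$.

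With the progress claim in hand, I would conclude by well-founded induction. Because $\to_r$ is strongly normalizing (Proposition \ref{snconf}), the relation $\to_r^+$ is well founded, and I argue along it. If $\tm$ is a normal form, then $\tm=\NF{\tm}$ and $n=0$ works. Otherwise $\tm \to_r^+ L_r(\tm)$ by the progress claim, so the induction hypothesis applies to $L_r(\tm)$ and furnishes an $n'$ with $L_r^{n'}(L_r(\tm))=\NF{L_r(\tm)}$. Since $\tm\to_r^* L_r(\tm)$, confluence of $\to_r$ gives $\NF{L_r(\tm)}=\NF{\tm}$, and therefore $L_r^{n'+1}(\tm)=\NF{\tm}$, i.e.\ $n=n'+1$ works.

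The only genuine content is the progress claim, and its delicate point is precisely the head-normal-form clause: there the reduction is delegated to the argument lists, so one must verify both that $L_r$ applied componentwise realizes a sequence of $\to_r$-steps and that at least one redex is actually contracted when $\tm$ fails to be a normal form. The remaining termination argument is a direct appeal to the strong normalization and confluence recorded in Proposition \ref{snconf}.
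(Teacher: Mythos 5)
Your proof is correct and takes essentially the same route as the paper: the paper's own proof is the one-line remark that $\tm \twoheadrightarrow^{+} L_{r}(\tm)$ together with strong normalization and confluence of $\to_{r}$ (Proposition \ref{snconf}), which is exactly your progress claim followed by the well-founded induction. You have simply made explicit the case analysis and the termination argument that the paper leaves implicit.
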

\begin{proof}
Trivial by Lemma \ref{strong} and by the observation that $s\twoheadrightarrow^{+} L_{r}(\tm)$.
\end{proof}

Since the left-parallel reduction commutes with the rigid expansion and since having a positive linear approximant in normal form implies being a $\beta$-normal form, we are beginning to grasp what is the needed condition on the Taylor expansion to characterize $\beta$-normalization. If there exists $\tm \in \rexp{M} $ such that the normal form of $\tm$ is positive then $M$ should be $\beta$-normalizable. In a more formal way:

\begin{proposition} \label{impbeta}
Let $M\in \ndTerms$. If there exists $\tm \in \rexp{M} $ and $ n \in \mathbb{N} $ such that $ L_{r}^{n}(\tm)\in D_{NF}^{+} $ then $L^{n}(M) $ is a $ \beta$-normal form.
\end{proposition}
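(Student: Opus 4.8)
The plan is to mirror the structure of Proposition \ref{imph}, using the commutation Lemma \ref{commL} as the bridge between the resource world and the ordinary world, and then closing with the forcing Lemma \ref{forcingL}. The hypothesis gives us a single rigid approximant $\tm \in \rexp{M}$ whose $n$-th left-parallel resource reduct $L_r^n(\tm)$ is a positive resource normal form, i.e.\ $L_r^n(\tm) \in D_{NF}^+$. I want to conclude that the $n$-th ordinary left-parallel reduct $L^n(M)$ is a $\beta$-normal form.

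First I would iterate the commutation Lemma \ref{commL}. That lemma states $L_r(T(M)) = T(L(M))$ for every $\lambda$-term $M$, where $L_r(T(M)) = \{L_r(\tm) \mid \tm \in \rexp{M}, L_r(\tm) \neq 0\}$. By a straightforward induction on $i$, iterating this identity yields $L_r^i(T(M)) = T(L^i(M))$ for every $i \le n$, provided the intermediate reducts stay non-zero; since $L_r^n(\tm) \in D_{NF}^+$ is in particular non-zero and $L_r$ cannot resurrect a zero term, each intermediate $L_r^i(\tm)$ for $i < n$ is also non-zero, so the membership $\tm \in \rexp{M}$ propagates to $L_r^i(\tm) \in \rexp{L^i(M)}$ along the whole chain. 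In particular, setting $i = n$, I obtain $L_r^n(\tm) \in \rexp{L^n(M)}$.

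Now I would invoke the forcing Lemma \ref{forcingL}, applied to the $\lambda$-term $L^n(M)$. By the previous step, $L_r^n(\tm)$ is an element of $\rexp{L^n(M)}$; by hypothesis it lies in $D_{NF}^+$, meaning it is simultaneously a positive resource term (in $\rigidTerms^+$) and a resource normal form. These are exactly the two conditions required by Lemma \ref{forcingL}, which then forces $L^n(M)$ to be a $\beta$-normal form. This completes the argument.

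The main obstacle, and the only point deserving care, is the bookkeeping in the iteration of Lemma \ref{commL}: the commutation identity is phrased set-theoretically and one must check that passing from the set equality $L_r(T(M)) = T(L(M))$ to the pointwise statement ``$\tm \in \rexp{M}$ and $L_r(\tm) \neq 0$ imply $L_r(\tm) \in \rexp{L(M)}$'' is sound at each step, and that the non-zero hypothesis holds all the way down the chain. Since $L_r^n(\tm)$ is assumed positive and normal, hence nonzero, and since a single $L_r$ step sends a nonzero term either to $0$ or to its expected reduct, monotonicity of ``being $0$'' under $L_r$ guarantees that every prefix $L_r^i(\tm)$ with $i < n$ is nonzero; thus the membership is preserved throughout and no degenerate case arises.
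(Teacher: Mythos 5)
Your proposal is correct and follows essentially the same route as the paper: iterate the commutation Lemma \ref{commL} to place $L_r^n(\tm)$ inside $\rexp{L^n(M)}$, then apply the forcing Lemma \ref{forcingL} to conclude that $L^n(M)$ is a $\beta$-normal form. In fact your write-up is more careful than the paper's own (which mis-attributes the membership step to Lemma \ref{forcingL} rather than Lemma \ref{commL}), and your explicit check that non-zeroness propagates along the reduction chain is exactly the bookkeeping the paper leaves implicit.
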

\begin{proof}
 Let $s\in \rexp{M}$ such that there exists $NF(\tm) \in \resTerms^{+}$. Then there exists $n\in \mathbb{N}$ such that $L_{r}^{n} = NF(\tm). $ By Lemma \ref{forcingL} $L_{r}^{n}(\tm) \in \rexp{L^{n}(M)}$. Then  $L^{n}(M)$ is the $\beta$-normal form of $M$.
\end{proof}

\begin{proposition} \label{pbnf}
Let $M \in \ndTerms$. If $M$ is $\beta$-normalizable then there exists $\tm \in \rexp{M}$ such that $ NF(\tm) \in \rigidTerms^{+}$.
\end{proposition}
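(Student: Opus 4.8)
The plan is to mirror the strategy of Proposition~\ref{phnf}: produce a well-chosen resource normal form living in the rigid expansion of a $\beta$-normal form of $M$, and then pull it back along the $\beta$-reduction chain by repeated anti-reduction. The new ingredient compared to the head case is that the approximant we pick must be \emph{positive}, so that its normal form lands in $\rigidTerms^{+}$ rather than merely being nonzero.

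First I would unfold the hypothesis: since $M$ is $\beta$-normalizable, there is a chain $M = M_0 \to_{\beta} M_1 \to_{\beta} \cdots \to_{\beta} M_n = N$ with $N$ a $\beta$-normal form. The core sub-lemma I need is the construction, by induction on the structure of $N$, of a resource term $b \in \rexp{N}$ that is simultaneously a resource normal form and positive. Writing $N$ in canonical form (Proposition~\ref{canonLa}) as $N = \lambda x_1 \dots \lambda x_m.\, x\, Q_1 \cdots Q_k$ — note that $N$ being a $\beta$-normal form forces the head to be a variable $x$ and each $Q_i$ to be itself a $\beta$-normal form — the induction hypothesis yields positive resource normal forms $b_i \in \rexp{Q_i}$. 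I then set $b = \lambda x_1 \dots \lambda x_m.\, \rappl{\cdots\rappl{x}{(b_1)}\cdots}{(b_k)}$, taking each argument list to be the singleton $(b_i)$. By construction $b \in \rexp{N}$; it is positive because every argument list $(b_i)$ is nonempty and each $b_i$ is positive; and it is a resource normal form because its head is a variable and the $b_i$ are normal. In particular $b \neq 0$.

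Next I would pull $b$ back along the chain. Starting from $b \in \rexp{M_n}$ and applying Lemma~\ref{antired1} to the step $M_{n-1} \to_{\beta} M_n$, I obtain $a_{n-1} \in \rexp{M_{n-1}}$ with $a_{n-1} \to_{r} b$; iterating down to $M_0$ produces $a = a_0 \in \rexp{M}$ together with a reduction sequence $a \to_{r}^{*} b$. Since $b$ is a resource normal form and $\to_{r}$ is confluent and strongly normalizing (Proposition~\ref{snconf}), $b$ is \emph{the} normal form of $a$, i.e.\ $\NF{a} = b$. As $b \in \rigidTerms^{+}$, this gives $\NF{a} \in \rigidTerms^{+}$ with $a \in \rexp{M}$, which is exactly the conclusion.

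The main obstacle is the inductive construction of the positive normal approximant $b$: one must check carefully that choosing singleton argument lists both preserves positivity all the way down and keeps $b$ inside $\rexp{N}$ and in normal form, which relies on the fact that every subterm of a $\beta$-normal form is again a $\beta$-normal form. By contrast, the anti-reduction step and the final appeal to confluence plus strong normalization are routine, being essentially the same moves already used in Proposition~\ref{phnf}.
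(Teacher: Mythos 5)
Your proof is correct and takes essentially the same route as the paper's: pick a positive resource normal form in $\rexp{N}$ for a $\beta$-normal form $N$ of $M$, pull it back along the reduction chain via Lemma~\ref{antired1}, and conclude that it is the normal form of the resulting approximant. The only difference is one of detail: you spell out the inductive construction of the positive normal approximant (via singleton argument lists) and the final appeal to confluence and strong normalization (Proposition~\ref{snconf}), both of which the paper's terse proof leaves implicit.
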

\begin{proof}
We consider the $N$, the normal form of $M$. By definition a $ \tmtwo \in \rexp{N}$ is a resource normal form. We choose a positive $\tmtwo$. Then, by Lemma \ref{antired1}, we can follow the anti-reduction and we can conclude.

\end{proof}

 \begin{theorem}
 \label{bnorm}
Let $M\in \ndTerms$. the following statements are equivalent:\\
(i) there exists $\tm \in \rexp{M}$ such that $ NF(\tm)\in \rigidTerms^{+} ;$\\
(ii) there exist $\tm \in \rexp{M} $ and $ m \in \mathbb{N} $ such that $ L_{r}^{m}(\tm) \in D_{NF}^{+}$;\\
(iii) $ L^{m}(M) $ is a $\beta$-normal form;\\
(iv) M is $\beta $-normalizable.\\

\end{theorem}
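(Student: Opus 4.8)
The plan is to prove the equivalence by closing the cycle $(i)\Rightarrow(ii)\Rightarrow(iii)\Rightarrow(iv)\Rightarrow(i)$, mirroring exactly the assembly already carried out for head normalization in Theorem \ref{hnorm}. All the real content has been packaged into the preceding lemmas and propositions, so the theorem itself should reduce to citing them in the right order.

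First I would treat $(i)\Rightarrow(ii)$. Given $\tm\in\rexp{M}$ with $\NF{\tm}\in\rigidTerms^{+}$, Lemma \ref{standL} supplies an $m\in\mathbb{N}$ with $L_{r}^{m}(\tm)=\NF{\tm}$; since this term is simultaneously a resource normal form and positive, it lies in $D_{NF}^{+}$, which is precisely $(ii)$. The implication $(ii)\Rightarrow(iii)$ is then nothing but Proposition \ref{impbeta}, and $(iv)\Rightarrow(i)$ is Proposition \ref{pbnf}.

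The only remaining link is $(iii)\Rightarrow(iv)$, which I expect to be routine: each application of $L$ is realized by a nonempty sequence of $\beta$-steps (the paper already records $M\twoheadrightarrow^{+}_{\beta}L(M)$), so a hypothesis that $L^{m}(M)$ is a $\beta$-normal form immediately yields a finite $\beta$-reduction chain $M\twoheadrightarrow^{*}_{\beta}L^{m}(M)$ ending in a normal form, i.e. $\beta$-normalizability of $M$. I do not anticipate a genuine obstacle at the level of the theorem; the one point deserving care is $(i)\Rightarrow(ii)$, where one must be sure that the normal form reached by iterating $L_{r}$ is the positive one provided by $(i)$ --- this is exactly what the equation $L_{r}^{m}(\tm)=\NF{\tm}$ of Lemma \ref{standL} guarantees, the substance having been discharged earlier through the commutation Lemma \ref{commL} and the forcing Lemma \ref{forcingL}.
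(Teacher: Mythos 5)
Your proposal is correct and follows exactly the paper's own proof: the cycle $(i)\Rightarrow(ii)\Rightarrow(iii)\Rightarrow(iv)\Rightarrow(i)$ is discharged by Lemma \ref{standL}, Proposition \ref{impbeta}, the triviality of $(iii)\Rightarrow(iv)$ (which you spell out correctly via $M\twoheadrightarrow^{+}_{\beta}L(M)$), and Proposition \ref{pbnf}, in the same order and with the same citations.
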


\begin{proof}
$(i)\Rightarrow (ii) $ is Lemma \ref{standL}. $(ii)\Rightarrow (iii) $ is Proposition $\ref{impbeta}. $ $(iii) \Rightarrow (iv) $ is trivial. Finally, $(iv) \Rightarrow (i)$ is Proposition $ \ref{pbnf}.$
\end{proof}

\section{Taylor expansion and rigid approximation}

In order to define the Taylor expansion of $\lambda$-terms we need to introduce another auxiliary language, the standard resource calculus. This calculus is just like the rigid resource calculus, where we replace lists with multisets.

	We  define the set of \definitive{resource terms} $\resTerms$
	and the set of \definitive{resource monomials} $\resMonomials$
	by mutual induction  as follows: 
\[
	\resTerms \ni s,t,u \recdef
		x \mid \lambda x. s \mid \rappl{s}{\bar{t}}		
	\qquad\qquad
	\resMonomials \ni \bar s,\bar t,\bar u \recdef
		[] \mid [s] \cdot \bar{t}
\]
We write $[s_1,\dotsc,s_n]$ for $[s_1]\cdot\ldots\cdot[s_n]\cdot[]$.
Monomials are then considered up to permutations and resource terms up to renaming of bound variables.
We call \definitive{resource expressions} the elements of $\resExprs=\resTerms\cup\resMonomials$.
For any resource expression $e$, we write $n_x(e)$ for the number of occurrences of variable $x$ in $e$.
If $A$ is a set, we write $\sums A$ for the set of finite formal sums of elements of $A$,
or equivalently the set of finite linear combinations of elements of $A$ with coefficients in $\N$.
We extend  the syntactical constructs of the resource calculus to finite sums of resource expressions by linearity:
e.g., $[s+t]\cdot \bar u=s\cdot\bar u+t\cdot\bar u$:

\begin{definition}
Let $ \sigma\in \mathbb{N}[\Delta_{\oplus}^{(!)}] $. We call $ \sigma $ a \emph{ finite term}. By linearity we extend  the syntactical constructs of the resource calculus to finite terms:
   \begin{itemize}
    \item if $ \sigma = \sum_{i=1}^{n} s_{i} $ and $ x \in \mathcal{V} $  we set $ \lambda x. \sigma = \sum_{i=1}^{n} \lambda x. s_{i} ; $

    \item if $ \sigma = \sum_{i=1}^{n} s_{i} $ and $ \bar{\tau} = \sum_{j=1}^{n} \bar{t}_{j} $ we set $ \langle \sigma \rangle \bar{\tau} = \sum_{i=1}^{n} \sum_{j=1}^{n} \langle s_{i} \rangle \bar{t}_{j} . $
   
    \item if $\sigma =  \sum_{i=1}^{n} s_{i}$ and $ \bar{\tau} = \sum_{j=1}^{n} \bar{t}_{j} $ we set $ [\sigma] \cdot \tau = \sum_{i=1}^{n} \sum_{j=1}^{n} [s_{i}] \cdot \bar{t}_{j}  $.
  \end{itemize}
  
\end{definition}

Let $ \Gamma \subseteq \Delta $. We set $ \sigma \subseteq \Gamma $ when $ supp(\sigma) \subseteq \Gamma. $

\begin{definition}
  Let $ e \in \resExprs $, $ \bar{u}= [u_{1},\dotsc,u_{n}] \in \resTerms^{!} $ and $ x\in \mathcal{V}$.
	We  define  the \definitive{$n$-linear substitution} $ \nsubst ex{\bar u}$
	of $ \bar{u} $ for $ x $ in $ e $ as follows:
	\[
		\nsubst ex{\bar u} = \begin{cases}
			\sum\limits_{\sigma\in \mathfrak{S}_{n}} e[ u_{\sigma(1)}/x_{1},\dotsc, u_{\sigma(n)}/x_{n} ] & \text{ if } n_{x}(e) = n
			\\
			0 & \text{ otherwise}
		 \end{cases}
	\]
	where $x_1,\dotsc,x_{n_x(e)}$ enumerate the occurrences of $x$ in $e$.
\end{definition}

\begin{figure}[t]
\begin{center}
	\begin{prooftree}\infer0{\rappl{\lambda x.s}{\bar t}\to_\partial \nsubst sx{\bar t}}\end{prooftree}
	\qquad
	\bigskip
	\begin{prooftree}\hypo{s\to_\partial \sigma'}\infer1{\lambda x.s\to_\partial\lambda x.\sigma'}\end{prooftree}
	\hfill
	\begin{prooftree}\hypo{s\to_\partial \sigma'}\infer1{\rappl s{\bar t}\to_\partial\rappl{\sigma'}{\bar t}}\end{prooftree}
	\hfill
	\begin{prooftree}\hypo{\bar s\to_\partial \bar \sigma'}\infer1{\rappl{t}{\bar s}\to_\partial \rappl{t}{\bar \sigma'}}\end{prooftree}
	\hfill
	\begin{prooftree}\hypo{s\to_\partial \sigma'}\infer1{[s]\cdot{\bar t}\to_\partial[\sigma']\cdot{\bar t}}\end{prooftree}
	\qquad
\begin{prooftree}\hypo{\bar s\to_\partial \bar \sigma'}\infer1{[t]\cdot\bar s\to_\partial [t]\cdot \bar \sigma'}\end{prooftree}
\end{center}
\caption{Reduction rules of the resource calculus with sums}
\label{fig:resred}
\end{figure}

The reduction of the resource calculus is the relation from 
resource expressions to finite formal sums of resource expressions 
induced by the rules of Figure \ref{fig:resred}.

We define the representation relation $\lhd \subseteq D^{(!)}\times \resExprs$ by the following rules:
\[
{
\begin{prooftree}
\infer0[]{ x \lhd x  }
\end{prooftree}
}
\quad
{
\begin{prooftree}
\hypo{  a \lhd s }
\infer1[]{  \lambda x. a \lhd \lambda x. s }
\end{prooftree} 
}
\quad
{
\begin{prooftree}
\hypo{  c\lhd s}
\hypo{ \ \vec{d}\lhd \bar{t}}
\infer2[]{  \langle c \rangle \vec{d} \lhd \langle s \rangle \bar{t} }
\end{prooftree}
}
\]\[{
\begin{prooftree}
\hypo{\sigma \in \mathfrak{S}_{n}\quad a_{1}\lhd t_{\sigma(1)}\quad\cdots\quad a_{n}\lhd t_{\sigma(n)}}
\infer1[]{  (a_{1},\dotsc, a_{n}) \lhd [t_{1},\dotsc, t_{n}]   }
\end{prooftree} 
\quad.
}
\]

We can extend the representation relation to linear combination of resource terms:  $a \lhd \sigma $ if there exists $ s \in supp(\sigma) $ such that $a \lhd s. $

We set the rigid expansion of $s$ as $ T_{r}(s) = \{ a\in D^{(!)} \mid a \lhd s \}.$ The rigid expansion of a resource term is an equivalence class of rigid resource terms.

We set \[{\nsubst {T(M)}{x}{T(N)^{!}}} = \bigcup\limits_{s \in T(M), \bar{t} \in T(N)^{!}} supp(\nsubst {s}{x}{\bar{t}}) .  \]

\begin{lemma}[Substitution]\label{subres}
Let $ M, N \in \Lambda ,$ we have that $ T(\subst {M} {x}{N}) = \nsubst {T(M)}{x}{T(N)^{!}} .$

\end{lemma}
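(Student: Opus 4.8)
The plan is to reduce the statement to its rigid counterpart, Lemma~\ref{Taysub}, through the representation relation $\lhd$, rather than rerunning a structural induction directly on the multiset syntax. The ingredient that glues the two worlds together is the observation that the rigid expansion $\rexp{M}$ is closed under reordering of the lists occurring inside it: by the inductive definition of $\rexp{\cdot}$ it is a union of whole $\lhd$-equivalence classes. Consequently $s\in\taysup{M}$ holds iff $a\lhd s$ for some $a\in\rexp{M}$, and whenever $a\lhd s$ with $s\in\taysup{M}$ one already has $a\in\rexp{M}$. This is the dictionary that lets me pass freely between membership in $\taysup{\,\cdot\,}$ and membership in $\rexp{\cdot}$ modulo $\lhd$; I would establish it by a routine mutual induction on $M$.

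The technical heart is a sub-lemma relating the two substitutions: for rigid $a\lhd s$ and $\vec b\lhd\bar t$ (with $\vec b\in\rigidMonomials$, $\bar t\in\resMonomials$), the rigid substitution $a[\vec b/x]$, when nonzero, satisfies $a[\vec b/x]\lhd v$ for some $v\in\mathrm{supp}(\nsubst sx{\bar t})$; and, conversely, every $v\in\mathrm{supp}(\nsubst sx{\bar t})$ is represented by some such $a[\vec b/x]$. The intuition is that fixing the enumeration of the occurrences of $x$ in a chosen representative $a$ and then ranging over all the orderings $\vec b$ of the multiset $\bar t$ reproduces exactly the sum over $\mathfrak{S}_{n}$ in the definition of $\nsubst ex{\bar u}$, while the guard $n_x(a)=\length{\vec b}$ built into rigid substitution matches the vanishing case $n_x(e)\neq n$ in the definition of $\nsubst ex{\bar u}$. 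I would prove this by induction on $a$ (equivalently on $s$), the only delicate clause being the linear application, where $\bar t$ must be split across operator and argument precisely as in the rigid clause $(\langle s\rangle\vec b)\{\vec b_0\cons\vec b_1/x\}$.

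With the dictionary and the sub-lemma in hand, the proof collapses to a chain of equivalences on supports. Unfolding $\nsubst{\taysup M}x{\taysup N^{!}}=\bigcup_{s\in\taysup M,\ \bar t\in\taysup N^{!}}\mathrm{supp}(\nsubst sx{\bar t})$ and applying the sub-lemma rewrites its right-hand side as the set of $v$ represented by some $a[\vec b/x]$ with $a\lhd s\in\taysup M$ and $\vec b\lhd\bar t\in\taysup N^{!}$; by the closure property of $\rexp{\cdot}$ this is the set of $v$ represented by some element of $\subst{\rexp M}x{\rexp N}$. Lemma~\ref{Taysub} identifies $\subst{\rexp M}x{\rexp N}$ with $\rexp{\subst MxN}$, and translating back through $\lhd$ turns this into $\taysup{\subst MxN}$, which is the required equality.

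I expect the main obstacle to lie entirely in the sub-lemma on the two substitutions: one has to verify that collapsing list order in rigid substitutions yields precisely the support of the $n$-linear substitution, hitting every permutation summand and respecting the nonzero-coefficient condition. This is where the combinatorics of $\mathfrak{S}_{n}$ and the list-to-multiset quotient interact, and keeping the bookkeeping honest across the application case is the one genuinely non-routine point; everything else is either already contained in Lemma~\ref{Taysub} or follows from the closure of $\rexp{\cdot}$ under reordering.
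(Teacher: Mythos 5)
Your proposal is correct, but it takes a genuinely different route from the paper's. The paper proves Lemma \ref{subres} by a direct structural induction on $\subst{M}{x}{N}$ carried out entirely inside the multiset syntax: each case unfolds the definition of the $n$-linear substitution and argues by double inclusion, with the partition combinatorics of the application case handled inline. You instead factor the statement through the rigid world: a dictionary lemma stating that $\rexp{M}$ is a union of whole $\lhd$-classes and that $s\in\taysup{M}$ iff some representative of $s$ lies in $\rexp{M}$, plus a transfer sub-lemma matching the nonzero rigid substitutions $a[\vec b/x]$ (for $a\lhd s$, $\vec b\lhd\bar t$) against the support of $\nsubst{s}{x}{\bar t}$, after which the already-proved rigid Lemma \ref{Taysub} finishes the job. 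This is legitimate --- there is no circularity, since Lemma \ref{Taysub} is established in Section 3 independently of the multiset calculus --- and it buys something the paper only gestures at: it turns the introduction's informal claim that an element of the Taylor expansion ``is'' an equivalence class of rigid terms into a working lemma, and it avoids repeating over $M$ the induction already performed for Lemma \ref{Taysub}. The cost is that the genuinely combinatorial content does not disappear: the correspondence between the sum over $\mathfrak{S}_{n}$ and the choice of an ordering of $\bar t$, and between the partitions $(I_{0},I_{1})$ of the $n$-linear application clause and the deterministic split $\vec b_{0}\cons\vec b_{1}$ of the rigid one, is exactly the content of your transfer sub-lemma, so the total work is comparable to the paper's inline induction; you correctly identify this as the one non-routine point. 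Note finally that both your dictionary lemma and the paper's own proof rest on the standard inductive definition of $\taysup{M}$, which the paper uses but never states explicitly, so this reliance is not a defect of your approach relative to the paper's.
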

\begin{proof} We prove the Lemma by induction on the definition of $M[N/x]$. 

 if $ M = x $  , then $ M [N/x] = N$. Thus we have to prove that $ T(N) = \bigcup_{ \bar{t}\in T(N)^{!}}  supp( \partial_{x} x \cdot \bar{t} ).  $ 
By double inclusion, if $ t\in T ( N) $ , lets consider $[t] \in T( N)^{!} $.
By definition of $n$-linear substitution we have that $ t\in supp(\partial_{x} x \cdot [t]) $. 
We prove now the other inclusion. For $ t\in \bigcup_{ \bar{t}\in T(N)^{!}}  supp( \partial_{x} x \cdot \bar{t} ) $, there exists $\bar{t}\in T(N)^{!} $ 
such that $ t\in supp(\partial_{x} x \cdot \bar{t}) $. 
By definition of $n$-linear substitution, $ \bar{t} = [t] $.  Then $ t\in T (N) $. 

 If $ M = y $ with $ y\neq x $ the result follows immediately 
from the definition of $n$-linear substitution. \par
If $ M = \lambda y. M' $ with $ y = x $ the result follows immediately from the definition of $n$-linear substitution.

 If $ M = \lambda y. M' $ with $ y \neq x $, let 
$ p\in T ( \lambda y. M'[N/x]) $. By definition of the Taylor expansion  there exists 
a $ p'\in T (M' [N/x]) $ such that $ p = \lambda y. p' $. By IH we have that $ T (M' [N/x]) = \bigcup_{ s\in T(M'), \bar{t}\in T(N)^{!}}  supp(\partial_{x} s \cdot \bar{t} )  $. Thus there exists $ s\in T (M')$ and a $\bar{t}\in T(N)^{!} $ such that $ p'\in supp(\partial_{x} s \cdot \bar{t} ) $. 
Therefore, by definition of   $n$-linear substitution,  $ p\in supp(\partial_{x} \lambda y. s\cdot \bar{t}) $.

 Conversely, let $ p\in \bigcup_{ s\in T(\lambda y. M'), \bar{t}\in T(N)^{!}}  supp(\partial_{x} s \cdot \bar{t} ) $. We have that, by definition, $p$ is of the form $ \lambda y. p' $, for some resource term $ p'$. There exists then a $s\in T (\lambda y. M') $ and a $ \bar{t} \in T (N)^{!} $ such that $ p\in supp(\partial_{x} s \cdot \bar{t} ) $.
 By definition of $n$-linear substitution, by IH and by the fact that there exists $ s'\in T ( M')$ such that $ s = \lambda y. s' $, we have that $ p'\in  supp( \partial_{x} s' \cdot \bar{t} ) \subseteq \bigcup_{ s\in T(M'), \bar{t}\in T(N)^{!}}  supp(\partial_{x} s \cdot \bar{t} ) = T (M' [N/x])  $ . Then, by definition, we can conclude that $ p = \lambda y. p' \in T (\lambda y. M'[N/x]) $. \par
If $ M = PQ $ let $ p\in T(PQ [N/x]) = T( P [N/x] Q [N/x]) $. Then, by definition of the Taylor expansion, there exists $ p' \in T (P[N/x]) $ and  $  \bar{q}  \in T (Q[N/x])^{!} $ such that $ p = \langle p'\rangle \bar{q} $.
By IH we have that $ p' \in T (P [N/x]) = \bigcup_{ s\in T(P), \bar{t}\in T(N)^{!}}  supp(\partial_{x} s \cdot \bar{t} )  $ and that \[ \bar{q} \in T (Q [N/x])^{!} = \left( \bigcup_{ s\in T(Q), \bar{t}\in T(N)^{!}}  supp(\partial_{x} s \cdot \bar{t} ) \right)^{!} .\]
Let $ \bar{q} = [q_{1},..., q_{n}].$
Then there exists $s_{0} \in T ( P ) $, there exists $ \bar{t}_{0} \in T ( N)^{!} $ and, for all $ i\in \{1,..., n\} $, there exist $ s_{i} \in T ( Q) $ and $ \bar{t}_{i} \in T (N) $ such that 
$ p' \in supp(\partial_{x} s_{0} \cdot \bar{t}_{0} ) $, and $ q_{i} \in supp( \partial_{x} s_{i} \cdot \bar{t}_{i}) $.
Taking $ \bar{t}= \sum_{i = 0}^{n} \bar{t}_{i} $ and $ \bar{s} = [s_{1}, ..., s_{n}] $  we have, from the definition of $n$-linear substitution, that 
$ \langle p' \rangle \bar{q} \in supp( \partial_{x} \langle s_{0} \rangle \bar{s} \cdot \bar {t} ) $.\par
Conversely, let $ p\in \bigcup_{ s\in T( P Q), \bar{t}\in T(N)^{!}}  supp(\partial_{x} s \cdot \bar{t} ) $. 
Then there exists $ s' \in T ( P) $ and a $ \bar{q} \in T(Q)^{!} $ such that $ p \in supp( \partial_{x} \langle s' \rangle \bar{q} \cdot 
\bar{t}) $ . Moreover, since $ \bar{t} = [ t_{1}, ..., t_{n} ], $ 
by definition of $n$-linear substitution we have that \[ \partial_{x} \langle s' \rangle \bar{q} \cdot \bar{t} = \sum_{I_{0}, I_{1}  \in
 Partitions (\{ 1, ..., n \})} \partial_{x} \langle s'  \cdot \bar{t}_{I_{0}} \rangle \partial_{x} \bar{q} \cdot \bar{t}_{I_{1}} .\]  Hence $ p = \langle p' \rangle \bar{q}' $ for some $ p'\in 
 supp(\partial_{x} s' \cdot \bar{t}_{I_{0}}) $ and $ \bar{q}' \in supp( \partial_{x} \bar{q} \cdot \bar{t}_{I_{1}}) $. 
 From the IH we know that $ T (P [N/x]) = \bigcup_{ s\in T(P), \bar{t}\in T(N)^{!}}  supp(\partial_{x} s \cdot \bar{t} )  $ and that $  T (Q [N/x]) = \bigcup_{ s\in T(Q), \bar{t}\in T(N)^{!}}  supp(\partial_{x} s \cdot \bar{t} )  $.
 Thus $ p' \in T (P [ N /x ] ) $ and $ \bar{q}' \in T ( Q [ N / x] )^! $. By definition of  the Taylor
 expansion of an application  $ p = \langle p'\rangle \bar{q}' \in T ( P [ N/x] Q[N/x])$. 

\end{proof}

\section{Strong normalization } \label{strong}

For strong normalisation we switch form rigid approximation to Taylor expansion.\footnote{The rigid approximation fails confluence of the extended non-erasing reduction, that is at the heart of our proof. The failure of confluence is interesting, since depends completely on the rigidity of the calculus (Section \ref{fail}). }

\subsection{Non-erasing reduction}

\begin{definition}
We define $ {\to^{\neg e}} \subseteq \Lambda\times \Lambda$ by induction as follows:
 \begin{itemize}
  \item $(\lambda x. M)N \to^{\neg e} M[N/x]$ if $ x \in FV(M)$;
  \item $\lambda x. M \to^{\neg e} \lambda x. M'$ if $M \to^{\neg e} M';$
  \item $PQ \to^{\neg e} P'Q$ if $ P\to^{\neg e} P' $ 
  \item $PQ\to^{\neg e} PQ'$ if $ Q\to^{\neg e} Q'.$
  \end{itemize}
  \end{definition}

A $\lambda$-term $M$ of the shape $ (\lambda x. M)N$ with $ x \in FV(M)$ is called a redex. A normal form for the non-erasing reduction is a $\lambda$-term $M$ that does not have redexes as subterms. A $\lambda$-term $M$ is called normalizable if there exist $M_{1},..., M_{n}$ such that $ M = M_{0} \to^{\neg e} M_{1} ... \to^{\neg e} M_{n} = N$ with $N$ being a normal form for the non-erasing reduction. We trivially have that $ \to^{\neg e} \subseteq \to_{\beta}.$

To understand the meaning of our definition, we can consider some example of non erasing reductions: 

\begin{example}

Non-erasing reduction at work: 
\begin{itemize}
\item $(\lambda x. y) \Omega$ is not non-erasing normalizable, since the variable $x$ is not free in the term $y$. However the term is clearly $\beta$-normalizable;
\item $ (\lambda x. y) z $ is a non-erasing normal form. Clearly it is not a $\beta$-normal form, since it contains a $\beta$-redex. 
\end{itemize}

\end{example}

At this point we could hope that non-erasing reduction characterizes strong normalization. However this is not at all the case.

 Let $M = ((\lambda y. \lambda x. x x) z) \lambda x. x x.$ Then $M$ is by definition a non-erasing normal form, but it is not even $\beta$-normalizable: 

\[ M \to_{\beta} (\lambda x. x x) \lambda x. x x \]

And trivially $(\lambda x. x x) \lambda x. x x = \Omega$ is not $\beta$-normalizable. 
\\

To solve this problem, we follow the path of linear logic. As presented in \cite{regnier:sigma}, MELL proof-nets induces a new kind of reduction on $\lambda$-terms, the so-called $\sigma $-reduction.  $\lambda$-calculus syntax induces a strict and unnecessary order on redexes. The $\sigma $-rules then grant some commutations of redexes that "free" $\lambda$-terms from this purely syntactical constraints. 

We define $ \to_{\sigma} \in \Lambda \times \Lambda $ as the contextual extension of the following rule:

\begin{gather*}
((\lambda x. M) N ) P \to_{\sigma 1} ( \lambda x. M P ) N \text{ if } x \notin FV(P)\\
\end{gather*}

Then we set $ {\to_{\epsilon}} = { \to_{\beta} \cup \to_{\sigma}}$.

Secularly, we define the erasing reduction as follows:

\begin{definition}
We define $ {\to^{e}} \subseteq \Lambda\times \Lambda$ by induction as follows:
 \begin{itemize}
  \item $(\lambda x. M)N \to^{ e} M[N/x]$ if $ x \notin FV(M)$;
  \item $\lambda x. M \to^{e} \lambda x. M'$ if $M \to^{ e} M';$
  \item $PQ \to^{ e} P'Q$ if $ P\to^{ e} P' $ 
  \item $PQ\to^{ e} PQ'$ if $ Q\to^{ e} Q'.$
  \end{itemize}
  \end{definition}

%(\lambda x. \lambda y. M) N  \to_{\sigma 2} \lambda y. ((\lambda x. M)N) \text{ if } y \notin FV(N) 

\subsection{Taylor expansion and non-erasing reduction}

In order to achieve a strong normalisation Theorem we have to switch from rigid terms to standard resource terms. The problem with rigid terms is indeed their rigidity: if we extend the calculus with $ \sigma_1$ we get a non confluent calculus in a very ba sense (see section ). On the contrary, standard resource calculus does not fail confluence, thanks to its intrinsic "non-deterministic" nature (see section \ref{A}.)\footnote{This problem could have been solved also changing the syntax of rigid terms or switching to polyhadic calculus \cite{mazza:pol}. However we preferred to stick on the Taylor expansion of $ \lambda -$terms, since our work is inspired mostly from that framework.}

We firstly extend the notion of $ \epsilon-$reduction to the resource calculus: 

\begin{definition}
We define $ \to_{\partial \sigma} \in \resExprs \times \resExprs $ as the contextual extension of the following rule:

\[ \rappl{\rappl{\lambda x. s} {\bar{t}}} {\bar{q}} \to_{\partial \sigma 1} ( \rappl{\lambda x. \rappl {s} {\vec{q}} } {\bar{t}} \text{ if } x \notin FV(\bar{q}). \]

%$\rappl{\lambda x. \lambda y. s} {\vec{t}}  \to_{\sigma 2} \lambda y. \rappl{\lambda x. s}{\vec{t}} \text{ if } y \notin FV(\vec{t})  $

\end{definition}

Then we set $ \tope = { \to_{\partial} \cup \to_{\partial \sigma}}$.

\begin{lemma} \label{snc}
The reduction $\tope$ is strongly normalizing. 
\end{lemma}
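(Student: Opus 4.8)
The plan is to decompose $\tope = {\to_\partial}\cup{\to_{\partial\sigma}}$ and control the two relations by a single lexicographic measure. The first observation is about \emph{size}: exactly as in the proof of Proposition \ref{snconf} (and following \cite{er:tay}), every $\to_\partial$ step strictly decreases the size of resource expressions, while a direct computation shows that the $\sigma_1$ rule $\langle\langle\lambda x. s\rangle \bar t\rangle \bar q \to_{\partial\sigma} \langle\lambda x. \langle s\rangle \bar q\rangle \bar t$ \emph{preserves} size (both sides have the same multiset of constructors, only rearranged). Consequently it suffices to equip resource expressions with a second measure $\mu$ that is strictly decreased by $\to_{\partial\sigma}$: the pair $(\mathrm{size}(e),\mu(e))$ then strictly decreases under $\tope$ in the lexicographic order, and this lifts to finite sums by taking the multiset extension of that order (a $\to_\partial$ step replaces one summand by several strictly smaller ones, a $\to_{\partial\sigma}$ step replaces a summand by one of equal size but smaller $\mu$).

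The crux is therefore to show that $\to_{\partial\sigma}$ alone is strongly normalizing, and the naive measure (counting redexes blocked by an enclosing application on the spine) fails: when the argument $\bar q$ is pushed inside the abstraction, the enclosing application can come to block a \emph{deeper} abstraction that was previously unblocked, producing a net change of zero. I would fix this by weighting each blocking application by its \emph{depth complement}. Writing $A(e)$ for the number of abstractions occurring in $e$ (invariant under $\sigma_1$), calling an application node $\langle s\rangle\bar u$ \emph{blocking} when $s$ is itself an application whose function spine ends in an abstraction, and writing $d(n)$ for the number of abstractions on the path from the root of $e$ to the node $n$, I set
\[
\mu(e)=\sum_{n\ \text{blocking}} \bigl(A(e)+1-d(n)\bigr).
\]
Since $d(n)\le A(e)$ each weight is at least $1$, so $\mu(e)\in\mathbb{N}$. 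The point of the depth complement is that when the application carrying $\bar q$ descends below $\lambda x$ its depth rises from $d$ to $d+1$, so even if it becomes blocking for the spine-head of $s$ its contribution $A+1-(d+1)$ is strictly smaller than the contribution $A+1-d$ it lost as the blocker of the redex $\langle\lambda x.s\rangle\bar t$; a case analysis on the spine-head of $s$ (abstraction, variable, or application) shows that $\mu$ drops by at least $1$ at the contracted redex, while the depth shift of the material inside $\bar q$ can only decrease the remaining summands.

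The main work, and the step I expect to be the real obstacle, is the bookkeeping in this last verification: one must check that contracting a $\sigma_1$-redex inside an arbitrary context leaves the blocking status and depth of every node outside the redex unchanged (so that $\mu$ behaves additively under context closure), track the uniform depth shift of the subterms $s$, $\bar t$, $\bar q$, and confirm that $A(e)$ is genuinely preserved. The side condition $x\notin FV(\bar q)$ plays no role in termination—it only restricts which redexes fire—so it can be ignored for this argument. Once the inequality $\mu(e)>\mu(e')$ for $e\to_{\partial\sigma} e'$ is established, strong normalization of $\tope$ follows from well-foundedness of the lexicographic and multiset orders.
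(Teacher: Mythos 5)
Your proof is correct, but it follows a genuinely different route from the paper's. The paper argues component-wise: $\to_\partial$ is strongly normalizing by the size argument of Proposition \ref{snconf}, $\to_{\partial\sigma}$ is asserted strongly normalizing by a height argument, and the union is then handled by transposing Lemma 3.4 of \cite{regnier:sigma} to resource terms --- a commutation-type result that is genuinely needed on that route, since a union of two strongly normalizing relations need not be strongly normalizing. You sidestep the commutation lemma entirely by exploiting the exact size-invariance of the $\sigma_1$ rule: the lexicographic pair $(\mathrm{size},\mu)$, lifted to finite sums by the multiset order, decreases at every $\tope$ step, so arbitrary interleavings of $\to_\partial$ and $\to_{\partial\sigma}$ are handled automatically. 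Your measure does verify: at a contracted redex $\langle\langle\lambda x. s\rangle\bar t\rangle\bar q \to \langle\lambda x. \langle s\rangle\bar q\rangle\bar t$ the outer application loses its blocking status (weight $A+1-d$), the only possibly new blocking node $\langle s\rangle\bar q$ sits at depth $d+1$ and so contributes at most $A-d$, nodes in $\bar t$ and $s$ keep their depths, nodes in $\bar q$ only gain depth, and the blocking status of every ancestor is preserved because both the redex and its contractum are applications whose function spine ends in an abstraction; together with $d(n)\le A(e)$ this gives $\mu(e')\le\mu(e)-1$. What your approach buys is self-containedness: the transposition of Regnier's lemma to resource expressions is left implicit in the paper and would itself require proof, and the paper's parenthetical claim that height decreases under $\sigma$-reduction is in fact false as stated (pushing a tall argument $\bar q$ under the abstraction can increase the height), a failure you explicitly identify and repair with the depth-complement weighting. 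What the paper's approach buys is brevity and reuse of known results; your argument is longer but stands on its own and could replace the citation outright.
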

\begin{proof}
Strong normalisation derives form the fact that both $\to_{r\epsilon}$ and $\to_{r\sigma}$(the height of terms is decreasing) are strongly normalisable and by a transposition of Lemma 3.4 of \cite{regnier:sigma} to resource terms.
\end{proof}

We extend the non erasing reduction to the resource calculus:

\begin{definition}
We define $ \to_{\partial}^{\neg e} \in \resExprs \times \resExprs $ as the contextual extension of the following rule:

\[  \rappl{\lambda x. s}{\bar{t}} \to_{\partial}^{\neg e}\text{ if } x \in FV(s) \]

\end{definition}

We set $ {\tope^{\neg e}} = {\to_{\partial}^{\neg e}} \cup  {\to_{\partial \sigma}}. $

We extend also the erasing reduction:

\begin{definition}
We define $ \to_{\partial}^{\neg e} \in \resExprs \times \resExprs $ as the contextual extension of the following rule:

\[  \rappl{\lambda x. s}{\bar{t}} \to_{\partial}^{e} \text{ if } x \notin FV(s) \]

\end{definition}

\begin{lemma} \label{switch}
Let $s,t,u \in \resTerms$. If $ s \to^{e}_{\partial} t$ and  $t\twoheadrightarrow_{\partial \epsilon }^{\neg e} u$ then there exists $ t' \in \resTerms$ such that $ s\twoheadrightarrow_{\partial \epsilon}^{\neg e} t'$ and $t' \to^{e}_{\partial} u$. 
\end{lemma}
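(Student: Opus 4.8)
```latex
The statement is a local commutation (or ``swap'') lemma: it says that an erasing step $s\to^e_\partial t$ can be postponed past a sequence of non-erasing $\epsilon$-steps $t\twoheadrightarrow^{\neg e}_{\partial\epsilon}u$. The natural strategy is to prove the single-step version first and then lift it to the multi-step version by induction on the length of the reduction $t\twoheadrightarrow^{\neg e}_{\partial\epsilon}u$. So the plan is: reduce the lemma to the claim that if $s\to^e_\partial t$ and $t\to^{\neg e}_{\partial\epsilon}u$ (one step), then there is a $t'$ with $s\twoheadrightarrow^{\neg e}_{\partial\epsilon}t'$ and $t'\to^e_\partial u$. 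The base case (empty reduction $t=u$) is immediate by taking $t'=s$, and the inductive step chains the single-step diagram with the induction hypothesis.

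For the single-step diagram I would proceed by analysing the relative positions of the erasing redex $R_e$ fired in $s\to^e_\partial t$ and the non-erasing (or $\sigma$) redex $R$ fired in $t\to^{\neg e}_{\partial\epsilon}u$. The key observation is that $R_e$ is of the form $\rappl{\lambda x.p}{\bar q}$ with $x\notin FV(p)$, so firing it simply deletes the monomial $\bar q$ (up to the combinatorics of the $n$-linear substitution, which here is trivial since $n_x(p)=0$ forces $\bar q=[\,]$ for a nonzero contribution). Because the erased redex contributes nothing to the bound variable, the redex $R$ in $t$ either lies in a part of $s$ untouched by $R_e$ (disjoint case), or $R$ was created by the erasing step, or $R$ overlaps $R_e$. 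In the disjoint case the two steps trivially commute and one fires $R$ first in $s$, then the erasing step. The delicate cases are when $R$ sits inside the residual of the erasing redex or when the erasing step exposes $R$; here one uses that a non-erasing/$\sigma$ redex cannot be genuinely created by erasing an argument in which the bound variable does not occur, so a corresponding redex $R'$ already exists in $s$, and firing $R'$ first yields the required $t'$.

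The main obstacle I expect is the interaction with the $\sigma_1$-rule inside $\to_{\partial\sigma}$, rather than the plain $\beta$ (i.e.\ $\to_\partial$) part. The $\sigma_1$-commutation $\rappl{\rappl{\lambda x.s}{\bar t}}{\bar q}\to_{\partial\sigma 1}\rappl{\lambda x.\rappl{s}{\bar q}}{\bar t}$ reshuffles the application structure, so an erasing redex that was ``outermost'' before the $\sigma$-move may be relocated, and one must check carefully that the side condition $x\notin FV(\bar q)$ is preserved and that the erasing redex remains erasing after the swap. The cleanest way to handle this is to set up the single-step lemma as an explicit case distinction on the pair (kind of $R_e$, kind of $R$), verifying in each case the local diagram, and to lean on the confluence and strong normalisation already available for the resource calculus (Proposition \ref{snconf}, Lemma \ref{snc}) to guarantee that the residuals behave well.

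\begin{proof}
We first prove the single-step version: if $s\to^e_\partial t$ and $t\to^{\neg e}_{\partial\epsilon}u$, then there exists $t'$ with $s\twoheadrightarrow^{\neg e}_{\partial\epsilon}t'$ and $t'\to^e_\partial u$. This is established by inspecting the positions of the erasing redex fired in $s\to^e_\partial t$ and of the redex fired in $t\to^{\neg e}_{\partial\epsilon}u$, in each case closing the local diagram. The general statement then follows by induction on the length of the reduction $t\twoheadrightarrow^{\neg e}_{\partial\epsilon}u$: the base case is trivial and the inductive step composes the single-step diagram with the induction hypothesis.
\end{proof}
```
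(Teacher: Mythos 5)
Your overall scaffolding (prove a one-step commutation, then induct on the length of $t\twoheadrightarrow_{\partial \epsilon }^{\neg e} u$) is the same as the paper's, which also proceeds by induction on the length of the reduction and closes local diagrams. But the one substantive claim you make about the ``delicate case'' is false, and it is precisely the crux of the lemma. You assert that a non-erasing/$\sigma$ redex \emph{cannot} be genuinely created by an erasing step, so that a corresponding redex $R'$ already exists in $s$ and firing it first closes the diagram. This is exactly backwards. Take
\[
s=\rappl{\rappl{\lambda y.\lambda x.v}{[]}}{\bar q},\qquad y\notin FV(v),\quad x\in FV(v),\quad v,\bar q \text{ normal}.
\]
Then $s\to^{e}_{\partial} t=\rappl{\lambda x.v}{\bar q}$ and $t\to^{\neg e}_{\partial} u=\nsubst{v}{x}{\bar q}$, yet the $\beta$-redex fired in $t$ does not occur anywhere in $s$: it was created by the erasure. (This phenomenon is the very reason the paper introduces $\sigma$-reduction at all; compare its counterexample $((\lambda y.\lambda x.xx)z)\lambda x.xx$, a non-erasing normal form that is not $\beta$-normalizable.) So your resolution of the overlap case fails, and since your formal proof body only says ``inspect positions and close each diagram'' without performing the case analysis, the gap is not repaired elsewhere.

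The missing idea --- and the reason the conclusion of the lemma uses $\twoheadrightarrow_{\partial \epsilon}^{\neg e}$, which includes $\to_{\partial\sigma 1}$, rather than a single non-erasing step --- is that the creation cases are closed by first performing $\sigma_1$ steps on $s$. In the example above,
\[
s\to_{\partial\sigma 1}\rappl{\lambda y.\rappl{\lambda x.v}{\bar q}}{[]}\to^{\neg e}_{\partial}\rappl{\lambda y.\nsubst{v}{x}{\bar q}}{[]}\to^{e}_{\partial}u,
\]
which is exactly the first of the paper's two key diagrams; the second, analogous diagram handles the case where the erasure creates a $\sigma_1$ redex instead of a $\beta$ redex (there the closing square uses only $\sigma_1$ steps). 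Your third paragraph treats $\sigma_1$ only as a complication (an erasing redex getting relocated), not as the tool required to close the creation diagrams. Finally, the appeal to confluence and strong normalisation of $\tope^{\neg e}$ is neither needed nor helpful here: the diagrams are purely local and syntactic, and in the paper confluence of $\tope^{\neg e}$ is established independently, after this postponement argument.
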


\begin{proof}
See Section \ref{post}.
\end{proof}

 \begin{lemma} \label{snce}
The reduction $\tope^{\neg e}$ is strongly normalizing and confluent. 
\end{lemma}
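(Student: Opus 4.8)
The goal is to prove Lemma~\ref{snce}: the reduction $\tope^{\neg e}$ is strongly normalizing and confluent. The plan is to bootstrap both properties from the already-established results about the larger reduction $\tope$ (Lemma~\ref{snc}) together with the erasing/non-erasing decomposition machinery of Lemma~\ref{switch}.

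For strong normalization, the cleanest approach is to observe that $\tope^{\neg e}$ is a restriction of $\tope$: every $\to_{\partial}^{\neg e}$ step is in particular a $\to_\partial$ step, and $\to_{\partial\sigma}$ is common to both. Hence any infinite $\tope^{\neg e}$-reduction sequence would be an infinite $\tope$-reduction sequence, contradicting Lemma~\ref{snc}. This gives strong normalization almost for free, so the real content of the lemma is confluence. First I would record this inclusion $\tope^{\neg e}\subseteq\tope$ explicitly, since it is what licenses transporting the ambient strong normalization down to the restricted relation.

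For confluence, since we already have strong normalization, by Newman's Lemma it suffices to prove \emph{local} confluence of $\tope^{\neg e}$. I would proceed by a critical-pair analysis: given a resource expression $s$ with $s\to_{\partial\epsilon}^{\neg e} u_1$ and $s\to_{\partial\epsilon}^{\neg e} u_2$, examine how the two contracted redexes overlap. The disjoint and nested cases are routine (the two steps commute back to a common reduct in one step each), and the only delicate overlaps are those between a non-erasing $\to_{\partial}^{\neg e}$ redex and a $\to_{\partial\sigma 1}$ commutation redex, which is exactly the kind of interaction controlled by a transposition of the $\sigma$-reduction confluence argument of~\cite{regnier:sigma}. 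The subtlety here is that discarding the erasing steps from the full reduction could in principle break the ability to close a diagram, because one of the two natural residuals of a shared redex might be an \emph{erasing} step; this is precisely where Lemma~\ref{switch} is needed, to postpone such an erasing step past the non-erasing reduction and thereby recover a common $\tope^{\neg e}$-reduct.

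The main obstacle I anticipate is this last point: closing the critical pairs without leaving the non-erasing fragment. Whenever contracting one redex would erase the other, the naive diagram-closing step is an erasing reduction, which is not allowed in $\tope^{\neg e}$. The role of Lemma~\ref{switch} is to commute such an erasing step to the front, so that the tail of the reduction remains non-erasing and the two divergent reducts can still be joined; I would therefore structure the confluence proof so that every potentially-erasing closing step is first rewritten via Lemma~\ref{switch} into an erasing step followed by a non-erasing one, and then argue that the erasing prefix can be matched on both sides. The remaining bookkeeping—enumerating the overlap configurations and checking that each closes in the resource calculus up to the usual handling of formal sums—is routine and can be relegated to the detailed analysis referenced in Section~\ref{post}.
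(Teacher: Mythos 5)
Your strong normalization argument is exactly the paper's: $\tope^{\neg e}$ is contained in $\tope$, so Lemma \ref{snc} applies directly. For confluence, however, your plan fails at its central point. You propose Newman's Lemma plus a critical-pair analysis, and you assign to Lemma \ref{switch} the job of closing those diagrams in which ``the natural closing step is erasing.'' But postponement goes the wrong way for this purpose: Lemma \ref{switch} turns an erasing step \emph{followed by} non-erasing steps into non-erasing steps \emph{followed by} an erasing step. In a critical-pair diagram for $\tope^{\neg e}$, an offending erasing step would sit at the tail of a joining sequence, after the non-erasing ones, and postponement can never delete such a step --- it can only push it further towards the end. So if some critical pair genuinely required an erasing step to close, $\tope^{\neg e}$ would simply fail to be confluent, and no commutation lemma could repair that. (The premise is in any case false: $n$-linear substitution is additive on variable occurrences and $\to_{\partial \sigma 1}$ merely rearranges subterms, so contracting a non-erasing redex never turns another non-erasing redex into an erasing one. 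Lemma \ref{switch} would thus be doing no work in your proof.)

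Moreover, the cases you dismiss as routine bookkeeping are where the real content lies, and they are what the paper's proof in Section \ref{A} is built to handle: a redex contracted on one side is duplicated into a formal sum by $n$-linear substitution on the other side, so a single step must be matched by one step per addend; and a redex whose argument list mismatches the number of occurrences collapses to the zero sum, which must then also be reachable from the other branch. The paper deals with both uniformly by introducing the parallel reduction $\pare$, proving the substitution Lemma \ref{sconfl}, and sandwiching $\tope^{\neg e} \subseteq \pare \subseteq \to^{* \neg e}_{\partial \epsilon}$; this route has the further advantage of not depending on strong normalization at all. Your Newman-based skeleton could in principle be completed --- local confluence does hold and strong normalization is available --- but only by replacing the appeal to Lemma \ref{switch} with a direct treatment of sums and of the zero term, which is essentially the content of Lemma \ref{sconfl} in disguise.
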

\begin{proof}
The strong normalisation is a corollary of Lemma \ref{snc}. Confluence is proved in \ref{A}.

\end{proof}
 
 We write $NF_{\epsilon}^{\neg e}(s)$ for the unique non-easing $ \epsilon$-normal form of $s$ that is a finite term, possibly the zero sum.

\begin{lemma}\label{epresnf}
Let $ s \in \Delta $. If $ s \to_{\partial \epsilon }^{e} \sigma  \neq 0$ and  $ NF(\sigma)^{\neg e}_{\epsilon} = 0 $ then $ NF(s)^{\neg e}_{\epsilon} = 0 .$
\end{lemma}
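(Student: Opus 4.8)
The plan is to use the postponement property of Lemma \ref{switch} together with the strong normalisation and confluence of $\tope^{\neg e}$ (Lemma \ref{snce}), the whole argument turning on the observation that a \emph{nonzero} erasing contraction must erase an \emph{empty} bag.

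First I would read off the shape of the erasing step. A contracted erasing redex is of the form $\rappl{\lambda y. v}{\bar t}$ with $y \notin FV(v)$, and it reduces to $\partial_y v\cdot \bar t$; since $n_y(v)=0$, this sum is nonzero exactly when $\bar t = []$. Hence the hypothesis $\sigma \neq 0$ forces the contracted redex to be the ``transparent'' redex $\rappl{\lambda y. v}{[]}$, which reduces simply to $v$: thus $s$ and $\sigma$ differ only by one empty-bag redex.

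Next, from $NF^{\neg e}_{\epsilon}(\sigma)=0$ I obtain a reduction $\sigma \twoheadrightarrow_{\partial\epsilon}^{\neg e} 0$. Prefixing it with the single erasing step $s \to^{e}_{\partial}\sigma$ and applying Lemma \ref{switch} (extended to finite sums by linearity), I postpone the erasing step past the non-erasing $\epsilon$-reduction, obtaining $t'$ with $s \twoheadrightarrow_{\partial\epsilon}^{\neg e} t'$ and $t'\to^{e}_{\partial}0$. By Lemma \ref{snce}, $s \twoheadrightarrow_{\partial\epsilon}^{\neg e} t'$ yields $NF^{\neg e}_{\epsilon}(s)=NF^{\neg e}_{\epsilon}(t')$, so it is enough to prove $t'=0$, which then gives $NF^{\neg e}_{\epsilon}(s)=0$.

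The heart of the proof is to show that the postponed erasing contraction is again an empty-bag contraction. The non-erasing $\epsilon$-reduction $s\twoheadrightarrow_{\partial\epsilon}^{\neg e}t'$ cannot enlarge the bag of the transparent redex: resource reduction is linear, so the redex is never duplicated, and the only rule that relocates arguments, $\to_{\partial\sigma 1}$, pushes them \emph{into the body} and leaves the bag $[]$ intact (one checks $y\notin FV(\bar q)$, so the rule applies whenever the redex is itself applied). Thus the residual of the original redex is still of the form $\rappl{\lambda y. w}{[]}$, whose contraction $\partial_y w\cdot[]=w$ creates no arity mismatch. Since an empty-bag contraction of a genuine resource term never yields the empty sum, $t'\to^{e}_{\partial}0$ is possible only if $t'=0$; hence $s\twoheadrightarrow_{\partial\epsilon}^{\neg e}0$ and $NF^{\neg e}_{\epsilon}(s)=0$. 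I expect the one real obstacle to be exactly this bookkeeping, namely tracking the residual of the empty-bag redex through the postponement of Lemma \ref{switch} (and its formulation on sums, where the target is the empty sum rather than a term of $\resTerms$). Should Lemma \ref{switch} not expose the residual precisely enough, I would replace the appeal to it by a direct induction on the length of $\sigma\twoheadrightarrow_{\partial\epsilon}^{\neg e}0$, simulating each non-erasing step of $\sigma$ on $s$ and inserting a $\to_{\partial\sigma 1}$ step to expose $v$ whenever the simulated step acts at the position occupied by $v$; that interaction is the only delicate case.
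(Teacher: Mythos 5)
Your key observation --- that a nonzero erasing contraction must fire an empty-bag redex $\rappl{\lambda y. v}{[]}$, and that non-erasing $\epsilon$-steps (in particular $\to_{\partial\sigma 1}$) never make that bag non-empty --- is exactly the engine of the paper's proof. But your primary route, which uses Lemma \ref{switch} as a black box, has a genuine gap, and it is the one you yourself flag: from the bare statement ``there exists $t'$ with $s \twoheadrightarrow^{\neg e}_{\partial\epsilon} t'$ and $t' \to^{e}_{\partial} 0$'' you cannot conclude $t' = 0$. That inference is only valid if the final erasing step contracts the \emph{residual} of your empty-bag redex; an erasing redex elsewhere in $t'$ with a non-empty (hence arity-mismatched) bag contracts to $0$ even when $t' \neq 0$. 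Concretely, take $s = \rappl{\lambda v. \rappl{\lambda y. x}{[\rappl{\lambda z. z}{[]}]}}{[]}$: the hypotheses of the lemma hold (contract the outer empty-bag redex to get $\sigma$, whose bag entry $\rappl{\lambda z. z}{[]}$ reduces non-erasingly to $0$), and $t' = s$ itself satisfies $s \twoheadrightarrow^{\neg e}_{\partial\epsilon} t'$ in zero steps together with $t' \to^{e}_{\partial} 0$, by firing the inner mismatched redex $\rappl{\lambda y. x}{[\,\cdot\,]}$ --- yet $t' \neq 0$. So the existence statement alone proves nothing; you need the postponement to hand you the residual, which Lemma \ref{switch} does not do (and whose extension to sums, with target the empty sum, is not established in the paper either --- Section \ref{post} actually proves postponement for rigid terms).

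Your fallback is the correct repair, and it is essentially the paper's own proof. The paper argues by induction on the size of $s$: in the head-variable case the erasing step sits in some argument $\bar{q_i}$ and one concludes componentwise by the IH, linearity, and Lemma \ref{snce}; in the head-redex case, $\sigma \neq 0$ forces $\bar{q_0} = []$, and the paper performs $n$ steps of $\to_{\partial\sigma 1}$ to hoist the frozen empty-bag redex above the remaining arguments, so that the whole body (which is $\sigma$ up to the abstractions) sits under $\rappl{\lambda x. \cdot\,}{[]}$; then $NF^{\neg e}_{\epsilon}(\sigma) = 0$ kills the term by linearity, and Lemma \ref{snce} (confluence plus strong normalisation) transports this back to $s$. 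This is precisely your ``insert a $\sigma_1$-step whenever the reduction reaches the position of $v$,'' except executed once, at the head, rather than along the whole reduction of $\sigma$ to $0$. In short: right idea, but the version of your argument that actually closes is the fallback induction, not the appeal to Lemma \ref{switch}.
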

\begin{proof}
By induction on the size of $ s .$ 

If $ s = \lambda x_1 \dots \lambda x_m \rappl{\rappl{\cdots\rappl{x}{\bar{q_1}}}{\cdots \bar{q_i} \cdots}}{\bar{q_n}} $ then \[\sigma = \lambda x_1 \dots \lambda x_m \rappl{\rappl{\cdots\rappl{x}{\bar{q_1}}}{\cdots \tau_i \cdots}}{\bar{q_n}} \] with $ \bar{q_i} \to_{\partial \epsilon }^{e} \tau_i. $ Since $ NF(\sigma)^{\neg e}_{\epsilon} = 0 $ then $ NF(\bar{q_j})^{\neg e}_{\epsilon} = 0 $ for some $ j \in \{1, \dots, n \}. $ If $ j = i $ we apply the IH and we conclude by linearity. If $ j \neq i $ we conclude by Lemma \ref{snce}.

If $ s = \rhred $ then we can focus on the case where the erasing step is performed on the head-redex, since the other cases follows the same structure of above. Since $ \sigma \neq 0 $ we have that $ q_0 = [] .$ Then $ \sigma = \lambda x_1 \dots \lambda x_m \rappl{\rappl{\cdots\rappl{p}{\bar{q_1}}}{\cdots \bar{q_i} \cdots}}{\bar{q_n}} $. Now we perform $ n $ steps of $\sigma$-reduction on $ s $ obtaining \[s' = \lambda x_1 \dots \lambda x_m \rappl{ \lambda x. \rappl{\rappl{\cdots\rappl{p}{\bar{q_1}}}{\cdots \bar{q_i} \cdots}}{\bar{q_n}}}{[]}. \]  Then we apply the IH and we get $  NF(s)^{\neg e}_{\epsilon} = 0 .$

\end{proof}

After having extended the reduction relation to resource term, we seek a connection between the $ \epsilon-$reduction over ordinary $\lambda$-terms and its resource counterpart.

\begin{lemma} [Subject expansion] \label{antired}
Let $M, N$ be any two $\lambda$-terms. If $M \to^{\neg e}_{\epsilon} N$ then for all $ t_0 \in T(N) $ there exists $s \in T(M)$ and $ t_1, \dots, t_n \in T(N) $ such that $ s \twoheadrightarrow^{\neg e +}_{\partial \epsilon} \sum_{i=0}^{n} t_i. $ If $ t_0 $ is positive, then $ s \twoheadrightarrow^{\neg e }_{\partial \epsilon} \sum_{i=0}^{n} t_i.  $

\end{lemma}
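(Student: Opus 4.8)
The plan is to argue by induction on the derivation of $M \to^{\neg e}_{\epsilon} N$, that is, on the rules generating the non-erasing $\epsilon$-reduction: the two base contractions (non-erasing $\beta$ and $\sigma_1$) and the three contextual rules. In every case I read the construction of the given $t_0 \in \taysup{N}$ backwards, exhibit the approximant $s \in \taysup{M}$ obtained by undoing the contracted redex, and let the auxiliary summands $t_1,\dots,t_n$ be the remaining terms produced when $s$ contracts the corresponding resource redex.

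\textbf{Non-erasing $\beta$ base case.} Here $M = (\lambda x. P)Q$ and $N = \subst P x Q$ with $x \in FV(P)$. The Substitution Lemma \ref{subres} gives $p \in \taysup{P}$ and $\bar q \in \taysup{Q}^{!}$ with $n_x(p) = \length{\bar q}$ and $t_0 \in supp(\nsubst p x {\bar q})$. I take $s = \rappl{\lambda x. p}{\bar q} \in \taysup{M}$ and contract its head redex, $s \redp \nsubst p x {\bar q}$, a sum $\sum_{i=0}^{n} t_i$ of elements of $\taysup{N}$ one of which is $t_0$. This contraction is non-erasing precisely when $n_x(p) \geq 1$; the degenerate possibility $n_x(p) = 0$ forces $\bar q = ()$ and $t_0 = p$, so the step erases. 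The positivity hypothesis rules out this branch: I first prove the auxiliary fact that whenever $x \in FV(P)$, every positive approximant $p \in \taysup{P}$ has $n_x(p) \geq 1$, by a routine induction on $P$ whose application case uses that a positive approximant carries a non-empty argument list and hence retains at least one copy of each immediate argument. Consequently, if $t_0$ is positive, the branch $n_x(p)=0$ (in which $t_0 = p \in \taysup{P}$ would be a positive approximant of $P$ avoiding $x$) is impossible, and the head contraction is genuinely non-erasing.

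\textbf{$\sigma_1$ base and contextual cases.} The $\sigma_1$ base case is immediate and erasing-free: from $M = ((\lambda x. P)Q)R \to_{\sigma 1} (\lambda x. PR)Q = N$ with $x \notin FV(R)$, any $t_0 = \rappl{\lambda x. \rappl p {\bar r}}{\bar q} \in \taysup{N}$ (with $p \in \taysup{P}$, $\bar r \in \taysup{R}^{!}$, $\bar q \in \taysup{Q}^{!}$, and $x \notin FV(\bar r)$) is reached in one $\to_{\partial \sigma 1}$ step from $s = \rappl{\rappl{\lambda x. p}{\bar q}}{\bar r} \in \taysup{M}$, with no auxiliary summands. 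The contextual cases follow from the induction hypothesis together with linearity of the resource constructors: under $\lambda$ and in function position the reduction of the sub-approximant lifts directly; in argument position, where $t_0 = \rappl p {(u_1,\dots,u_k)}$ with each $u_j \in \taysup{N_1}$, I apply the hypothesis to every $u_j$, place the resulting approximants $s_j \in \taysup{M_1}$ into the argument list, and reduce each in turn, so that the product of the resulting sums expands by linearity into a sum of elements of $\taysup{N}$ of which $t_0$ (the distinguished summand of each factor) is one.

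\textbf{Main obstacle.} The crux is reconciling the non-erasing requirement with the existence of erasing approximants in the $\beta$ base case: for a general $t_0$ one genuinely cannot avoid an erasing head step when $n_x(p)=0$, which is exactly why the two forms of the conclusion differ — the unrestricted statement tolerates such a step, whereas for positive $t_0$ the auxiliary fact above secures a purely non-erasing contraction. The second, more bureaucratic, difficulty is the sum bookkeeping in the argument-position case, where independent reductions inside a multiset of arguments combine into a product of sums; care is needed to verify that $t_0$ arises as the summand obtained by selecting the distinguished term in each factor and that every other summand still belongs to $\taysup{N}$.
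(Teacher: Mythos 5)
Your proposal takes essentially the same route as the paper's proof: induction on the derivation of the non-erasing $\epsilon$-step, with the $\beta$ base case discharged by the Substitution Lemma \ref{subres} and the contextual cases (notably the argument-position one, with its product-of-sums bookkeeping) handled by the induction hypothesis plus linearity, strengthening to a purely non-erasing reduction when $t_0$ is positive. You in fact supply details the paper leaves implicit — in particular the auxiliary fact that a positive approximant of $P$ must contain every free variable of $P$, which is exactly what justifies the paper's otherwise unexplained claim that positivity rules out the erasing head step — so the proposal is correct.
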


\begin{proof}
By induction on the definition of non-erasing $\epsilon$-reduction. The base case derives from Lemma \ref{subres}.

The interesting case is the application case. Let $ M = PQ  $ and $ N = PQ' $ with $ Q \tope^{\neg e} Q'  $. Let $ t_{0} \in T(PQ') .$  Then $ t_0 = \rappl{p_0}{\bar{q'_0}} $ with $ p_{0}  \in T(P)$ and $\bar{q'_0} \in T(Q')^{!} .$ By IH there exists $ \bar{q'_1}, \dots, \bar{q'_n}  \in T(Q')^{!} $ and $ \bar{q} \in T(Q)^{!} $ such that $ \bar{q}  \twoheadrightarrow^{\neg e +}_{\partial \epsilon} \sum_{i=0}^{n} \bar{q'}_i. $  Then we can apply the IH and conclude. IF $ t_0 \in \Delta^{+} $ then, in particular, $ \bar{q_0} \neq [] $ and we can strengthen the IH with $\bar{q}  \twoheadrightarrow^{\neg e }_{\partial \epsilon} \sum_{i=0}^{n} \bar{q'}_i$.

\end{proof}

\begin{lemma}[Subject reduction] \label{subred}
Let $M, N$ be any two $\lambda$-terms. If $M \to^{\neg e}_{\epsilon} N$ then there exists $ t_1, \dots, t_n  \in T(N) $ such that for all $s \in T(M)$ such that $ s \twoheadrightarrow^{\neg e +}_{\partial\epsilon} \sum_{i=1}^{n} t_i. $ If $ s  $ is positive, then $ s \twoheadrightarrow^{\neg e}_{\partial \epsilon} \sum_{i=1}^{n} t_i. $

\end{lemma}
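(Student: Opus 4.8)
The plan is to mirror the proof of Subject expansion (Lemma~\ref{antired}), reading the simulation forwards instead of backwards: I proceed by induction on the generation of the non-erasing $\epsilon$-reduction $\to^{\neg e}_{\epsilon} = {\to^{\neg e}} \cup {\to_{\sigma}}$, and for an arbitrary approximant $s \in T(M)$ I produce finitely many $t_1,\dots,t_n \in T(N)$ with $s \twoheadrightarrow^{\neg e +}_{\partial\epsilon} \sum_{i=1}^n t_i$ (so the quantifier order is $\forall s\,\exists t_i$), checking separately that positivity of $s$ upgrades this to the genuinely non-erasing $\twoheadrightarrow^{\neg e}_{\partial\epsilon}$. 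The base case of the $\beta$-rule is where Lemma~\ref{subres} enters, exactly as it did for Subject expansion.

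For the \textbf{base cases}, consider first the non-erasing redex $M = \rappl{\lambda x. P}{Q} \to^{\neg e} P[Q/x] = N$ with $x \in FV(P)$. Any $s \in T(M)$ has the form $s = \rappl{\lambda x. p}{\bar q}$ with $p \in T(P)$ and $\bar q \in T(Q)^{!}$; firing the resource redex gives $s \to_{\partial} \nsubst{p}{x}{\bar q}$, and by the Substitution Lemma (Lemma~\ref{subres}) every element of $supp(\nsubst{p}{x}{\bar q})$ lies in $T(P[Q/x]) = T(N)$, giving the desired $t_1,\dots,t_n$ (when $n_x(p) \neq \length{\bar q}$ the sum is $0$ and there is nothing to prove). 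For the $\sigma$-redex $M = \rappl{\rappl{\lambda x. P}{Q}}{R} \to_{\sigma 1} \rappl{\lambda x. \rappl{P}{R}}{Q} = N$ with $x \notin FV(R)$, any $s \in T(M)$ reads $s = \rappl{\rappl{\lambda x. p}{\bar q}}{\bar r}$, and since $\bar r \in T(R)^{!}$ forces $x \notin FV(\bar r)$, the single step $s \to_{\partial\sigma 1} \rappl{\lambda x. \rappl{p}{\bar r}}{\bar q} \in T(N)$ applies; being a $\sigma$-step it is already non-erasing.

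For the \textbf{inductive cases}, the contextual closures under $\lambda y.(-)$ and under $\rappl{(-)}{\bar q}$ are immediate: one pushes the reduction supplied by the induction hypothesis through the constructor and distributes over the resulting sum by linearity, these constructors preserving positivity. The delicate case is reduction in argument position, $M = PQ \to^{\neg e}_{\epsilon} PQ' = N$ from $Q \to^{\neg e}_{\epsilon} Q'$: here $s = \rappl{p}{[q_1,\dots,q_k]}$ with each $q_j \in T(Q)$, the induction hypothesis gives $q_j \twoheadrightarrow \sum_i t^{j}_{i}$ with $t^{j}_{i} \in T(Q')$, and reducing the components one after another inside the list and expanding by linearity yields $s \twoheadrightarrow \sum \rappl{p}{[t^{1}_{i_1},\dots,t^{k}_{i_k}]}$, each summand lying in $T(PQ') = T(N)$.

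The step I expect to be the real work is the \emph{positivity bookkeeping}, which is precisely what forces the dichotomy between $\twoheadrightarrow^{\neg e +}_{\partial\epsilon}$ and $\twoheadrightarrow^{\neg e}_{\partial\epsilon}$. The resource step fired in the $\beta$ base case is non-erasing exactly when $x \in FV(p)$, i.e. $n_x(p) \geq 1$, whereas a generic approximant may select zero copies of $x$ and thus trigger an erasing step (absorbed by the superscript $+$). To see that positivity removes this possibility I would isolate the auxiliary fact that if $x \in FV(M)$ and $p \in T(M)$ is positive then $n_x(p) \geq 1$, proved by a routine induction on $M$; the only genuine point is the application $M = PQ$ with $x \in FV(Q)$, where positivity guarantees that the argument list is non-empty, so at least one positive approximant of $Q$ — which contains $x$ by induction — actually occurs. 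Granting this, positivity of $s$ makes $p$ positive with $x \in FV(P)$, hence $n_x(p) \geq 1$ and the $\beta$ step is non-erasing; since the remaining cases preserve positivity and, inductively, non-erasingness, the positive reduction $\twoheadrightarrow^{\neg e}_{\partial\epsilon}$ is maintained throughout.
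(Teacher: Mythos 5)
Your proof is correct and takes essentially the same route as the paper's: induction on the definition of $\to^{\neg e}_{\epsilon}$, with the Substitution Lemma (Lemma \ref{subres}) handling the $\beta$ base case, linearity handling the contextual cases, and non-emptiness of argument lists under positivity upgrading $\twoheadrightarrow^{\neg e +}_{\partial\epsilon}$ to the genuinely non-erasing $\twoheadrightarrow^{\neg e}_{\partial\epsilon}$. You even make explicit a point the paper's proof leaves implicit, namely the auxiliary fact that a positive approximant of a term with $x$ free must contain an occurrence of $x$, which is exactly what makes the fired redex non-erasing in the positive base case.
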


\begin{proof}
By induction on the definition of $\epsilon$-reduction and by Lemma \ref{subres}. 

The interesting case is the application case. Let $ M = PQ  $ and $ N = PQ' $ with $ Q \tope^{\neg e} Q'  $. Let $ s \in T(PQ) .$  Then $ s = \rappl{p}{\bar{q}} $ with $ p  \in T(P)$ and $\bar{q} \in T(Q)^{!} .$ By IH there exists $ \bar{q'_1}, \dots, \bar{q'_n}  \in T(Q')^{!} $ such that $ \bar{q}  \twoheadrightarrow^{\neg e +}_{\partial \epsilon} \sum_{i=1}^{n} \bar{q'}_i. $  Then we can apply the IH and conclude. IF $ s\in \Delta^{+} $ then, in particular, $ \bar{q} \neq [] $ and we can strengthen the IH with $\bar{q}  \twoheadrightarrow^{\neg e }_{\partial \epsilon} \sum_{i=1}^{n} \bar{q'}_i$.
\end{proof}

\begin{lemma} \label{Taysn}
If $M$ is  normalizable through non-erasing $\epsilon$-reduction then there exists $s \in T(M)$ such that $NF(s)_{ \epsilon}^{\neg e} \cap {\resTerms}^{+} \neq \emptyset$. 

\end{lemma}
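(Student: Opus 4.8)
The plan is to reduce the statement to two structural facts about the Taylor expansion and then climb the given reduction chain backwards by subject expansion, carefully keeping track of positivity. First I would record two easy facts about positive approximants. (A) For every $\lambda$-term $P$ and every $s \in T(P)$ one has $FV(s) \subseteq FV(P)$, and if moreover $s \in \resTerms^{+}$ then $n_x(s) \geq 1$ for every $x \in FV(P)$; both are proved by a routine induction on $P$, the point being that in the application case a positive approximant carries at least one copy of each argument, which is exactly what forces every free variable of $P$ to be realized. (B) Positivity is stable under $\tope^{\neg e}$: if $s \in \resTerms^{+}$ and $s \to^{\neg e}_{\partial} \sigma$ or $s \to_{\partial \sigma} \sigma$, then every term of $\mathrm{supp}(\sigma)$ is again positive, since a non-erasing contraction fires a redex $\rappl{\lambda x. u}{\bar t}$ with $x \in FV(u)$, whence $\bar t \neq []$ and the substituted terms are positive, while $\to_{\partial \sigma 1}$ merely re-associates non-empty argument lists. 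As $\tope^{\neg e}$ is strongly normalizing (Lemma \ref{snce}), it follows that for positive $s$ every summand of $NF(s)^{\neg e}_{\epsilon}$ again lies in $\resTerms^{+}$.

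Next I would settle the case in which $M$ is already a $\tope^{\neg e}$-normal form, by showing that then every positive $t_0 \in T(M)$ is itself $\tope^{\neg e}$-normal. Indeed a resource redex $\rappl{\lambda x. u}{\bar t}$ with $x \in FV(u)$ occurring in $t_0$ would, by (A), descend from a subterm $(\lambda x. P)Q$ of $M$ with $x \in FV(P)$, i.e.\ a non-erasing redex, contradicting normality; and a $\sigma$-redex in $t_0$ would descend from a subterm $((\lambda x. P)Q)R$ of $M$, which under the variable convention is always a $\to_{\sigma 1}$-redex, again impossible. Since positive approximants always exist (take exactly one copy of each argument), such a $t_0$ satisfies $NF(t_0)^{\neg e}_{\epsilon} = t_0 \neq 0$.

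The main argument is then an induction on the length of a reduction $M = M_0 \to^{\neg e}_{\epsilon} \dots \to^{\neg e}_{\epsilon} M_k = N$ to a normal form, proving the sharper statement that there is a positive $s \in T(M)$ with $NF(s)^{\neg e}_{\epsilon} \neq 0$; the base case is the previous paragraph. For the inductive step the induction hypothesis gives a positive $s_1 \in T(M_1)$ with $NF(s_1)^{\neg e}_{\epsilon} \neq 0$, and applying subject expansion (Lemma \ref{antired}) to $M \to^{\neg e}_{\epsilon} M_1$ with the positive term $t_0 := s_1$ yields $s \in T(M)$ together with $t_1, \dots, t_n \in T(M_1)$ such that $s \twoheadrightarrow^{\neg e}_{\partial \epsilon} \sum_{i=0}^{n} t_i$ with $t_0 = s_1$. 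The crucial point is that this $s$ can be taken positive: in the $\beta$ case the fired redex is non-erasing, so $x \in FV(P)$, and since $s_1$ is positive fact (A) forces the relevant argument list to be non-empty, so the preimage $\rappl{\lambda x. p}{\bar q}$ has $\bar q \neq []$; the $\sigma$ case only re-associates already non-empty lists. Because coefficients live in $\N$ there is no cancellation, hence $NF(s)^{\neg e}_{\epsilon} = \sum_{i=0}^{n} NF(t_i)^{\neg e}_{\epsilon}$ contains $NF(s_1)^{\neg e}_{\epsilon} \neq 0$, closing the induction.

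Finally, applying this to the given normalizing reduction of $M$ produces a positive $s \in T(M)$ with $NF(s)^{\neg e}_{\epsilon} \neq 0$, and by (B) every summand of this normal form is positive, so $NF(s)^{\neg e}_{\epsilon} \cap \resTerms^{+} \neq \emptyset$, as required. The one delicate step — and the place where the non-erasing hypothesis is genuinely used — is the preservation of positivity under backward subject expansion: it is precisely because a non-erasing redex substitutes its bound variable at least once that a positive reduct forces a non-empty argument multiset upstream, which is the content of fact (A). Everything else (the two structural facts and the use of strong normalization and confluence) is either routine or already available from Lemmas \ref{snce}, \ref{subres} and \ref{antired}.
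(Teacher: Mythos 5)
Your proof is correct, and it runs on the same engine as the paper's: choose a positive approximant at the normal form and pull it backwards along the non-erasing $\epsilon$-reduction chain via subject expansion (Lemma \ref{antired}). The organization, though, is genuinely different. The paper does the pullback in one sweep: it picks a positive $t_0 \in T(N)$, obtains $s \in T(M)$ with $s \twoheadrightarrow^{\neg e}_{\partial \epsilon} \sum_{i=0}^{n} t_i$ where all the $t_i$ lie in $T(N)$, and concludes from the observation that every element of $T(N)$ is already a $\tope^{\neg e}$-normal form, so that $NF(s)^{\neg e}_{\epsilon} = \sum_{i} t_i$ literally contains the positive term $t_0$; this tacitly also relies on subject reduction (Lemma \ref{subred}) to push the side terms generated at intermediate stages of the chain down into $T(N)$. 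You instead run an induction on the length of the chain with the stronger invariant that some \emph{positive} $s \in T(M_i)$ has $NF(s)^{\neg e}_{\epsilon} \neq 0$, which forces you to establish two facts the paper never states: that the preimage produced by subject expansion can itself be taken positive (Lemma \ref{antired} only asserts that the \emph{reduction} is non-erasing when $t_0$ is positive, not that $s$ is), and that positivity is preserved under forward $\tope^{\neg e}$-reduction (your fact (B), essentially Proposition \ref{pres} extended with $\sigma$-steps). Both facts are true, and your sketches are the right arguments: positivity of the reduct forces positivity of the components $p$ and $\bar q$ of the preimage, and then your fact (A) together with non-erasure ($x \in FV(P)$, hence $n_x(p) \geq 1$) forces $\bar q \neq []$. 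What your route buys is that you never need to control where the side terms end up — since coefficients live in $\N$ there is no cancellation, so $NF(s_1)^{\neg e}_{\epsilon} \neq 0$ survives inside $NF(s)^{\neg e}_{\epsilon}$ — which is precisely the point the paper glosses over; what the paper's route buys is lighter positivity bookkeeping, at the price of needing normality of \emph{all} approximants of $N$ and the implicit appeal to subject reduction.
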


\begin{proof}
The result is a corollary of Lemma \ref{antired}. Since $M$ is non-erasing $\epsilon$-normalizable then there exists a $\lambda$-term $N$ that is its $\epsilon$-normal form. If we consider a reduction chain starting from $M$ and ending in $N$ such as $ M \to_{\epsilon}^{\neg e} ... \to_{\epsilon}^{\neg e} N$, by Lemma \ref{antired} for all $ t_0, t_1, \dots, t_n \in T(N) $ we can find an element $s \in T(M)$ such that $ s \twoheadrightarrow_{\partial\epsilon}^{\neg e} \sum_{i = 0}^{n} t_i  $. Then, by an easy inspection of the definitions, $NF(s)_{\epsilon}^{\neg e} = \sum_{i = 0}^{n} t_i$. If we choose a positive $t_0$ we can then conclude.

\end{proof}

\begin{lemma}
Let $s \in \resTerms^{+}, x\in \mathcal{V}, \bar{t} \in \resMonomials_{+}$. Then $ \nsubst{s}{x}{\bar{t}} \in \resTerms^{+}.$
\end{lemma}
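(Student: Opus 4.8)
The plan is to prove the statement by structural induction, after reducing it to a single fact about the ordinary substitutions underlying the $n$-linear substitution: replacing variable occurrences by positive terms inside a positive term cannot destroy positivity.

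First I would recall that, writing $n=n_x(s)$, the expression $\nsubst{s}{x}{\bar t}$ is either $0$ --- when $\length{\bar t}\neq n$, in which case there is nothing to prove, since the empty sum vacuously has support in $\resTerms^+$ --- or it equals $\sum_{\sigma\in\mathfrak{S}_n} s[u_{\sigma(1)}/x_1,\dots,u_{\sigma(n)}/x_n]$, where $x_1,\dots,x_n$ enumerate the occurrences of $x$ and $\bar t=[u_1,\dots,u_n]$. Since positivity is a property of single terms, and a finite sum has its support contained in $\resTerms^+$ exactly when each summand is positive, it suffices to show that every summand is a positive resource term. As $\bar t\in\resMonomials_+$, each $u_{\sigma(i)}$ is positive, so the statement reduces to the following claim.

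\emph{Claim.} For every positive $e\in\resExprs$ and every choice of positive terms $v_1,\dots,v_{n_x(e)}$, the linear substitution $e[v_1/x_1,\dots,v_{n_x(e)}/x_{n_x(e)}]$ (where $x_1,\dots,x_{n_x(e)}$ enumerate the occurrences of $x$) is again positive. I would prove this by induction on $e$, simultaneously for terms and for monomials. The variable cases are immediate: $x[v_1/x_1]=v_1$ is positive by hypothesis, and a variable $y\neq x$ contains no occurrence of $x$ and is left unchanged. The abstraction case follows directly from the induction hypothesis applied to the body. For a linear application $\rappl{s'}{\bar v}$ --- where positivity means that $s'$ is positive, that $\bar v\neq[]$, and that every component of $\bar v$ is positive --- the occurrences of $x$, together with the terms assigned to them, are partitioned between $s'$ and $\bar v$; applying the induction hypothesis to $s'$ and to the monomial $\bar v$ yields a function part and an argument monomial that are again positive, and the monomial case is handled componentwise in the same fashion.

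The only point requiring care --- and the conceptual heart of the argument --- is the invariant that linear substitution never changes the number of arguments of a linear application: it merely rewrites leaf occurrences of $x$ into complete terms, so the cardinality of every argument monomial is preserved. This is exactly what guarantees that no empty argument list is ever created and that the condition $\bar v\neq[]$ of positivity survives substitution. Everything else is routine bookkeeping over how the occurrences of $x$ and the corresponding terms of $\bar t$ distribute across the function and argument parts of an application, for which the induction hypothesis supplies positivity of each piece.
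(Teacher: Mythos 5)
Your proposal is correct and takes essentially the same route as the paper, whose entire proof is the single line ``by induction on the definition of $n$-linear substitution'': your argument is that induction spelled out in full, merely repackaged through the per-permutation ordinary substitutions, with the decisive (and correct) observation that substituting terms for occurrences of $x$ never changes the length of any argument bag, so no empty bag can appear. Nothing is missing; yours is a detailed rendering of the paper's one-line proof.
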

\begin{proof}
By induction on the definition of $n$-linear substitution. 

\end{proof}

\begin{proposition} \label{pres}
Let $s \in \resTerms^{+}$. If $ s \to_{r} t$ then $t \in \resTerms^{+}.$
\end{proposition}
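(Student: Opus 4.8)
The plan is to argue by induction on the derivation of $s \to_r t$, that is, on the position of the contracted redex, with the preceding substitution lemma doing the work in the base case. Throughout I read the statement through the support convention fixed earlier ($\sigma \subseteq \Gamma$ iff $supp(\sigma) \subseteq \Gamma$): the claim is that every resource term in the support of the reduct of $s$ is again positive, the case where the reduct is $0$ being vacuous. So writing $\to_r$ for the one-step resource reduction, whose base rule contracts $\rappl{\lambda x. u}{\bar t}$ to $\nsubst ux{\bar t}$, it suffices to show that each summand produced is in $\resTerms^{+}$.

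For the base case $s = \rappl{\lambda x. u}{\bar t}$, positivity of $s$ forces, by the very definition of $\resTerms^{+}$, that the argument list is non-empty, $\bar t \neq []$, and that both $u$ and every component of $\bar t$ are positive; hence $u \in \resTerms^{+}$ and $\bar t \in \resMonomials_{+}$. The preceding lemma then yields $\nsubst ux{\bar t} \in \resTerms^{+}$, i.e. $supp(\nsubst ux{\bar t}) \subseteq \resTerms^{+}$, which is exactly what is required.

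The contextual cases follow from the induction hypothesis together with the fact that each constructor preserves positivity. If $s = \lambda x. u$ with the redex in $u$, then $u \in \resTerms^{+}$, the hypothesis gives positivity of the reduct of $u$, and $\lambda x.(-)$ preserves positivity. If $s = \rappl{c}{\bar d}$ with the redex in $c$, the same reasoning applies, observing that positivity of $s$ forces $\bar d \neq []$, a property left untouched by reducing $c$. The only mildly delicate case is a redex inside the argument monomial $\bar d = [d_1,\dots,d_k]$: positivity of $s$ already guarantees $k \geq 1$ and that each $d_j \in \resTerms^{+}$, reduction of a single component replaces it by a sum of terms each positive by the hypothesis, and since the monomial reduction rules never alter the number of slots, every resulting monomial still has $k \geq 1$ positive components; hence $\rappl{c}{\bar d}$ stays positive.

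The crux of the argument, and the reason \emph{positivity} rather than mere membership in $\resTerms$ is the correct invariant, is that reduction must never manufacture a linear application with an empty argument list. In the contextual cases this is immediate, as reduction preserves the arity of monomials and cannot empty a non-empty one. In the base case it is precisely the content of the preceding lemma: substituting positive terms for the occurrences of $x$ in a positive term leaves every application with its original, non-empty and positive argument list. I therefore expect the whole proof to reduce to that lemma plus the routine check that the constructors respect positivity, with no further obstacle.
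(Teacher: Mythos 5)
Your proof is correct and takes essentially the same route as the paper: the base case is discharged by the preceding lemma (positive $n$-linear substitution stays positive) and the contextual cases follow from the induction hypothesis, which is precisely the paper's one-line argument. The extra details you supply (reading the claim through supports, and checking that context closure never empties a non-empty argument list) are exactly the routine verifications the paper leaves implicit.
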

\begin{proof}
The base case follows from the former lemma and the inductive cases follow immediately from the IH.

\end{proof}

\begin{lemma} \label{fromsnetosn}
Let $ M \in \Lambda$ such that $ M  $ is strongly normalisable. Then $ M $ is strongly normalisable for the non-erasing $\epsilon $-reduction.
\end{lemma}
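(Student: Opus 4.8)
The plan is to reduce the statement to strong normalisation of the \emph{full} $\epsilon$-reduction ${\to_\epsilon} = {\to_\beta}\cup{\to_\sigma}$ on $\beta$-strongly normalising terms, and then to exploit the fact that the non-erasing fragment sits inside it. Since a $\sigma$-step never erases an argument, the non-erasing $\epsilon$-reduction on $\lambda$-terms is ${\to_\epsilon^{\neg e}} = {\to^{\neg e}}\cup{\to_\sigma}$, and the inclusion ${\to^{\neg e}}\subseteq{\to_\beta}$ already recorded above gives
\[
{\to_\epsilon^{\neg e}} = {\to^{\neg e}}\cup{\to_\sigma} \subseteq {\to_\beta}\cup{\to_\sigma} = {\to_\epsilon}.
\]
A sub-relation of a strongly normalising relation is again strongly normalising, so it suffices to prove that $M$ is $\to_\epsilon$-strongly normalising.

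For this I would transpose to $\lambda$-terms the very argument already invoked for the resource calculus in Lemma \ref{snc}, namely Lemma~3.4 of \cite{regnier:sigma}. The two ingredients are: (a) $\to_\sigma$ is strongly normalising on its own, via a decreasing height/weight measure on the term (the same measure used for $\to_{\partial\sigma}$ in the resource setting); and (b) $\to_\sigma$ preserves $\beta$-strong normalisation, together with a commutation property of $\sigma$ with respect to $\beta$ which allows any infinite $\epsilon$-reduction containing infinitely many $\beta$-steps to be projected onto an infinite $\beta$-reduction. Granting (a) and (b), suppose towards a contradiction that there is an infinite $\to_\epsilon$-reduction issued from $M$. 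By (a) it cannot terminate in an infinite tail of pure $\sigma$-steps, hence it contains infinitely many $\beta$-steps; by (b) I would turn this into an infinite $\to_\beta$-reduction issued from $M$, remaining throughout inside the class of $\beta$-strongly normalising terms, contradicting the hypothesis that $M$ is $\beta$-strongly normalising. Hence $M$ is $\to_\epsilon$-strongly normalising, and by the first paragraph $M$ is $\to_\epsilon^{\neg e}$-strongly normalising.

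The hard part is exactly the commutation/projection step (b). A $\sigma_1$-rearrangement $((\lambda x.P)Q)R \to_\sigma (\lambda x.PR)Q$ can \emph{create} new $\beta$-redexes (e.g.\ when $P$ is itself an abstraction), so na\"ive postponement of $\sigma$ after $\beta$ fails and one must carry out Regnier's finer case analysis on the overlap between $\sigma_1$- and $\beta$-redexes to guarantee that each projected step is genuinely a $\beta$-step and that infinitely many survive. The only point of novelty with respect to Lemma \ref{snc} is that there strong normalisation of the base reduction was automatic (resource terms are always normalising), whereas here $\beta$-strong normalisation is a hypothesis on $M$; consequently the commutation has to be relativised to the reduction graph of $M$, using precisely that $\sigma$ keeps us within $\beta$-strongly normalising terms.
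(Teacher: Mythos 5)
Your proposal is correct and follows essentially the same route as the paper: the paper's proof is precisely an argument by contradiction appealing to Regnier's results on $\sigma$-reduction (it cites Corollary 3.5 of the $\sigma$-reduction paper, whose content is the projection/commutation argument you sketch), combined with the inclusion of the non-erasing $\epsilon$-reduction into ${\to_\beta}\cup{\to_\sigma}$. You merely unfold the citation into its constituent steps, which the paper leaves implicit.
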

\begin{proof}
By absurd \textit{via} the corollary 3.5 of \cite{regnier:sigma}.
\end{proof}

\begin{definition}
We define a set $ S $ of $ \lambda$-terms by induction as follows: 

\begin{itemize}
 \item If $ M_{1}, \dots, M_{n} \in S $ then $ x M_{1} \dots M_{n} \in S$;
 \item if $ M \in S $ then $ \lambda
  x. M \in S ; $
  \item if $ M_1 \in S $  and $ \subst {M_0}{x} {M_{1}} \dots M_{n} \in S $ then $ (\lambda x. M_0) M_{1} \dots M_{n} \in S .$

\end{itemize}

\end{definition}

\begin{lemma}\label{S1}
If $ M \in S $ then $M $ is strongly normalisable. 
\end{lemma}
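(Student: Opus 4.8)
The plan is to argue by induction on the derivation witnessing $M\in S$, that is, on the three clauses of the inductive definition of $S$. Throughout I will use the standard fact that $\beta$-reduction is finitely branching, so that a $\lambda$-term is strongly normalisable if and only if all of its one-step $\beta$-reducts are strongly normalisable; for a strongly normalisable term $M$ I write $\nu(M)\in\mathbb{N}$ for the length of a longest $\beta$-reduction issuing from $M$, so that $M\to_\beta M'$ entails $\nu(M')<\nu(M)$ and $M\twoheadrightarrow_\beta M'$ entails $\nu(M')\le\nu(M)$. The two non-redex clauses are immediate. If $M=x M_1\dots M_n$ with each $M_i\in S$, then each $M_i$ is strongly normalisable by the induction hypothesis; since the head is a variable, no reduction can create a redex at the root, so every reduct of $M$ has the form $x M_1'\dots M_n'$ with $M_i\twoheadrightarrow_\beta M_i'$, and a side induction on $\sum_{i}\nu(M_i)$ shows that $M$ is strongly normalisable. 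If $M=\lambda x.M'$ with $M'\in S$, then $M'$ is strongly normalisable by the induction hypothesis, every reduct of $M$ has the form $\lambda x.M''$ with $M'\to_\beta M''$, and $M$ is strongly normalisable as well.

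The real work is the redex clause, where $M=(\lambda x.M_0)M_1 M_2\dots M_n$ with $M_1\in S$ and $N:=M_0[M_1/x]M_2\dots M_n\in S$. By the outer induction hypothesis both $M_1$ and $N$ are strongly normalisable, so $\nu(M_1)$ and $\nu(N)$ are defined (and each $M_i$ with $i\ge 2$ is strongly normalisable, being a subterm of $N$). I will establish the auxiliary statement: \emph{whenever $M_1$ is strongly normalisable and $M_0[M_1/x]M_2\dots M_n$ is strongly normalisable, the term $(\lambda x.M_0)M_1 M_2\dots M_n$ is strongly normalisable}, proved by a secondary well-founded induction on the measure $\nu(N)+\nu(M_1)$. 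By finite branching it suffices to show that every one-step reduct of $M$ is strongly normalisable, and there are exactly four kinds of redexes in $M$: (a) the head redex, whose contractum is precisely $N$, which is strongly normalisable by hypothesis; (b) a redex inside $M_1$, producing $(\lambda x.M_0)M_1' M_2\dots M_n$; (c) a redex inside $M_0$, producing $(\lambda x.M_0')M_1 M_2\dots M_n$; and (d) a redex inside some $M_i$ with $i\ge 2$, producing $(\lambda x.M_0)M_1\dots M_i'\dots M_n$.

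In cases (b)--(d) the reduct again has the shape covered by the auxiliary statement, so I only have to check that its defining data are again admissible and have strictly smaller measure, allowing the secondary induction hypothesis to apply. Here I will use that substitution transports reductions: from $M_0\to_\beta M_0'$ one obtains $M_0[M_1/x]\to_\beta M_0'[M_1/x]$ in one step, and from $M_1\to_\beta M_1'$ one obtains $M_0[M_1/x]\twoheadrightarrow_\beta M_0[M_1'/x]$ by contracting the copied redex at each occurrence of $x$. Consequently the new contractum in each case ($M_0[M_1'/x]M_2\dots M_n$ in (b), $M_0'[M_1/x]M_2\dots M_n$ in (c), $M_0[M_1/x]M_2\dots M_i'\dots M_n$ in (d)) is a $\beta$-reduct of $N$, hence strongly normalisable with $\nu$-value at most $\nu(N)$, and strictly below $\nu(N)$ in cases (c) and (d); in case (b) instead $M_1'$ is strongly normalisable with $\nu(M_1')<\nu(M_1)$. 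Thus $\nu(N)+\nu(M_1)$ strictly decreases in every case, the secondary induction hypothesis gives that each one-step reduct of $M$ is strongly normalisable, and therefore $M$ is strongly normalisable, completing the redex clause.

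I expect the main obstacle to be exactly this redex clause. The naive idea of reducing the head redex first and invoking the hypothesis that $N$ is strongly normalisable does not by itself bound the reductions taking place inside $M_1$ or $M_0$; the point is to treat all one-step reducts uniformly through the finite-branching characterisation and to control them by the combined measure $\nu(N)+\nu(M_1)$, which the two substitution-transport facts make decrease. The remaining verifications (exhaustiveness of the four redex kinds, and the elementary transport-of-reductions claims) are routine.
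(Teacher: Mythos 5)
Your proof is correct, but it is worth noting how it relates to what the paper actually does: the paper gives no argument at all for this lemma and simply defers to the literature on perpetual reductions (the cited work of van Raamsdonk et al.), where the set $S$ is the standard inductive characterisation of the strongly normalisable terms. What you have written is, in essence, a self-contained reconstruction of that standard argument: an outer induction on the derivation of $M\in S$, with the redex clause handled by an auxiliary statement quantified over all SN data ($M_1$ SN and $M_0[M_1/x]M_2\dots M_n$ SN) and a secondary well-founded induction on the measure $\nu(N)+\nu(M_1)$. The key points are all in place: the measure is well defined because $\beta$-reduction is finitely branching; the four-way case split on one-step reducts of $(\lambda x.M_0)M_1\dots M_n$ is exhaustive, since every intermediate node of the applicative spine has an application in head position; and the two transport facts ($M_0\to_\beta M_0'$ gives $M_0[M_1/x]\to_\beta M_0'[M_1/x]$ in one step, and $M_1\to_\beta M_1'$ gives $M_0[M_1/x]\twoheadrightarrow_\beta M_0[M_1'/x]$, possibly in zero steps when $x\notin FV(M_0)$) ensure the measure strictly decreases in every non-head case, with case (b) covered because $\nu(M_1)$ drops even when $\nu(N)$ does not. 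Phrasing the auxiliary statement purely in terms of strong normalisability, rather than membership in $S$, is the right move, since reducts of terms in $S$ need not visibly remain in $S$. What your approach buys is that the paper's result no longer rests on an external citation; what the citation buys the paper is brevity and a pointer to a more general theory (perpetual strategies and maximal reduction lengths) of which this lemma is a special case.
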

\begin{proof}
See \cite{raams:perp}.
\end{proof}

\begin{proposition}\label{S}
If there exists $s \in T(M) \cap \resTerms^{+}$ such that $ NF_{\epsilon}^{\neg e}(s) \neq 0$ then $M \in S . $ 

\end{proposition}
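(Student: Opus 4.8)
The plan is to prove $M \in S$ by well-founded induction on the pair $(\|s\|, \operatorname{size}(M))$ ordered lexicographically, where $\|s\|$ denotes the length of a maximal $\tope^{\neg e}$-reduction issued from $s$, which is finite by Lemma \ref{snce}. Writing $M = \lambda x_{1}\dots\lambda x_{m}. M' N_{1}\dots N_{n}$ with $M'$ a variable or a redex (Proposition \ref{canonLa}), the approximant factors as $s = \lambda x_{1}\dots\lambda x_{m}. s^{\circ}$; since $NF_{\epsilon}^{\neg e}(s) = \lambda \vec{x}.\, NF_{\epsilon}^{\neg e}(s^{\circ})$, positivity of $s$ is equivalent to that of $s^{\circ}$, and $S$ is closed under abstraction, so I may assume $m = 0$. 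Before the case split I would record one observation, used repeatedly: by confluence of $\tope^{\neg e}$ (Lemma \ref{snce}), if some subterm occurring in argument position inside $s$ has normal form $0$, then $NF_{\epsilon}^{\neg e}(s) = 0$, because reducing that subterm to $0$ collapses the surrounding monomial to the zero sum by multilinearity of the constructors, whence $s \twoheadrightarrow_{\partial\epsilon}^{\neg e} 0$, and uniqueness of the normal form forces $NF_{\epsilon}^{\neg e}(s) = 0$.

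\emph{Case 1 (head variable).} Here $M = y N_{1}\dots N_{n}$ and $s = \langle \cdots \langle y \rangle \bar{t}_{1}\cdots\rangle \bar{t}_{n}$ with $\bar{t}_{i}\in T(N_{i})^{!}$. Positivity gives $\bar{t}_{i}\neq[]$, so each $\bar{t}_{i}$ contains a positive term $u_{i}\in T(N_{i})$, and by the observation $NF_{\epsilon}^{\neg e}(u_{i})\neq 0$. Since reductions inside $u_{i}$ lift to reductions of $s$, we have $\|u_{i}\| \le \|s\|$, while $\operatorname{size}(N_{i}) < \operatorname{size}(M)$; thus $(\|u_{i}\|, \operatorname{size}(N_{i}))$ is lexicographically smaller and the induction hypothesis yields $N_{i}\in S$ for every $i$. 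The first clause defining $S$ then gives $M = y N_{1}\dots N_{n}\in S$ (with $n=0$ this is the base case $y\in S$).

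\emph{Case 2 (head redex).} Here $M = (\lambda x. P) Q_{0} Q_{1}\dots Q_{n}$ and $s = \langle \cdots \langle \langle \lambda x. p\rangle \bar{q}_{0}\rangle \bar{q}_{1}\cdots\rangle \bar{q}_{n}$ with $p\in T(P)$, $\bar{q}_{0}\in T(Q_{0})^{!}$ and $\bar{q}_{i}\in T(Q_{i})^{!}$. The key point, which I expect to be the real engine of the proof, is that positivity together with $NF_{\epsilon}^{\neg e}(s)\neq 0$ forces the resource head redex to be non-erasing. Indeed, positivity gives $\bar{q}_{0}\neq[]$, and if $n_{x}(p)\neq\length{\bar{q}_{0}}$ then $\nsubst{p}{x}{\bar{q}_{0}} = 0$, so $s \twoheadrightarrow_{\partial\epsilon}^{\neg e} 0$ and, by the observation, $NF_{\epsilon}^{\neg e}(s)=0$, a contradiction. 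Hence $n_{x}(p)=\length{\bar{q}_{0}}\ge 1$, so $x\in FV(p)$ and the step $s \to_{\partial}^{\neg e} s'$ is legal, with $s' = \langle\cdots\langle \nsubst{p}{x}{\bar{q}_{0}}\rangle \bar{q}_{1}\cdots\rangle \bar{q}_{n}$. Using Lemma \ref{subres}, every summand $s'_{k}$ of $s'$ lies in $T(\subst{P}{x}{Q_{0}} Q_{1}\dots Q_{n})$, and by the lemma that $n$-linear substitution preserves positivity (applied to the positive $p$ and the positive monomial $\bar{q}_{0}\neq[]$) together with the positivity of the remaining applications, each $s'_{k}$ is positive. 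Since $NF_{\epsilon}^{\neg e}(s') = NF_{\epsilon}^{\neg e}(s)\neq 0$ splits by linearity, some $s'_{k_{0}}$ has $NF_{\epsilon}^{\neg e}(s'_{k_{0}})\neq 0$; as $\|s'_{k_{0}}\| < \|s\|$ (one step to $s'$), the induction hypothesis gives $\subst{P}{x}{Q_{0}} Q_{1}\dots Q_{n}\in S$.

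It remains to place $Q_{0}$ in $S$ so as to invoke the third clause defining $S$. Writing $\bar{q}_{0} = [v_{1},\dots,v_{l}]$ with $l\ge 1$, each $v_{j}\in T(Q_{0})$ is positive, and by the observation each satisfies $NF_{\epsilon}^{\neg e}(v_{j})\neq 0$, for otherwise the monomial $\bar{q}_{0}$, and hence $s$, would reduce to $0$. Moreover $\|v_{1}\| < \|s\|$: one may first normalize $v_{1}$ inside the argument $\bar{q}_{0}$ and then fire the head redex, which is still available since $x\in FV(p)$, exhibiting a reduction of length $\|v_{1}\| + 1$. The induction hypothesis then gives $Q_{0}\in S$, and the third clause of the definition of $S$ yields $(\lambda x. P) Q_{0} Q_{1}\dots Q_{n} = M\in S$. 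The main obstacle is precisely the forcing argument opening Case 2: it is what rules out the pathological erasing redexes, such as the one in $((\lambda y.\lambda x.xx)z)\lambda x.xx$, and it rests essentially on both the positivity of the chosen approximant and the confluence of $\tope^{\neg e}$ supplied by Lemma \ref{snce}.
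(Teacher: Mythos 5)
Your Case 2 rests on a forcing claim that is false, and the case it purports to eliminate is exactly the one that needs separate treatment. You argue: if $n_{x}(p)\neq\length{\bar q_{0}}$ then $\partial_{x}p\cdot\bar q_{0}=0$, hence $s\twoheadrightarrow_{\partial\epsilon}^{\neg e}0$, contradicting $NF_{\epsilon}^{\neg e}(s)\neq 0$. But the step firing $\langle\lambda x.p\rangle\bar q_{0}$ is a legal $\to_{\partial}^{\neg e}$-step only when $x\in FV(p)$; when $n_{x}(p)=0$ the redex is \emph{erasing}, it is frozen under the non-erasing reduction, and no contradiction arises. Concretely, take $M=(\lambda x.y)z$ and $s=\langle\lambda x.y\rangle[z]\in T(M)\cap\resTerms^{+}$: the argument list is non-empty, no $\sigma_1$-rule applies, and the head redex cannot be fired non-erasingly, so $s$ is its own non-erasing $\epsilon$-normal form and $NF_{\epsilon}^{\neg e}(s)=s\neq 0$. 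The hypothesis of the proposition holds, yet your proof declares this configuration impossible and provides no route to $M\in S$ (which does hold, via the third clause with $y[z/x]=y$). Note that positivity only forbids empty argument lists; it does not forbid erasing redexes --- on the contrary, it is precisely a positive erasing redex that can never be fired by $\to_{\partial\epsilon}^{\neg e}$, since firing it with the full resource reduction would give $0$. Your forcing argument is sound only in the subcase $x\in FV(p)$, where it correctly pins $n_{x}(p)=\length{\bar q_{0}}$; there your treatment agrees in substance with the paper's use of subject reduction (Lemma \ref{subred}) and the positivity-preservation lemma.

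The paper handles the missing branch separately: when $x\notin FV(P)$ it forms $s'=\langle\cdots\langle\lambda x.p\rangle[]\rangle\bar q_{1}\cdots\rangle\bar q_{n}$, fires the erasing step $s'\to_{\partial\epsilon}^{e}t$ with $t=\langle\cdots\langle p\rangle\bar q_{1}\cdots\rangle\bar q_{n}\in T(N)$ for $N=PQ_{1}\dots Q_{n}$, and uses Lemma \ref{epresnf} (transfer of zero normal forms across erasing steps) together with confluence to push the induction through, recovering $Q_{0}\in S$ from the elements of $\bar q_{0}$ much as you do. This detour is not cosmetic: the non-vanishing of $NF_{\epsilon}^{\neg e}(t)$ cannot be read off from the reductions of $s$ directly, because in $s$ the frozen redex separates $p$ from $\bar q_{1},\dots,\bar q_{n}$, and redexes of $t$ only become visible in $s$ after $\sigma$-steps. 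Finally, your closing remark inverts the actual mechanism: for $((\lambda y.\lambda x.xx)z)\lambda x.xx$ it is the $\sigma_1$-rule, not any forcing of non-erasure, that does the work --- after a $\sigma_1$-step the inner redex $\langle\lambda x.\langle x\rangle[x]\rangle[\lambda x.\langle x\rangle[x]]$ becomes accessible, is non-erasing, and annihilates by multiplicity mismatch, which is why every positive approximant of that term has zero non-erasing normal form.
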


\begin{proof}

By induction on the size of $ s $. 

Let $ M = \lambda x_1 \dots \lambda x_m . x M_{1} \dots M_{n} $. Then the result follows immediately from the IH, since $ \to_{\partial \epsilon}^{\neg e} $ is confluent and strongly normalising (Lemma \ref{snce}). 

Let $M = \lambda x_{1}... \lambda x_{m}. (\lambda x. P) Q_{0} Q_{1}... Q_{n} $ and \[N = \lambda x_{1}... \lambda x_{m}. P[Q_{0}/ x] Q_{1}... Q_{n}.\] 

Let $ x \in FV(P) .$ By  Lemma \ref{subred} there exists $\tau$ such that $s \to^{\neg e}_{\partial \epsilon} \tau$ and $\tau = \sum_{i = 1}^{n} t_i $ with $ t_1, \dots, t_n \in T_{r}(N) $. $ \tau \neq 0$ by Lemma \ref{snce}, since $ NF_{\epsilon}^{\neg e}(s) \neq 0.$

Since \[s = \lambda x_1 ... \lambda x_{m}. \langle \dots \langle \lambda x. p \rangle \bar{q_0} \rangle \bar{q_1} \dots \rangle \bar{q_n} \] then there exists $ t = \lambda x_1 ... \lambda x_{m}. \langle \dots \nsubst {p}{x}{\bar{q_0}} \rangle \bar{q_1} \dots \rangle \bar{q_n} \in T(N) $ such $ s \to^{\neg e}_{\partial \epsilon} t + \sum_{i}^{n} t_{i} = \tau  $ for some $ t_{i} \in T(N) $. By IH, strong normalisation and confluence (Lemma \ref{snce}) we have that $  Q_{0}, N \in S . $ Then, by definition of $ S$, $ M \in S .$

If $x \notin FV(M) $ then $ s = \lambda x_1 ... \lambda x_{m}. \langle \dots \langle \lambda x. p \rangle \bar{q_0} \rangle \bar{q_1} \dots \rangle \bar{q_n} $. We take $ s' =   \lambda x_1 ... \lambda x_{m}. \langle \dots \langle \lambda x. p \rangle [] \rangle \bar{q_1} \dots \rangle \bar{q_n} $. 

Then $ s' \to^{e}_{\partial\epsilon} t $, with $ t =  \lambda x_1 ... \lambda x_{m}. \langle \dots \langle p  \rangle \bar{q_1} \dots \rangle \bar{q_n} \in \rexp{N} $. By Lemmas \ref{snce} and \ref{epresnf} we can conclude, since $s(t) < s(s) $ and by IH $  Q_{0}, N \in S $.

\end{proof}

\begin{theorem} \label{snorm}
Let $M \in \Lambda$. The following statements are equivalent:\\
(i) There exists $s \in T(M)$ such that $ NF_{\epsilon}^{\neg e}(s) \subseteq {\resTerms}^{+}$; \\
(ii) There exists $s \in T(M) \cap \resTerms^{+}$ such that $ NF_{\epsilon}^{\neg e}(s) \neq 0$; \\
(iii) $ M \in S $;\\
(iv) $M$ is strongly normalizable; \\
(v)$ M $ is non-erasing $ \epsilon $-normalisable.

\end{theorem}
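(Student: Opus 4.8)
The plan is to prove the five statements equivalent by closing a single cycle through the four ``operational'' statements $(ii)$--$(v)$ and then splicing $(i)$ into it. Three arrows are already available off the shelf: $(ii)\Rightarrow(iii)$ is Proposition \ref{S}, $(iii)\Rightarrow(iv)$ is Lemma \ref{S1}, and $(iv)\Rightarrow(v)$ is Lemma \ref{fromsnetosn}. It therefore remains to produce $(v)\Rightarrow(ii)$, together with the two arrows $(ii)\Rightarrow(i)$ and $(i)\Rightarrow(v)$ that attach $(i)$ to the cycle. The common hook I will exploit is that both $(i)$ and $(ii)$ provide an $s\in T(M)$ with $NF_{\epsilon}^{\neg e}(s)\neq 0$ (for $(i)$, reading $NF_{\epsilon}^{\neg e}(s)\subseteq\resTerms^{+}$ non-vacuously, so that its support is nonempty).

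For the arrows $(i)\Rightarrow(v)$ and $(ii)\Rightarrow(v)$ I would first isolate the auxiliary fact that any $M$ admitting some $s\in T(M)$ with $NF_{\epsilon}^{\neg e}(s)\neq 0$ is non-erasing $\epsilon$-normalizable. I prove the contrapositive: if $M$ is not non-erasing $\epsilon$-normalizable, fix an infinite chain $M=M_{0}\to^{\neg e}_{\epsilon}M_{1}\to^{\neg e}_{\epsilon}\cdots$. By Subject reduction (Lemma \ref{subred}) each $\lambda$-step is simulated by at least one step of $\tope^{\neg e}$ on the resource side, namely $s\twoheadrightarrow^{\neg e}_{\partial\epsilon}{}^{+}\sum_i t_i$ with the $t_i\in T(M_{1})$, and iterating along the chain yields, for every $s\in T(M)$, either a stage at which the sum collapses to $0$ or an infinite $\tope^{\neg e}$-reduction. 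The latter is impossible by strong normalization (Lemma \ref{snce}), so $NF_{\epsilon}^{\neg e}(s)=0$ for every $s$, which negates the hook. Hence the hook forces $(v)$.

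For $(ii)\Rightarrow(i)$ I would record that positivity is preserved along $\tope^{\neg e}$. The base case of $\to_{\partial}^{\neg e}$ applies only when $x\in FV(s)$, so a nonzero contraction of a positive redex substitutes a nonempty positive monomial into a positive term and remains positive by the substitution lemma preceding Proposition \ref{pres} (and by Proposition \ref{pres} itself); the rule $\to_{\partial\sigma}$ merely rebrackets the arguments of a positive term, visibly keeping every monomial nonempty. Thus if $s\in T(M)\cap\resTerms^{+}$, every term of $NF_{\epsilon}^{\neg e}(s)$ is positive, i.e.\ $NF_{\epsilon}^{\neg e}(s)\subseteq\resTerms^{+}$; since $(ii)$ guarantees this normal form is nonzero, $(i)$ follows.

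The remaining arrow $(v)\Rightarrow(ii)$ is where the real work lies, and it is the main obstacle. Lemma \ref{Taysn} only yields an $s\in T(M)$ whose normal form \emph{meets} $\resTerms^{+}$, i.e.\ $NF_{\epsilon}^{\neg e}(s)\cap\resTerms^{+}\neq\emptyset$, which is weaker than $(ii)$. To get $(ii)$ I would strengthen its proof. Let $N$ be the non-erasing $\epsilon$-normal form of $M$ and choose a positive $t_{0}\in T(N)$ that is already a $\tope^{\neg e}$-normal form; such a $t_{0}$ exists because $N$ has no non-erasing redex, so no positive expansion of $N$ creates one (empty-argument redexes survive $\tope^{\neg e}$). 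Then I anti-reduce $t_{0}$ along $M\twoheadrightarrow^{\neg e}_{\epsilon}N$ using Subject expansion (Lemma \ref{antired}) in its positive regime, where the reduction is exact. The delicate point, and the crux of the whole theorem, is that the recovered $s\in T(M)$ can be taken positive: at each reconstructed redex one must pick a body $p\in T(P)$ with $n_{x}(p)=\length{\bar q}\geq 1$ and a nonempty positive monomial $\bar q$, so that the expansion introduces no empty list. Granting this, $s\in T(M)\cap\resTerms^{+}$, and since $t_{0}$ is a normal term appearing in $supp(NF_{\epsilon}^{\neg e}(s))$ we get $NF_{\epsilon}^{\neg e}(s)\neq 0$, establishing $(ii)$. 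Together with the three cited arrows this closes the cycle $(v)\Rightarrow(ii)\Rightarrow(iii)\Rightarrow(iv)\Rightarrow(v)$, and the pair $(ii)\Rightarrow(i)\Rightarrow(v)$ makes $(i)$ equivalent to the rest.
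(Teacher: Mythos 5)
Your architecture is genuinely different from the paper's, and in two places it is more careful. The paper closes the single cycle $(i)\Rightarrow(ii)\Rightarrow(iii)\Rightarrow(iv)\Rightarrow(v)\Rightarrow(i)$, quoting Proposition \ref{pres} for $(i)\Rightarrow(ii)$ and Lemma \ref{Taysn} for $(v)\Rightarrow(i)$. You instead close the cycle $(v)\Rightarrow(ii)\Rightarrow(iii)\Rightarrow(iv)\Rightarrow(v)$ and splice $(i)$ in via $(ii)\Rightarrow(i)$ and $(i)\Rightarrow(v)$. This rearrangement buys two things. First, you correctly observe that Lemma \ref{Taysn} only concludes $NF_{\epsilon}^{\neg e}(s)\cap\resTerms^{+}\neq\emptyset$, which does not give the inclusion demanded by $(i)$; the paper silently identifies the two, whereas you repair the mismatch by proving $(v)\Rightarrow(ii)$ through a strengthened subject expansion and recovering $(i)$ from $(ii)$ afterwards. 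Second, Proposition \ref{pres} is a \emph{forward} preservation statement, so it naturally proves your $(ii)\Rightarrow(i)$ (every $\tope^{\neg e}$-reduct of a positive term is positive, once you extend preservation to the $\to_{\partial\sigma}$ rule, as you do), rather than the paper's $(i)\Rightarrow(ii)$, which would have to manufacture a positive element of $T(M)$ from information about its normal form only. Your new arrow $(i)\Rightarrow(v)$ (an infinite non-erasing chain forces $NF_{\epsilon}^{\neg e}(s)=0$ for every $s\in T(M)$, by Lemma \ref{subred}, strong normalization and confluence from Lemma \ref{snce}, and the absence of cancellation in $\N$-coefficient sums) is sound, as is your non-vacuous reading of $(i)$, without which $(i)$ would be trivially satisfied by any $s$ with zero normal form.

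The one real debt in your proposal is the phrase ``Granting this'' in $(v)\Rightarrow(ii)$: you state, but do not prove, that the anti-reduct $s$ can be taken positive, and likewise that a positive $t_0\in T(N)$ is automatically a $\tope^{\neg e}$-normal form. Both follow from a single auxiliary lemma that you should make explicit: \emph{if $x\in FV(P)$ and $p\in T(P)$ is positive, then $n_x(p)\geq 1$} (induction on $P$; in the application case the argument multiset of $p$ is nonempty by positivity, so the induction hypothesis applies to one of its elements). With this lemma, in any decomposition $t_0\in supp(\nsubst{p}{x}{\bar{q}})$ of a positive $t_0\in T(P[Q/x])$ supplied by Lemma \ref{subres}, the body $p$ and the elements of $\bar{q}$ are positive (empty lists of $p$ persist in $t_0$, and each element of $\bar{q}$ is a subterm of $t_0$), and since $x\in FV(P)$ the lemma forces $n_x(p)=\length{\bar{q}}\geq 1$; hence $\rappl{\lambda x. p}{\bar{q}}$ is positive, as you need. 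The same lemma disposes of spurious resource $\sigma$-redexes in positive expansions of the $\epsilon$-normal form $N$: a subterm $\rappl{\rappl{\lambda x. s}{\bar{t}}}{\bar{q}}$ of $t_0$ comes from a subterm $((\lambda x. A)B)C$ of $N$ with $x\in FV(C)$ (else $N$ would have a $\sigma_1$-redex), and positivity of $\bar{q}\in T(C)^{!}$ then forces $x\in FV(\bar{q})$, so the side condition of $\to_{\partial\sigma}$ fails; non-erasing resource $\beta$-redexes are excluded without positivity, since $x\notin FV(A)$ implies $x\notin FV(a)$ for every $a\in T(A)$. Once this lemma is inserted, your proof is complete and, if anything, tighter than the paper's.
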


\begin{proof}
$(i) \Rightarrow (ii) $ is a corollary of Propostion $ \ref{pres} $. $(ii) \Rightarrow (iii)$ is Proposition \ref{S}. $(iii) \Rightarrow (iv)$ is Lemma \ref{S1}. $ (iv) \Rightarrow (v) $ is Lemma \ref{fromsnetosn}. Finally, $(v) \Rightarrow (i)  $ derives from Lemma \ref{Taysn}.
\end{proof}

\subsection{Conservation Theorem for the $\lambda I$-calculus} \label{cons}

As corollary of Theorem \ref{snorm} we get Theorem 9.1.5 of \cite{bar:lam}. We define the set of $ \lambda I$-terms by induction as follows:

\[
	\Lambda I \ni M, N ::=
		x \mid \lambda x. M \text{ if } x \in FV(M) \mid MN \] 
		
In particular we have that $  \Lambda I \subset \Lambda.$

\begin{theorem}
Let $ M  $ be a $\lambda I $-term. Then $ M $ is normalisable iff $ M $ is strongly normalisable.
\end{theorem}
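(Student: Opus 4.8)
The plan is to deduce the statement directly from the master equivalence of Theorem \ref{snorm}, which already carries all the analytic content. The direction ``strongly normalisable $\Rightarrow$ normalisable'' is immediate and holds for arbitrary $\lambda$-terms: a strongly normalisable term admits no infinite $\beta$-reduction, so any maximal reduction sequence is finite and ends in a $\beta$-normal form. The whole problem therefore reduces to showing that a normalisable $\lambda I$-term is strongly normalisable, and for this I would aim to verify clause $(v)$ of Theorem \ref{snorm}, namely that $M$ is non-erasing $\epsilon$-normalisable; the implication $(v)\Rightarrow(iv)$ then yields strong normalisation for free.

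The first key step specialises the general theory to the $\lambda I$-fragment: on $\Lambda I$ the relation $\to_\beta$ coincides with the non-erasing reduction $\to^{\neg e}$. Indeed, every $\beta$-redex occurring in a $\lambda I$-term has the form $(\lambda x. P)Q$ with $\lambda x. P$ a $\lambda I$-abstraction, hence $x\in FV(P)$, so contracting it is a non-erasing step. Moreover $\Lambda I$ is closed under $\beta$-reduction, so an entire $\beta$-reduction chain starting from a $\lambda I$-term stays inside $\Lambda I$ and is step by step a $\to^{\neg e}$-chain, hence an $\to^{\neg e}_{\epsilon}$-chain since $\to^{\neg e}\subseteq\to^{\neg e}_{\epsilon}$.

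The second step concerns the target of such a chain. Assuming $M$ normalisable, I would fix a reduction $M=M_0\to_\beta\cdots\to_\beta N$ with $N$ a $\beta$-normal form and argue that $N$ is already an $\to^{\neg e}_{\epsilon}$-normal form. It contains no non-erasing redex, since a non-erasing redex is in particular a $\beta$-redex; and it contains no $\sigma$-redex either, because every $\sigma_1$-redex $((\lambda x. P')Q')R'$ contains the $\beta$-redex $(\lambda x. P')Q'$ as a subterm and so cannot occur in a $\beta$-normal form. Combining the two steps, the chain realises $M$ as $\to^{\neg e}_{\epsilon}$-reducible to the $\to^{\neg e}_{\epsilon}$-normal form $N$, so $M$ satisfies clause $(v)$, and Theorem \ref{snorm} delivers strong normalisation.

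The argument is short precisely because all the work is absorbed into Theorem \ref{snorm}; the only genuinely $\lambda I$-specific inputs are the coincidence $\to_\beta=\to^{\neg e}$ on $\Lambda I$ and the closure of $\Lambda I$ under $\beta$-reduction. The point that would most repay care is ensuring that no $\sigma$-step is ever forced: one must check that a $\beta$-normal $\lambda I$-term is simultaneously free of non-erasing redexes and of $\sigma$-redexes, so that the given $\beta$-normalising reduction can be read verbatim as an $\epsilon$-normalising one without any commutation of redexes. I expect this to be the only delicate spot, where a careless reading of the definitions — for instance forgetting that every $\sigma$-redex is witnessed by an underlying $\beta$-redex — could derail the proof.
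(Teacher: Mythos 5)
Your proposal is correct and follows essentially the same route as the paper: both reduce the problem to clause $(v)$ of Theorem \ref{snorm} by observing that on $\Lambda I$ every $\beta$-step is non-erasing and that a $\beta$-normal form is already a non-erasing $\epsilon$-normal form (in particular $\sigma_1$-redex-free), whence $(v)\Rightarrow(iv)$ gives strong normalisation. Your write-up merely makes explicit the details the paper compresses into ``trivially'' — the coincidence of $\to_\beta$ with $\to^{\neg e}$ on $\Lambda I$, the closure of $\Lambda I$ under reduction, and the absence of $\sigma$-redexes in a $\beta$-normal form.
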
		
\begin{proof}

$(\Rightarrow )$ By Theorem \ref{bnorm}.  Since $ M $ is a $\lambda I $-term, trivially $ M $ is non-erasing $\epsilon $-normalisable, by observing that a $ \beta$-normal form is also a non-erasing $ \epsilon$-normal form. Then $ M $ is strongly normalisable by Theorem \ref{snorm}.
$(\Leftarrow)$ Trivial.
\end{proof}

\section{Technicalities}

\subsection{Failure of confluence for rigid terms} \label{fail}

If we set $ {\to_{r \epsilon}} =  { \to_r \cup \to_{\sigma_1}} $ we get the following counter example to the confluence of $ {\to _{r \epsilon}} :$

\[ s = \rappl{\lambda y. \rappl{\rappl{\lambda x. \rappl{x}{(x)}}{(y,y)}}{(y)}}{(\lambda f. \rappl{z}{(f)},\lambda f. \rappl{z}{()}, \lambda f. \rappl{z}{()})} \] 

Then 

\[ s \to_{\sigma 1} s' =  \rappl{\lambda y. \rappl{\lambda x. \rappl{ \rappl{x}{(x)}}{(y)}}{(y,y)}}{(\lambda f. \rappl{z}{(f)},\lambda f. \rappl{z}{()}, \lambda f. \rappl{z}{()})}  \]

Then 

\[ s' \to_{r} s'' = \rappl{\lambda x. \rappl{ \rappl{x}{(x)}}{(\lambda f. \rappl{z}{(f)})}}{(\lambda f. \rappl{z}{()},\lambda f. \rappl{z}{()})}  \]

Then

\[ s'' \to_{r} s''' = \rappl{ \rappl{\lambda f. \rappl{z}{()}}{(\lambda f. \rappl{z}{()})}}{(\lambda f. \rappl{z}{(f)})} \]

and $ s''' \to_{r} 0 .$

But if one performs the $r$-reduction step before, $ NF(s) \neq 0 :$

\[ s \to_{r} s' =  \rappl{\rappl{\lambda x. \rappl{x}{(x)}}{(\lambda f. \rappl{z}{(f)},\lambda f. \rappl{z}{()})}}{(\lambda f. \rappl{z}{()})} \]. 

Then

\[ s' \to_{\sigma 1} s'' = \rappl{\lambda x. \rappl{ \rappl{x}{(x)}}{(\lambda f. \rappl{z}{()})}} {(\lambda f. \rappl{z}{(f)},\lambda f. \rappl{z}{()})} \]

And then in 2 steps of $r$-reduction one arrives to a non zero normal form.

The failure of confluence is due to the rigidity of the calculus, in the sense that the substitution does not perceives that the free occurrences of $ x $ are changing place after a step of $ \sigma -$reduction. 

This form of confluence failure is particularly bad because a term can have a zero and a non zero normal form. In this way all the approximation results are lost.

\subsection{Confluence} \label{A}

For any resource expression $e$, we write $n_x(e)$ for the number of occurrences of variable $x$ in $e$.
\begin{definition}
  Let $ e \in \resExprs $, $ \bar{u}= [u_{1},\dotsc,u_{n}] \in \resTerms^{!} $ and $ x\in \mathcal{V}$.
	We  define  the \definitive{$n$-linear substitution} $ \nsubst ex{\bar u}$
	of $ \bar{u} $ for $ x $ in $ e $ as follows:
	\[
		\nsubst ex{\bar u} = \begin{cases}
			\sum\limits_{S\in \mathfrak{S}_{n}} e[ u_{S(1)}/x_{1},\dotsc, u_{S(n)}/x_{n} ] & \text{ if } n_{x}(e) = n
			\\
			0 & \text{ otherwise}
		 \end{cases}
	\]
	where $x_1,\dotsc,x_{n_x(e)}$ enumerate the occurrences of $x$ in $e$.
\end{definition}
In other words, 
if $ \bar{u}= [u_{1},\dotsc,u_{n}] \in \Delta^{!} $ and $ x\in \mathcal{V}$ then:
\begin{gather*}
	\partial_{x} y \cdot \bar{u}=
	\begin{cases}   y &  \text{if} \  y \neq x   \text{ and }  n = 0\\
	u_{1} & \text{if }   y = x  \text{ and }  n= 1  \\ 0 & \text{otherwise}\end{cases}
	\\
	\partial_{x} {\lambda y. s}\cdot \bar{u} =
	\lambda y. ( \partial_{x} s \cdot \bar{u})\\
	\begin{aligned}
	\nsubst {\langle s \rangle \bar{t} } {x}  {\bar{u}} &=
	\sum_{(I_{0}, I_{1}) \text{ partition of }  \{1,\dotsc,n\} } \rappl { \nsubst  {s}{x} {\bar{u}_{I_{0}}} } {\nsubst  {\bar{t}} {x} {\bar{u}_{I_{1}} }}
	\\
	\partial_{x} [t_{1},\dotsc, t_{k}]\cdot \bar{u} &= 
	\sum_{(I_{1},\dotsc, I_{k}) \text{ partition of }  \{1,\dotsc,n\} } [\partial_{x} t_{1} \cdot \bar{u}_{I_{1}},\dotsc, \partial_{x} t_{n} \cdot \bar{u}_{I_{k}} ]
	\end{aligned}
\end{gather*}
where we write $\bar{u}_{\{i_1,\dots,i_k\}}:= [u_{i_1},\dots,u_{i_k}]$
whenever $1\le i_1<\dots<i_k\le n$.\footnote{
	To be precise, we say $(I_{1},.., I_{k})$ is a partition of a set $X$
	if the $I_j$'s are (possibly empty) pairwise disjoint subsets of 
	$X$ and $X=\bigcup_j I_j$.
	This data is equivalent to a function $\{1,\dotsc,n\}\to\{1,\dotsc,k\}$.
}

To prove the confluence of $ \to^{ \neg e}_{r \epsilon} $ (Lemma \ref{snce}) we use the standard technique of \cite{bar:lam}, defining a parallel non-erasing $ \epsilon $-reduction. 

\begin{definition}

We define $ \pare \subseteq \resExprs \times \resExprs $ by induction as follows:

\begin{itemize}
\item $ s \pare s ; $
\item If $  s \pare s' $ then $ \lambda x. s \pare \lambda x. s' ;$
\item If $ s \pare s' $ and $ \bar{t} \pare \bar{t'} $ then $ \rappl{s}{\bar{t}} \pare \rappl{s'}{\bar{t'}} ; $
\item  If $ s \pare s' $ and $ \bar{t} \pare \bar{t'} $ and $ x \in FV(s) $  then $ \rappl{\lambda x. s}{\bar{t}} \pare \nsubst {s'}{x}{\bar{t'}}; $
\item  If $ s \pare s' $, $ \bar{t} \pare \bar{t'} $, $\bar{q} \pare \bar{t'}  $ and $ x \notin FV(\bar{q}) $ then $ \rappl{\rappl{\lambda x . s}{\bar{t}}}{\bar{q}} \pare \rappl{\lambda x. \rappl{s'}{\bar{q'}}}{\bar{t'}}. $
\item if $ t_i \pare t'_i $ for $i \in \{ 1, \dots, n \} $ then $ [t_{1}, \dots, t_{n}] \pare [t'_{1}, \dots, t'_{n}]. $
\end{itemize}

\end{definition}

We extend the reduction defined above to finite terms by linearity: $ \sigma \pare \sigma' $ if there exists $ s \in supp(\sigma) $ and $ s' \in supp(\sigma') $ such that $ s\pare s' .$

\begin{lemma}\label{sconfl}
Let $x \in FV(s)$. If $ s \pare s' $ and $ \bar{t} \pare \bar{t'} $ then $\nsubst {s}{x}{\bar{t}} \pare \nsubst {s'}{x}{\bar{t'}}. $

\end{lemma}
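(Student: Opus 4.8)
The plan is to argue by induction on the derivation of $s \pare s'$, treated simultaneously with the analogous statement for monomials $\bar s \pare \bar s'$, using at each node the explicit clauses defining $\nsubst{e}{x}{\bar u}$. Before starting the induction I would dispose of the degenerate cases: since $\pare$ replaces each bound occurrence by exactly one resource term and consumes the whole argument multiset, it preserves the number of free occurrences of every variable, so $n_x(s)=n_x(s')$; likewise $\bar t\pare\bar{t'}$ reduces componentwise (or is reflexive) and hence preserves length, $\length{\bar t}=\length{\bar{t'}}$. Consequently $\nsubst{s}{x}{\bar t}=0$ iff $\length{\bar t}\neq n_x(s)$ iff $\nsubst{s'}{x}{\bar{t'}}=0$, and in that case both sides vanish and $0\pare 0$ trivially. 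I may therefore assume $\length{\bar t}=n_x(s)=n_x(s')=\length{\bar{t'}}$; moreover, for any index set $I\subseteq\{1,\dots,\length{\bar t}\}$ componentwise reduction gives $\bar t_I\pare\bar{t'}_I$, which is what lets the induction hypothesis be applied under each partition. I would also drop the hypothesis $x\in FV(s)$ and prove the statement for arbitrary $s$: when $n_x(s)=0$ the argument multiset is forced to be empty and $\nsubst{s}{x}{[]}=s\pare s'=\nsubst{s'}{x}{[]}$ directly, the stated form being the special case $x\in FV(s)$.

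For the congruence rules the argument is uniform. In the variable case $s=s'=x$ forces $\bar t=[u]$, $\bar{t'}=[u']$ with $u\pare u'$, and $\nsubst{x}{x}{[u]}=u\pare u'=\nsubst{x}{x}{[u']}$. For $s=\lambda y.s_1$ (with $y\neq x$, $y\notin FV(\bar t)$ by the variable convention) the reduction is $s'=\lambda y.s_1'$ with $s_1\pare s_1'$, and since $\nsubst{\lambda y.s_1}{x}{\bar t}=\lambda y.(\nsubst{s_1}{x}{\bar t})$ the claim follows from the induction hypothesis and the $\lambda$-congruence of $\pare$. For an application congruence $\rappl{s_1}{\bar u}\pare\rappl{s_1'}{\bar u'}$ and for a monomial $[s_1,\dots,s_k]\pare[s_1',\dots,s_k']$ I would expand $\nsubst{\cdot}{x}{\bar t}$ as its sum over partitions of $\{1,\dots,\length{\bar t}\}$, observe that each summand is obtained from the corresponding primed summand by reducing the sub-expressions under the split argument multisets (which parallel-reduce by the remark above), apply the induction hypothesis to each factor, and close each summand with the application- resp. monomial-congruence rule; linearity of the extension of $\pare$ to finite sums then relates the two sums. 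The components in which $x$ does not occur are handled by the degenerate clause already noted.

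The substantive cases are the two redex rules, and the commutation of $\partial_x$ with a fired redex is where I expect the real work. For the $\beta$-rule, $s=\rappl{\lambda y.s_1}{\bar u}$ with $y\in FV(s_1)$ and $s'=\nsubst{s_1'}{y}{\bar u'}$; expanding the left-hand side gives $\nsubst{s}{x}{\bar t}=\sum_{(I_0,I_1)}\rappl{\lambda y.(\nsubst{s_1}{x}{\bar t_{I_0}})}{(\nsubst{\bar u}{x}{\bar t_{I_1}})}$, a sum over partitions of $\{1,\dots,\length{\bar t}\}$ in which each summand is again a $\beta$-redex (here I use that $y\in FV(s_1)$ and $y\notin FV(\bar t)$ guarantee $y\in FV(\nsubst{s_1}{x}{\bar t_{I_0}})$). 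Firing each summand with the $\beta$-rule of $\pare$, and using the induction hypothesis for $\nsubst{s_1}{x}{\bar t_{I_0}}\pare\nsubst{s_1'}{x}{\bar{t'}_{I_0}}$ and $\nsubst{\bar u}{x}{\bar t_{I_1}}\pare\nsubst{\bar u'}{x}{\bar{t'}_{I_1}}$, the left-hand side reduces to $\sum_{(I_0,I_1)}\nsubst{(\nsubst{s_1'}{x}{\bar{t'}_{I_0}})}{y}{(\nsubst{\bar u'}{x}{\bar{t'}_{I_1}})}$. It then remains to recognise this as exactly $\nsubst{(\nsubst{s_1'}{y}{\bar u'})}{x}{\bar{t'}}=\nsubst{s'}{x}{\bar{t'}}$, i.e. to prove the commutation identity $\nsubst{(\nsubst{u}{y}{\bar v})}{x}{\bar w}=\sum_{(I_0,I_1)}\nsubst{(\nsubst{u}{x}{\bar w_{I_0}})}{y}{(\nsubst{\bar v}{x}{\bar w_{I_1}})}$ for iterated linear substitution (the resource analogue of the substitution lemma), which I would establish by a separate induction on $u$. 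The $\sigma_1$-rule case is analogous but combinatorially heavier: $s=\rappl{\rappl{\lambda y.s_1}{\bar u}}{\bar q}$ with $y\notin FV(\bar q)$, and after distributing $\partial_x$ over the doubly-nested application (now a sum over partitions of $\{1,\dots,\length{\bar t}\}$ into three blocks feeding $s_1$, $\bar u$ and $\bar q$) each summand is a $\sigma_1$-redex, which one fires and then matches against the expansion of $\nsubst{(\rappl{\lambda y.\rappl{s_1'}{\bar q'}}{\bar u'})}{x}{\bar{t'}}$; the side condition $y\notin FV(\nsubst{\bar q}{x}{\bar t_I})$ needed to fire the rule follows from $y\notin FV(\bar q)$ together with $y\notin FV(\bar t)$. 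The main obstacle throughout is precisely this partition book-keeping together with the two commutation identities for linear substitution; once they are isolated and proved, every case closes by the induction hypothesis and the matching congruence or redex rule of $\pare$.
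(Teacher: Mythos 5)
Your proposal is correct and follows essentially the same route as the paper's proof: induction on the derivation of $s \pare s'$, the same preliminary observation that matching sizes/occurrence counts make the two substitutions vanish simultaneously, congruence cases closed by the induction hypothesis under the partition decomposition, and the redex cases resolved by commuting the two linear substitutions. The only difference is one of polish rather than substance: you isolate the commutation identity $\nsubst{(\nsubst{u}{y}{\bar v})}{x}{\bar w}=\sum_{(I_0,I_1)}\nsubst{(\nsubst{u}{x}{\bar w_{I_0}})}{y}{(\nsubst{\bar v}{x}{\bar w_{I_1}})}$ as a separate lemma to be proved by induction on $u$ (and make the occurrence-preservation argument explicit), where the paper settles for a support inclusion justified ``by an inspection of the definition.''
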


\begin{proof}
The proof is by induction on the definition of $ s \pare s' . $  We notice that if $ \bar{t} \pare \bar{t'} $ then the two lists have the same size. Thus, by definition of $n $-linear substitution $ \nsubst {s}{x}{\bar{t}} = 0$ iff $ \nsubst {s'}{x}{\bar{t'}} = 0.$ Then we can focus on the case that $\nsubst {s}{x}{\bar{t}} \neq 0. $

\begin{enumerate}
\item If $ s' = s $ then the proof is by induction on the structure of $ \nsubst {s}{x}{\bar{t}} $:

\begin{itemize}
\item If $ s = x $ Then $\nsubst {s}{x}{[t_1]} = t_1 $ and $\nsubst {s}{x}{[t'_1]} = t'_1  $. By hypothesis $ t_1 \pare t'_1 $. Then we can conclude;
\item If $ s = \lambda x. s' $ then $ \nsubst {s}{x}{\bar{t}} = \lambda x. \nsubst {s'}{x}{\bar{t}}.  $ The result derives immediately from the IH; 
\item If $ s = \rappl{p}{\vec{q}} $ then $\nsubst {s}{x}{\bar{t}} = \nsubsta{p}{q}{x}{t}  $ with $ \bar{t} = \bar{t}_{I_0} \cdot \bar{t}_{I_1} $ for all $ ({I_0}, {I_1}) $ partitions of $\{ 1, \dots, n \} $. Then we apply the IH and conclude.
\item If $ s = [q_1, \dots, q_n] $ then $ \nsubst {s}{x}{\bar{t}} = [\nsubst {q_1}{x}{\bar{t}_1}, \dots,\nsubst {q_n}{x}{\bar{t}_n} ]$ with $ \bar{t} = \bar{t}_1 \cdots \bar{t}_n $. Then we can apply the IH and conclude.

\end{itemize}

\item If $ s' = \lambda x. p' $ the result derives immediately from the IH.

\item Let $ s' = \rappl{p'}{\vec{q'}} $ with $ s =  \rappl{p}{\vec{q}}$ with $ p \pare p' $ and $ \vec{q} \pare \vec{q'} $. We have that \[ \nsubst {s}{x}{\bar{t}} = \nsubsta{p}{q}{x}{t}\] and that \[ \nsubst {s'}{x}{\bar{t'}} = \nsubsta{p'}{q'}{x}{t'} . \]

 Then, by IH $ \nsubst {p}{x}{\bar{t}_{I_0}} \pare \nsubst {p'}{x}{\bar{t'}_{I_0}} $ and $\nsubst {\vec{q}}{x}{\vec{t}_{I_1}} \pare  \nsubst {\vec{q'}}{x}{\vec{t'}_{I_1}}$ for some $  ({I_0}, {I_1}) $ partitions of $\{ 1, \dots, n \} $. Then we can apply the IH and conclude.

\item Let $ s' = \nsubst {p'}{y}{\bar{q'}} $ with $ s = \rappl{\lambda y. p}{\bar{q}}. $ Then \[ \nsubst {s}{x}{\bar{t}} = \nsubsta{\lambda y. p}{q}{x}{t}.\] 

By IH \[ \rappl{\nsubst {\lambda y. p}{x}{\bar{t}_{I_0}}}{\nsubst {\bar{q}}{x}{\bar{t}_{I_1}}} \pare \rappl{\nsubst {\lambda y. p'}{x}{\bar{t'}_{I_0}}}{\nsubst {\bar{q'}}{x}{\bar{t'}_{I_1}}} \pare \nsubst{(\nsubst { p'}{x}{\bar{t'}_{I_0}})} {y}{\nsubst {\bar{q'}}{x}{\bar{t'}_{I_1}}}\] and by an inspection of the definition $\nsubst{(\nsubst { p'}{x}{\bar{t'}_{I_0}})} {y}{\nsubst {\bar{q'}}{x}{\bar{t'}_{I_1}}} \subseteq \nsubst{(\nsubst{p'}{y}{\bar{q'}})} {x} {\bar{t'}}. $

\item Let $ s' = \rappl{\lambda y. \rappl{p'}{\bar{u'}}}{\bar{q'}} $ with $ s = \rappl{\rappl{\lambda y . p}{\bar{q}}}{\bar{u}} .  $ 

Then \[ \nsubst {s}{x}{\bar{t}} = \nsubsta {\rappl{\lambda x. p}{\bar{q}}}{u}{x}{t} \] and \[ \nsubst {s'}{x}{\bar{t'}} =  \nsubsta {\lambda y. \rappl{p'}{\bar{u'}}}{q'}{x}{t'} \] 

Then \[ \nsubst{\rappl{\lambda y. p}{\bar{q}}}{x}{t_{I_0}} = \nsubstaa{\lambda y. p}{q}{x}{t_{I_0}} \]

And 

 \[ \nsubst{\lambda y. \rappl{ p}{\bar{u}}}{x}{t_{I_0}} = \lambda y. \nsubstaa{p}{u}{x}{t_{I_0}} \]
 
 Then by IH there exists $  ({J_0}, {J_1}), ({J'_0}, {J'_1}) $ partitions of $\{ 1, \dots, n \} $ such that $\nsubst {\bar{q}}{x}{{t_{J_0}}_{J'_1}} \pare \nsubst {\bar{q'}}{x}{{t'_{J_0}}_{J'_1}} $ and $ \nsubst {p}{x}{{t_{J_0}}_{J'_0}} \pare \nsubst {p'}{x}{{t'_{J_0}}_{J'_0}}$ and $ \nsubst{\bar{u}}{x}{t_{J_1}} \pare  \nsubst{\bar{u'}}{x}{t'_{J_1}}$.
 
 Hence \[\rappl{\nsubst{\rappl{\lambda y. p}{\bar{q}}}{x}{\bar{t}_{J_0}}}{\bar{u}_{J_1}} \pare \rappl{\rappl{ \lambda y. \nsubst {p'}{x}{{t'_{J_0}}_{J'_0}}}{\nsubst{\bar{u'}}{x}{t'_{J_1}}}}{\nsubst {\bar{q'}}{x}{{t'_{J_0}}_{J'_1}}}  \] 
 
 and 
 
 \[\rappl{\rappl{ \lambda y. \nsubst {p'}{x}{{t'_{J_0}}_{J'_0}}}{\nsubst{\bar{u'}}{x}{t'_{J_1}}}}{\nsubst {\bar{q'}}{x}{{t'_{J_0}}_{J'_1}}} \subseteq \] \[ \nsubsta {\lambda y. \rappl{p'}{\bar{u'}}}{q'}{x}{t'}  \]
 
 \item the multiset case is similar to the linear application case.
\end{enumerate}

\end{proof}

\begin{lemma}
$ \pare$ is confluent.
\end{lemma}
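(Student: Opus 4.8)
The plan is to establish the diamond property for $\pare$: whenever $e \pare f_1$ and $e \pare f_2$ there is $g$ with $f_1 \pare g$ and $f_2 \pare g$. Since $\pare$ is reflexive, the diamond fills in by the usual induction on the lengths of the two parallel sequences to give confluence of $\pare$; and because $\to_{\partial}^{\neg e}\cup\to_{\partial\sigma}\subseteq\pare\subseteq\twoheadrightarrow^{\neg e}_{\partial\epsilon}$, the two relations share their reflexive–transitive closure, so confluence of $\pare$ is exactly the confluence of $\tope^{\neg e}$ needed in Lemma \ref{snce}. Rather than prove the diamond directly, I would use Takahashi's sharpened form, the \emph{triangle property}.

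First I would define, for each $e\in\resExprs$, its \emph{complete development} $e^{*}$ by induction on the structure of $e$, firing simultaneously every $\partial$-redex and every $\sigma_1$-redex occurring in $e$ while leaving erasing redexes untouched (consistently with the non-erasing discipline of $\pare$). Concretely $x^{*}=x$, $(\lambda x.s)^{*}=\lambda x.s^{*}$, multisets develop componentwise, and applications develop by case analysis on the head: $(\rappl{\lambda x.s}{\bar t})^{*}=\nsubst{s^{*}}{x}{\bar t^{*}}$ when $x\in FV(s)$; $(\rappl{\rappl{\lambda x.s}{\bar u}}{\bar t})^{*}$ is obtained by first commuting the outer $\sigma_1$-redex and then developing, when $x\notin FV(\bar t)$; and $(\rappl{s}{\bar t})^{*}=\rappl{s^{*}}{\bar t^{*}}$ otherwise. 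Finally $e^{*}$ is extended to finite sums by linearity.

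Then I would prove the triangle property: for every $e$ and every $f$ with $e \pare f$ one has $f \pare e^{*}$. The argument goes by induction on the derivation of $e \pare f$. Reflexivity and the congruence cases follow directly from the induction hypothesis. The $\partial$-case is where Lemma \ref{sconfl} does the work: from $s \pare s'$, $\bar t \pare \bar t'$ and $x\in FV(s)$ it yields $\nsubst{s'}{x}{\bar t'} \pare \nsubst{s^{*}}{x}{\bar t^{*}}=(\rappl{\lambda x.s}{\bar t})^{*}$. The triangle property instantly yields the diamond, by taking $g=e^{*}$ for both reducts, which completes the proof.

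The hard part will be the $\sigma_1$-case and, above all, its overlap with the $\partial$-case: exactly the configuration $\rappl{\rappl{\lambda x.s}{\bar t}}{\bar q}$, which carries both an inner $\partial$-redex and an outer $\sigma_1$-redex. Here the complete development must be shown to be the same whether one substitutes first and commutes afterwards or commutes first and then substitutes; because $n$-linear substitution sums over all partitions of its multiset argument, the two orders agree only after a reindexing of partitions. Matching the reduct $\rappl{\lambda x.\rappl{s'}{\bar q'}}{\bar t'}$ against $(\rappl{\rappl{\lambda x.s}{\bar t}}{\bar q})^{*}$ therefore requires pushing through the same partition bookkeeping already visible in case 6 of Lemma \ref{sconfl}; this combinatorial identity is the one genuinely delicate point of the argument.
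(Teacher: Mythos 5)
Your proposal is correct, but it takes a genuinely different route from the paper's. The paper proves the diamond property for $\pare$ directly: by induction on the derivation of $s \pare s_1$, with a case analysis on the shape of the second reduct $s_2$, closing each square with the substitution Lemma \ref{sconfl}; the critical pair between a $\sigma_1$-step and the $\partial$-redex nested inside it appears as one of these cases, and all remaining pairs of rules are dismissed as a ``lengthy but completely standard'' induction. You instead use Takahashi's method: a complete development $e^*$ together with the triangle property ($e \pare f$ implies $f \pare e^*$), which gives the diamond at once by taking $g = e^*$. Both arguments lean on the same key lemma (\ref{sconfl}), and both must confront the same combinatorial point, namely that when $x \notin FV(\bar q)$ the partition sum in $\nsubst{\rappl{s}{\bar q}}{x}{\bar t}$ degenerates to $\rappl{\nsubst{s}{x}{\bar t}}{\bar q}$, so that firing the $\sigma_1$-redex before or after the inner $\partial$-redex yields the same sum; you correctly isolate this as the delicate point. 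What your route buys is a linear rather than quadratic case analysis: each $\pare$-rule is checked once against the canonical reduct $e^*$ instead of against every other rule, which is exactly the portion of the argument the paper leaves implicit. What it costs is the well-definedness of $(\cdot)^*$ itself: your clause ``commute the outer $\sigma_1$-redex and then develop'' recurses through a term of the same size, so you should either unfold it explicitly --- for $\rappl{\rappl{\lambda x. s}{\bar t}}{\bar q}$ the recursive calls then land on the strictly smaller $\rappl{s}{\bar q}$ and $\bar t$, with the new $\partial$-redex fired only when it is non-erasing --- or argue termination by a secondary measure; you should also state how $(\cdot)^*$ treats sums and the value $0$, since $n$-linear substitution can annihilate. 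These are finishable technical points, not gaps in the idea.
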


\begin{proof}
By induction on $ s \pare s_1 $ we prove that for all $ s_2 \in D$ such that $ s \pare s_2  $ we sow that there exists $ t $ such that $ s_1 \pare t $ and $ s_2 \pare t. $

\begin{enumerate}
\item if $ s_2 = s .$ Then take $  t = s_2 $.
\item if $ s_1 =   \rappl{\lambda x. \rappl{p'}{\bar{u'}}}{\bar{q'}} $ with $ s = \rappl{\rappl{\lambda x . p}{\bar{q}}}{\bar{u}} .  $ We have 2 possible cases:

\begin{itemize}
\item if $ s_2 =  \rappl{\rappl{\lambda x . p''}{\bar{q''}}}{\bar{u''}} $ we perform a step of sigma reduction and then we apply the IH and we conclude;
\item if $ s_2 =\rappl { \nsubst {p} {x} {\bar{q}}} {\bar{u}}$ then we apply the IH and the former lemma to conclude.
\end{itemize}

\item if $ s_1 = \nsubst{p'}{x}{\bar{q'}} $ with $ s = \rappl{\lambda x. p}{\bar{q}} $ we have 2 possible cases: 

\begin{itemize}
\item if $ s_2 = \nsubst{p''}{x}{\bar{q''}} $ with $ p \pare p'' $ and $ \bar{q} \pare \bar{q''} $. Then we apply Lemma \ref{sconfl} and conclude;
\item if $ s_2 =  \rappl{\lambda x. p''}{\bar{q''}} $ then by IH and Lemma \ref{sconfl} we can conclude; 
\end{itemize}

The other cases derives from a lengthy by completely standard induction, \textit{via} Lemma \ref{sconfl}. 

\end{enumerate}
\end{proof}

\begin{lemma}
$ \to^{* \neg e}_{\partial \epsilon} $ is the transitive closure of $ \pare .$
\end{lemma}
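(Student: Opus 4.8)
The plan is to prove the two inclusions $\tope^{\neg e} \subseteq {\pare}$ and ${\pare} \subseteq \to^{* \neg e}_{\partial \epsilon}$ separately, and then close them up, following the standard Tait--Martin-L\"of ``sandwich'' technique of \cite{bar:lam}. Note first that $\pare$ is reflexive (by its first clause $s \pare s$), so its transitive closure coincides with its reflexive-transitive closure; consequently the two sandwiching inclusions will force $\to^{* \neg e}_{\partial \epsilon} = {\pare}^{+}$.

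For the inclusion $\tope^{\neg e} \subseteq {\pare}$, I would argue by induction on the derivation of a single step. The two base redexes, $\rappl{\lambda x. s}{\bar t} \to_{\partial}^{\neg e} \nsubst{s}{x}{\bar t}$ (with $x \in FV(s)$) and $\rappl{\rappl{\lambda x. s}{\bar t}}{\bar q} \to_{\partial \sigma 1} \rappl{\lambda x. \rappl{s}{\bar q}}{\bar t}$ (with $x \notin FV(\bar q)$), are exactly instances of the $\partial$-clause and the $\sigma$-clause of $\pare$, obtained by feeding the premises with reflexivity $s \pare s$, $\bar t \pare \bar t$, $\bar q \pare \bar q$. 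The contextual cases are immediate since $\pare$ carries its own congruence clauses for $\lambda$, for application and for monomials, again instantiating the non-reduced immediate subterms by reflexivity. Hence every single step is a single parallel step, and taking closures gives $\to^{* \neg e}_{\partial \epsilon} \subseteq {\pare}^{*} = {\pare}^{+}$.

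For the converse inclusion ${\pare} \subseteq \to^{* \neg e}_{\partial \epsilon}$, I would induct on the derivation of $s \pare s'$. The reflexive clause yields the empty reduction. The congruence clauses follow from the induction hypothesis together with the contextual closure of $\tope^{\neg e}$, reducing each immediate subterm in turn (for instance $\rappl{s}{\bar t} \to^{*} \rappl{s'}{\bar t} \to^{*} \rappl{s'}{\bar t'}$). The $\partial$-clause $\rappl{\lambda x. s}{\bar t} \pare \nsubst{s'}{x}{\bar t'}$ is simulated by first reducing $s \to^{*} s'$ and $\bar t \to^{*} \bar t'$ inside the redex and then firing a single non-erasing $\partial$-step; the $\sigma$-clause is simulated the same way, reducing the three components $s,\bar t, \bar q$ and then performing one $\to_{\partial \sigma 1}$ step. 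This gives ${\pare} \subseteq \to^{* \neg e}_{\partial \epsilon}$, and taking transitive closures (using that $\to^{* \neg e}_{\partial \epsilon}$ is already transitive) yields ${\pare}^{+} \subseteq \to^{* \neg e}_{\partial \epsilon}$, completing the equality.

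The one delicate point, and the step I expect to be the real obstacle, is checking that the \emph{side conditions} of the non-erasing and $\sigma_1$ rules survive the decomposition of a parallel step into single steps: to fire $\rappl{\lambda x. s'}{\bar t'} \to_{\partial}^{\neg e} \nsubst{s'}{x}{\bar t'}$ one needs $x \in FV(s')$, and for the $\sigma_1$ step one needs $x \notin FV(\bar q')$. This rests on a preliminary observation that a step of $\tope^{\neg e}$ neither erases a free occurrence of a variable nor creates a new one, i.e.\ $FV$ is preserved (on nonzero terms), which is verified by a routine induction on the rules: the $\partial$-contraction only fires when the bound variable actually occurs, and the $n$-linear substitution merely relocates the free variables of the argument, while $\to_{\partial \sigma 1}$ is a pure commutation. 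Granting this, $x \in FV(s)$ propagates to $x \in FV(s')$ and $x \notin FV(\bar q)$ propagates to $x \notin FV(\bar q')$ along the intermediate reductions, so every single step produced in the simulation is legitimately a non-erasing, respectively $\sigma_1$, step, and both inductions go through mechanically.
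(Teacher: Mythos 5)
Your proof is correct and is exactly the argument the paper intends: the paper's own ``proof'' is merely the phrase ``easy inspection of the definitions,'' and the two sandwich inclusions you establish, ${\tope^{\neg e}} \subseteq {\pare} \subseteq {\to^{* \neg e}_{\partial \epsilon}}$, are precisely the ones the paper then invokes verbatim in the subsequent confluence lemma. Your extra care with the side conditions --- checking that $FV$ is preserved along non-erasing and $\sigma_1$ steps (on nonzero reducts), so that $x \in FV(s')$ and $x \notin FV(\bar{q}')$ still hold when the parallel step is decomposed into single steps --- addresses a genuine subtlety that the paper's one-line proof glosses over, and you handle it correctly.
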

\begin{proof}
Easy inspection of the definitions.
\end{proof}

\begin{lemma}
$ \tope$ is confluent.
\end{lemma}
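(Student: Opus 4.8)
The plan is to reduce confluence of the full resource reduction $\tope={\to_{\partial}}\cup{\to_{\partial\sigma}}$ to the confluence of its non-erasing fragment, which the preceding lemmas have already settled. Writing $\to_{\partial}^{e}$ for the erasing $\beta$-rule, we have the decomposition $\tope={\tope^{\neg e}}\cup{\to_{\partial}^{e}}$, and from the confluence of $\pare$ together with the fact that $\twoheadrightarrow_{\partial\epsilon}^{\neg e}$ is the transitive closure of $\pare$ we already know that $\tope^{\neg e}$ is confluent (this is the content of Lemma \ref{snce}). So the entire task is to \emph{absorb the erasing steps}.

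First I would establish a factorization (postponement) theorem: every $\tope$-reduction $s\twoheadrightarrow_{\partial\epsilon}u$ can be rearranged as $s\twoheadrightarrow_{\partial\epsilon}^{\neg e}t\twoheadrightarrow_{\partial}^{e}u$, pushing all erasing steps to the tail. This follows by iterating Lemma \ref{switch}, which is precisely the local swap turning $s\to_{\partial}^{e}t\twoheadrightarrow_{\partial\epsilon}^{\neg e}u$ into $s\twoheadrightarrow_{\partial\epsilon}^{\neg e}t'\to_{\partial}^{e}u$: an induction on the number of erasing steps bubbles each one rightward past the block of non-erasing steps following it. Since $\to_{\partial\sigma}$ lives in $\tope^{\neg e}$ and the only erasing rule is $\to_{\partial}^{e}$, this exhausts all cases.

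With factorization in hand I would close the confluence diagram by a Hindley--Rosen argument. Given $s\twoheadrightarrow_{\partial\epsilon}b$ and $s\twoheadrightarrow_{\partial\epsilon}c$, factor both legs as $s\twoheadrightarrow_{\partial\epsilon}^{\neg e}b_{0}\twoheadrightarrow_{\partial}^{e}b$ and $s\twoheadrightarrow_{\partial\epsilon}^{\neg e}c_{0}\twoheadrightarrow_{\partial}^{e}c$; confluence of $\tope^{\neg e}$ supplies a common reduct $b_{0},c_{0}\twoheadrightarrow_{\partial\epsilon}^{\neg e}d$, and it remains to reconcile the erasing tails with the non-erasing reductions reaching $d$. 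For this I would use that $\to_{\partial}^{e}$ is itself confluent (it is strongly normalizing, as its redexes only shrink a term or collapse it to $0$, and it is locally confluent, so Newman applies) together with a divergence-shaped commutation of $\to_{\partial}^{e}$ over $\twoheadrightarrow_{\partial\epsilon}^{\neg e}$. Alternatively, and more cheaply, since $\tope$ is strongly normalizing by Lemma \ref{snc} one may invoke Newman's Lemma directly and only verify local confluence: the non-erasing overlaps are already dispatched by the diamond property of $\pare$, leaving only the finitely many critical pairs created by an erasing redex.

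The step I expect to be the main obstacle is exactly the interaction between erasing steps and the $0$ term, and the divergence-shaped commutation above rather than the sequential postponement of Lemma \ref{switch}. Because an erasing redex $\rappl{\lambda x. s}{\bar t}$ with $x\notin FV(s)$ and $\bar t\neq[]$ sends the whole expression to $0$, one must check that this collapse is stable: if one leg of a peak reaches a genuine normal form while the other erases to $0$, the two must still be reconciled. Verifying that closing these diagrams never reproduces the pathology of Section \ref{fail} --- a term admitting both a zero and a non-zero normal form --- is the delicate point, and it is precisely what the non-rigid (multiset) setting buys us; I would therefore make sure every commutation above is carried out in $\resExprs$ rather than for rigid terms.
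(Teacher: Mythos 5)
Despite its printed statement, the lemma the paper proves here is the confluence of $\tope^{\neg e}$, not of the full $\tope$: the paper's own proof is the single observation $\tope^{\neg e} \subseteq \pare \subseteq \to^{*\neg e}_{\partial \epsilon}$ followed by Lemma 3.2.2 of \cite{bar:lam}, and Section \ref{A} exists precisely to discharge the confluence claim of Lemma \ref{snce}, which concerns the non-erasing reduction; the dropped superscript $\neg e$ is a typo. Your first paragraph --- sandwich $\tope^{\neg e}$ between the confluent parallel reduction $\pare$ and its transitive closure --- is exactly the paper's argument, so on the statement the paper actually establishes, your proposal and the paper coincide.

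Everything after that first paragraph is an attempt at the literal, stronger statement (confluence of $\tope$ with erasing steps included), which the paper never proves, and as written it has a genuine gap. For the Hindley--Rosen route you need $\to_{\partial}^{e}$ to \emph{commute} with $\twoheadrightarrow_{\partial\epsilon}^{\neg e}$ in the peak sense (close divergences $t \mathrel{{}^{e}\!\!\leftarrow} s \twoheadrightarrow^{\neg e}_{\partial\epsilon} u$), whereas Lemma \ref{switch} is a postponement lemma: it swaps the sequential composition $\to_{\partial}^{e};\twoheadrightarrow^{\neg e}_{\partial\epsilon}$ into $\twoheadrightarrow^{\neg e}_{\partial\epsilon};\to_{\partial}^{e}$, which is a different diagram and does not yield the commutation by iteration --- you acknowledge this by invoking a ``divergence-shaped commutation'', but you never prove it. The Newman alternative has the same hole in a different place: strong normalization is indeed Lemma \ref{snc}, but the local confluence of peaks involving an erasing redex (e.g.\ $\rappl{\rappl{\lambda x. s}{\bar t}}{\bar q}$ with $x \notin FV(s)$ and $\bar t \neq []$, which collapses to $0$ on one side and $\sigma$-commutes or reduces internally on the other) is exactly what you defer as ``the delicate point''. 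Either route requires actually closing those diagrams; deferring them means the absorption of erasing steps is a plan, not a proof. Since the paper only needs and only proves the $\neg e$ case, the correct fix is to restate the lemma with the superscript and stop after your first paragraph; the full statement is plausible in the multiset setting but is genuinely additional work beyond anything in the paper.
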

\begin{proof}
Observe that $ \tope^{\neg e} \subseteq \pare \subseteq \to^{* \neg e}_{\partial \epsilon}. $ Then apply Lemma 3.2.2 of \cite{bar:lam}.
\end{proof}

\subsection{Postponement} \label{post}

\begin{lemma} 
Let $s,t,u \in \rigidTerms$ . If $ s \to^{e}_{r} t$ and  $t\twoheadrightarrow_{r \epsilon }^{\neg e} u$ then there exists $ t' \in \rigidTerms$ such that $ s\twoheadrightarrow_{r \epsilon}^{\neg e} t'$ and $t' \to^{e}_{r} u$. 
\end{lemma}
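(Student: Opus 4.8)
The plan is to reduce the statement to the case where the non-erasing reduction $t\twoheadrightarrow_{r\epsilon}^{\neg e}u$ consists of a single step, and then to establish a local postponement property by analysing the relative position of the two contracted redexes. For the reduction to a single step I would argue by induction on the length $k$ of $t\twoheadrightarrow_{r\epsilon}^{\neg e}u$. The case $k=0$ is immediate with $t'=s$. For the inductive step, write $t\to_{r\epsilon}^{\neg e}u_{1}\twoheadrightarrow_{r\epsilon}^{\neg e}u$; applying the single-step claim to $s\to_{r}^{e}t\to_{r\epsilon}^{\neg e}u_{1}$ yields $t_{1}$ with $s\twoheadrightarrow_{r\epsilon}^{\neg e}t_{1}$ and $t_{1}\to_{r}^{e}u_{1}$, and the induction hypothesis applied to $t_{1}\to_{r}^{e}u_{1}\twoheadrightarrow_{r\epsilon}^{\neg e}u$ produces the desired $t'$. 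The erasing step remains a \emph{single} step throughout: since rigid substitution is linear, contracting a redex never duplicates the erasing redex, so no bookkeeping of multiplicities is required.

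The heart of the argument is thus the single-step claim: if $s\to_{r}^{e}t$ and $t\to_{r\epsilon}^{\neg e}u$ then there is $t'$ with $s\twoheadrightarrow_{r\epsilon}^{\neg e}t'$ and $t'\to_{r}^{e}u$. Because the side condition $x\notin FV(p)$ forces $n_{x}(p)=0$, the erasing redex contracted in $s$ has the shape $\rappl{\lambda x. p}{()}$ and reduces to $p$, while the step from $t$ to $u$ fires either a non-erasing $\partial$-redex or a $\sigma_{1}$-redex. I would proceed by induction on the structure of $s$, separating the following situations. If the $\epsilon$-redex fired in $t$ already occurs in $s$ (i.e. it is \emph{not} created by the erasing contraction) — whether it is disjoint from the erasing redex, contains it inside an argument, or lies strictly inside $p$ — then the two contractions act on non-overlapping material: one fires the $\epsilon$-redex first in $s$ and then erases, and by linearity of substitution the erasing redex lands in a unique position, so the same $u$ is reached and the induction hypothesis closes the case.

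The only genuine obstacle is the critical case, in which contracting the erasing redex is exactly what creates the $\epsilon$-redex fired in $t$. Here the contractum $p$ is pushed into function position, so $s$ contains $\rappl{\rappl{\lambda x. p}{()}}{\vec q}$ with $p$ an abstraction (completing a $\partial$-redex) or $p$ itself a linear application (completing a $\sigma_{1}$-redex). This is resolved using $\sigma_{1}$: in $s$ one first performs $\rappl{\rappl{\lambda x. p}{()}}{\vec q}\to_{\sigma 1}\rappl{\lambda x. \rappl{p}{\vec q}}{()}$, whose side condition $x\notin FV(\vec q)$ holds because $x$ is bound by the displayed $\lambda x$ and hence, by the variable convention, does not occur in the sibling $\vec q$. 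One then performs, inside $\lambda x.$, the non-erasing (respectively $\sigma_{1}$) step on $\rappl{p}{\vec q}$ corresponding to $t\to_{r\epsilon}^{\neg e}u$, reaching $\rappl{\lambda x. u'}{()}$ for the appropriate reduct $u'$; both the $\sigma_{1}$-lift and this step are $\twoheadrightarrow_{r\epsilon}^{\neg e}$ steps. Finally the empty-argument erasing step $\rappl{\lambda x. u'}{()}\to_{r}^{e}u'=u$ closes the diagram.

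The points requiring care in this critical case are that the redex exposed in $\rappl{p}{\vec q}$ is again non-erasing — which is automatic, since lifting the erasing redex does not alter which variables occur free in the relevant abstraction — and that nested creations, where exposing one $\epsilon$-redex uncovers another $\sigma$-configuration, are handled by iterating the $\sigma_{1}$-lift. Both follow routinely once the single commutation above is in place, and the analogous argument applies verbatim to the standard resource calculus, yielding Lemma \ref{switch}.
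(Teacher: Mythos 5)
Your proposal is correct and follows essentially the same strategy as the paper's own proof: an outer induction on the length of the non-erasing reduction sequence, a structural case analysis for the single-step claim in which disjoint or nested redexes commute directly (or by the induction hypothesis, with linearity of rigid substitution guaranteeing the erasing redex is neither duplicated nor lost), and the critical case—where contracting the empty-list erasing redex is precisely what creates the fired redex—resolved by a $\sigma_1$-lift of $\langle\langle\lambda x. p\rangle()\rangle\vec{q}$ to $\langle\lambda x. \langle p\rangle\vec{q}\rangle()$, then performing the $\partial$- or $\sigma_1$-step under the abstraction, and finishing with a single erasing step. If anything, your explicit treatment of the non-critical overlapping cases (e.g.\ the erasing redex lying strictly inside the body of a redex already present in $s$) is more thorough than the paper's, which only displays the two creation diagrams.
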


\begin{proof}
By induction on length of $s \twoheadrightarrow^{\neg e}_{r \epsilon} t$.  Let $ l = 1 $.

If $ s = \rappl{\lambda x. p} {()} $ then we can immediately conclude by the following diagram:

\begin{center}
 
 \begin{tikzpicture}

  \matrix (m) [matrix of math nodes,row sep=3em,column sep=4em,minimum width=2em]
  {
     \rappl{\lambda x. p} {()}  & p \\
     \rappl{\lambda x. u} {()} & u \\};
  \path[-stealth]
    (m-1-1) edge [] node [left] {$\neg e$} (m-2-1)
            edge [] node [below] {$e$} (m-1-2)
    (m-2-1.east|-m-2-2) edge node [below] {$ e$}
            node [above] {}  (m-2-2)
    (m-1-2) edge [] node [right] {$\neg e$} (m-2-2)
           ;
\end{tikzpicture}
 \end{center}
 
 If $ s = \lambda x. p $ the result derives immediately by IH. 
 
 If $ s = \rappl{p}{\vec{q}} $ there are two possible cases: 
 
(i) $ p \to^{e}_{r\epsilon} p' $ and $ t = \rappl{p'}{\vec{q}}; $

(ii)$ \vec{q} \to^{e}_{r\epsilon} \vec{q'} $ and $ t = \rappl{p}{\vec{q'}}. $

If

 \[  \rappl{p'} {\vec{q}} \to^{\neg e}_{r\epsilon} u  \] 
 
 is an internal step, $ i.e. $ $ u = \rappl{u'}{\vec{q'}} $ (resp. $ u = \rappl{p'}{\vec{u'}} $) with $ p' \to^{\neg e}_{r\epsilon} u $ (resp. $ \vec{q}' \to^{\neg e}_{r\epsilon} \vec{u'} $ then the factorization 
is given directly by IH.
 
Otherwise we proceed by cases. If $ p' = \lambda x. v $ then $ u = \subst {v}{x}{\vec{q}} $. Then $ p $ has to be of the shape $ \rappl {\lambda y. \lambda x. v} {()} $. We can then conclude by the following diagram: 

 \begin{center}
 
 \begin{tikzpicture}

  \matrix (m) [matrix of math nodes,row sep=3em,column sep=4em,minimum width=3em]
  {
     \rappl{\rappl {\lambda y. \lambda x. v} {()}} {\vec{q}}  & \rappl {\lambda x. v} {\vec{q}} \\
     \rappl{\lambda y. \rappl {\lambda x. v} {\vec{q}}} {()} & u\\
                       \rappl{\lambda y. u} {()}   \\
     };
  \path[-stealth]
    (m-1-1) edge [] node [left] {$\sigma 1$} (m-2-1)
            edge [] node [below] {$e$} (m-1-2)
    (m-3-1.east|-m-3-1) edge node [below] {$ e$}
            node [above] {}  (m-2-2)
           (m-1-2) edge [] node [right] {$\neg e$} (m-2-2)
        (m-2-1)  edge [] node [left] {$\neg e$} (m-3-1)

    (m-1-2) edge [] node [right] {$\neg e$} (m-2-2)
           ;
\end{tikzpicture}
 \end{center}

The second base case is a redex for the $ \sigma_{1} $ rule:  
 
 \begin{center}
 
 \begin{tikzpicture}

  \matrix (m) [matrix of math nodes,row sep=3em,column sep=4em,minimum width=3em]
  {
     \rappl{\rappl {\lambda y. \rappl { \lambda x. v} {\vec{z}}} {()}} {\vec{q}}  & \rappl{\rappl {\lambda x. v} {\vec{z}}} {\vec{q}} \\
     \rappl{\lambda y. \rappl { \rappl{\lambda x. v}{\vec{z}}} {\vec{q}}} {()} & u\\
                       \rappl{\lambda y. \rappl {\lambda x. \rappl{ v}{\vec{q}}} {\vec{z}}} {()}    \\
     };
  \path[-stealth]
    (m-1-1) edge [] node [left] {$\sigma 1$} (m-2-1)
            edge [] node [below] {$e$} (m-1-2)
    (m-3-1.east|-m-3-1) edge node [below] {$ e$}
            node [above] {}  (m-2-2)
           (m-1-2) edge [] node [right] {$ \sigma 1 $} (m-2-2)
        (m-2-1)  edge [] node [left] {$\sigma 1 $} (m-3-1)

           ;
\end{tikzpicture}
 \end{center}

(ii) The proof follows a specular path to the proof of (i). 

If $ l = n + 1 $ the result follows immediately by IH. 
\end{proof}
 \bibliographystyle{alpha}
\bibliography{bib}
\end{document}